\documentclass[12pt,draftcls,onecolumn]{IEEEtran}
%\documentclass[conference]{IEEEtran}
%%%%%%%%%%%%%%%%%%%%%%%%%%%%%%%%%%%%%%%%%%%%%%%%%%%%%%%%%%%%%%%%%%%%%%%%%%%%%%%%%%%%%%%%%%%%%%%%%%%%%%%%%%%%%%%%%%%%%%%%%%%%%%%%%%%%%%%%%%%%%%%%%%%%%%%%%%%%%%%%%%%%%%%%%%%%%%%%%%%%%%%%%%%%%%%%%%%%%%%%%%%%%%%%%%%%%%%%%%%%%%%%%%%%%%%%%%%%%%%%%%%%%%%%%%%%
\usepackage{etoolbox}
\makeatletter
\patchcmd{\@makecaption}
  {\scshape}
  {}
  {}
  {}
\makeatletter
\patchcmd{\@makecaption}
  {\\}
  {.\ }
  {}
  {}
\makeatother
 \usepackage{amsmath,amssymb}
 \usepackage{subfigure}
 \usepackage{graphicx,graphics,color,psfrag}
 \usepackage{cite,balance}
 \usepackage{caption}
 \captionsetup{font={small}}
 \allowdisplaybreaks
 \usepackage{algorithm}
 \usepackage{accents}
 \usepackage{amsthm}
 \usepackage{bm}
 \usepackage{algorithmic}
 \usepackage[english]{babel}
 \usepackage{multirow}
 \usepackage{enumerate}
 \usepackage{cases}
 \usepackage{stfloats}
 \usepackage{dsfont}
 \usepackage{color,soul}
 \usepackage{amsfonts}
 \usepackage{cite,graphicx,amsmath,amssymb}
 \usepackage{subfigure}
 \usepackage{fancyhdr}
 \usepackage{hhline}
 \usepackage{graphicx,graphics}
 \usepackage{array,color}
 \usepackage{amsmath}
\usepackage{float}
\usepackage{amssymb}
\usepackage{amsmath}
\usepackage{amsthm}
\usepackage{amsfonts}
\usepackage{graphicx}
\usepackage{algorithm}
\usepackage{algorithmic}
\usepackage{epstopdf}
\usepackage{cite}
\usepackage{amsmath,bm}
\usepackage{subfigure}
\usepackage{graphicx}
\usepackage{color}
\usepackage{graphicx}
\usepackage{calc}
\usepackage{caption}
%\usepackage{proposition}
%\usepackage[paperwidth=80.5mm, paperheight=80mm, scale={1,1}]{geometry}
%\pagestyle{empty}
% to add parentheses around subfig references
%\usepackage[labelformat=simple]{subcaption}
%\renewcommand\thesubfigure{(\alph{subfigure})}

\newtheorem{theorem}{\textbf{Theorem}}

\newtheorem{corollary}[theorem]{Corollary}

\newtheorem{lemma}{\textbf{Lemma}}

\newtheorem{problem}[theorem]{Problem}

\newtheorem{remark}{\textbf{Remark}}

\newtheorem{Def}{Definition}
\newtheorem{Prob}{\textbf{Problem}}

%\defaultfontfeatures{Mapping=tex-text}
%\newtheorem{proof}{Proof}
%\marginsize{left}{right}{top}{bottom}:
%\marginsize{13mm}{13mm}{19mm}{43mm}
%\newtheorem{Coro}{Corollary}
\begin{document}
%\addtolength{\textfloatsep}{-15pt}
%\bibliographystyle{IEEEtran}
\title{Joint Pushing and Caching for Bandwidth Utilization Maximization in Wireless Networks} %
%MaximizationS
\author{Yaping Sun, Ying Cui, and Hui Liu\\Shanghai Jiao Tong University, China\thanks{The paper was submitted in part to IEEE GLOBECOM 2017.}}
\maketitle

\begin{abstract}
Joint pushing and caching is recognized as an efficient remedy to the problem of spectrum scarcity incurred by tremendous mobile data traffic. In this paper, by exploiting storage resources at end users and predictability of user demand processes, we design the optimal joint pushing and caching policy to maximize bandwidth utilization, which is of fundamental importance to mobile telecom carriers. In particular, we formulate the stochastic optimization problem as an infinite horizon average cost Markov Decision Process (MDP), for which there generally exist only numerical solutions without many insights. By structural analysis, we show how the optimal policy achieves a balance between the current transmission cost and the future average transmission cost. In addition, we show that the optimal average transmission cost decreases with the cache size, revealing a tradeoff between the cache size and the bandwidth utilization. Then, due to the fact that obtaining a numerical optimal solution suffers the curse of dimensionality and implementing it requires a centralized controller and global system information, we develop a decentralized policy with polynomial complexity w.r.t. the numbers of users and files as well as cache size, by a linear approximation of the value function and optimization relaxation techniques. Next, we propose an online decentralized algorithm to implement the proposed low-complexity decentralized policy using the technique of Q-learning, when priori knowledge of user demand processes is not available. Finally, using numerical results, we demonstrate the advantage of the proposed solutions over some existing designs. The results in this paper offer useful guidelines for designing practical cache-enabled content-centric wireless networks.
%Existing joint pushing and caching designs do not fully exploit key features of wireless data traffic such as the asynchronous feature and temporal correlation of file requests as well as traffic fluctuations and hence cannot achieve full potential for massive content delivery in wireless networks. In this paper, we adopt a traffic model which captures these features and formulate bandwidth utilization maximization via joint pushing and caching as an infinite horizon average transmission cost Markov Decision Process (MDP), for which there generally exist only numerical results. By structural analysis, we show how the optimal policy achieves a balance between the current cost and the future average transmission cost. In addition, we show that the optimal average transmission cost decreases with the cache size, revealing the tradeoff between the cache size and the bandwidth utilization. Moreover, obtaining numerical solutions suffers the curse of dimensionality, and implementing it requires a centralized controller and global system information. Hence, by a linear approximation of the value function and relaxing the original policy optimization problem, we develop a low-complexity decentralized policy. The results in this paper offer useful guidelines for practical cache-enabled content-centric wireless networks.

\end{abstract}
\begin{IEEEkeywords}
Pushing, caching, Markov Decision Process, Q-learning, bandwidth utilization.
\end{IEEEkeywords}
%\newpage

%We consider efficient policies for data communication from a server (base station or access points) to multiple users via a shared and errorless link. The users are cache-enabled and generate delay-intolerant content requests according to a Markov chain, which are served either from the local cache or from the transmission of the server. The transmission from the server generally incurs communication cost for the network. Considering the network resource is limited and expensive, especially bandwidth, to alleviate the peak traffic and achieve traffic shaping, the server may intelligently push contents proactively to the users based on the knowledge of the users' request transition probability. The goal is to achieve the traffic shaping by dynamically choosing what content to push and what contents to cache locally. We develop a modeling framework based on Markov decision process (MDP), which allows us to compute the joint optimal pushing and caching policy. To alleviate the computational and implementational complexity , we then develop decentralized algorithm through value function approximation, which is proved to achieve performance guarantee from its simulation results.

\section{Introduction}
The rapid proliferation of smart mobile devices has triggered an unprecedented growth of global mobile data traffic \cite{Cisco}, resulting in the spectrum crunch problem in wireless systems. In order to improve the bandwidth utilization and support the sustainability of wireless systems, researchers have primarily focused on increasing the access rate of wireless systems and the density of network infrastructures, e.g., base stations (BSs). However, the expansive growth of both the access rate and the density of network infrastructures entails prohibitive network costs. On the other hand, modern data traffic exhibits a high degree of asynchronous content reuse \cite{Caire}. Thus, caching is gradually recognized as a promising approach to further improve the bandwidth utilization by placing contents closer to users, e.g., at BSs or even at end users, for future requests. Recent investigations show that caching can effectively reduce the traffic load of wireless and backhaul links as well as user-perceived latency \cite{users1,users2,Tassiulas,ES,WebCaching,CUI}.

Based on whether content placement is updated, caching policies can be divided into two categories, i.e., static caching policies and dynamic caching policies. Static caching policies refer to the caching policies under which content placement remains unchanged over a relatively long time. For example, \cite{Tassiulas,ES,CUI} consider static caching policies at BSs to reduce the traffic load of backhaul links. In addition, in \cite{users1,users2}, static caching policies at end users are proposed to not only alleviate the backhaul burden but also reduce the traffic load of wireless links. However, all the static caching policies in \cite{users1,users2,Tassiulas,ES,CUI} are designed based on content popularity, e.g., the probability of each file being requested, which is assumed to be known in advance, and cannot exploit temporal correlation of a demand process to further improve performance of cache-assisted systems.
Dynamic caching policies refer to the caching policies where content placement may update from time to time by making use of instantaneous user request information.
%keeping track of user request information.
In this way, dynamic caching policies can not only operate without priori knowledge of content popularity but also capture the temporal correlation of a demand process. The least recently used (LRU) policy and the least frequently used (LFU) policy \cite{WebCaching} are two of the commonly adopted dynamic caching policies, primarily due to ease of implementation. %In particular, the LRU policy always replaces the least recently used contents first with the newest requested contents and the LFU policy always discards the least often used contents first.
However, they are both heuristic designs and may not guarantee promising performance in general.

Pure dynamic caching policies only focus on caching contents which have been requested and delivered to the local cache, and hence have limitations in smoothing traffic load fluctuations and enhancing bandwidth utilization. To address these limitations, joint pushing (i.e., proactively transmitting) and caching has been receiving more and more attention, as it can further improve bandwidth utilization. Specifically, the underutilized bandwidth at low traffic time can be exploited to proactively transmit contents for satisfying future user demands. Therefore, it is essential to design intelligent joint pushing and caching policy based on the knowledge of user demand processes.%\textcolor{red}{However, careless design of pushing policy would impose additional traffic load on the network, and hence it is essential to design intelligent pushing and caching policy based on the knowledge of user demand processes.}
%Note that pure dynamic caching policies have limitations in smoothing traffic load fluctuations, since only requested contents can be cached, and thus may not effectively enhance bandwidth utilization.
%Researchers gradually focus on the design of joint pushing (i.e., proactively transmitting) and caching, whereby the underutilized bandwidth when traffic load is relatively low can be exploited to proactively transmit future user demands so as to smooth traffic load fluctuation.

%future user demand can be proactively transmitted when traffic load is relatively light so as to reduce traffic load fluctuation.
%which can not only store the requested contents into local caches for future reuse but also may proactively transmit certain contents to caches to further smooth the traffic fluctuation.
%via not only storing the requested contents into local caches for future reuse but also proactively transmitting certain contents to caches to further smooth the traffic fluctuation. %making the best use of storage \cite{SBS,Jsac,Huang,Hao,ChenWei,Rice,JieGong}.
%Many recent works have studied joint pushing and caching to improve the performance of wireless networks.
For instance, \cite{SBS} considers joint pushing and caching to minimize the energy consumption, assuming complete knowledge of future content requests. In most cases, the assumption cannot be satisfied, and hence the proposed joint design has limited applications. To address this problem, \cite{Jsac,Hao,ChenWei,Yawei,Rice,JieGong} consider joint pushing and caching based on statistical information of content requests (e.g., content popularity), while \cite{Huang} considers online learning-aided joint design adaptive to instantaneous content requests and without priori knowledge of statistical information of content requests. Specifically, \cite{Hao} optimizes joint pushing and caching to maximize the network capacity in a push-based converged network with limited user storage. In \cite{JieGong}, the authors maximize the number of user requests served by small BSs (SBSs) via optimizing the pushing policy using Markov decision process (MDP). Note that in \cite{JieGong}, the cache size at each user is assumed to be unlimited, and thus caching design is not considered. In \cite{ChenWei} and \cite{Yawei}, the optimal joint pushing and caching policies are proposed to maximize the number of user requests served by the local caches in the scenarios of a single user and multiple users, respectively. \cite{Rice} studies the optimal joint pushing and caching policy to minimize the transmission cost. However, the joint designs in \cite{ChenWei,Yawei,Rice} do not take into account future reuse of requested files, and thus cannot be applied to certain applications which involve reusable contents,  such as music and video streaming. Moreover, in \cite{ChenWei,Yawei,Rice,Hao,JieGong}, temporal correlation of a demand process is not captured, and hence the potential of joint pushing and caching cannot be fully unleashed.
In contrast, \cite{Jsac} and \cite{Huang} exploit  temporal correlation in the joint designs. In particular, \cite{Jsac} investigates efficient transmission power control and caching to minimize both the access delay and the transmission cost using MDP. In \cite{Huang}, the authors maximize the average reward obtained by proactively serving user demands and propose an online learning-aided control algorithm. However, in \cite{Jsac} and \cite{Huang}, only a single user setup is considered without reflecting asynchronous demands for common contents from multiple users, and hence the proposed joint designs may not be directly applied to practical networks with multiple users. Moreover, the pushing policy in \cite{Huang} can predownload contents only one time slot ahead.

To further exploit the promises of joint pushing and caching in bandwidth utilization, in this paper, we investigate the optimal joint pushing and caching policy and reveal the fundamental impact of storage resource on bandwidth utilization. Specifically, we consider  a cache-enabled content-centric wireless  network consisting of a single server connected to multiple users via a shared and errorless link. Each user is equipped with a cache of limited size and generates inelastic file requests. We model the demand process of each user as a Markov chain, which captures both the asynchronous feature and temporal correlation of file requests. By the majorization theory \cite{majorization}, we choose a nondecreasing and strictly convex function of the traffic load as the per-stage cost and consider the time averaged transmission cost minimization. In particular, we formulate the joint pushing and caching optimization problem as an infinite horizon average cost MDP. Note that there generally exist only numerical solutions for MDPs, which suffer from the curse of dimensionality and cannot offer many design insights. Hence, it is a great challenge to design an efficient joint pushing and caching policy with acceptable complexity and offering design insights. In this paper, our main contributions are summarized as below.
%\textcolor{blue}{, minimizing the time averaged transmission cost with such a per-stage cost function effectively reduces the peak-to-average ratio of the bandwidth requirement, which is of fundamental importance to a mobile carrier.}

%We then adopt a nondecreasing and strictly convex function of the traffic load as the per-stage cost and try to minimize the average cost, i.e., maximize the bandwidth utilization from the majorization theory \cite{majorization}, via optimizing the joint pushing and caching policy.

%In particular, we formulate the joint pushing and caching optimization problem as an infinite horizon average cost MDP. However, there generally exist only numerical results for MDPS, which suffer from the curse of dimensionality. Our main contributions are summarized as below.

%In each time slot, the BS not only reactively transmits the requested contents by the users but also proactively transmits contents that are selected according to the pushing policy. After the transmitted contents received by all the users, each user's cache is updated according to the caching policy. In the interest of reducing the traffic load of the wireless link and achieving traffic shaping, we adopt a nondecreasing and strictly convex function of the traffic load as per-stage cost and try to minimize the average cost, i.e., maximize the bandwidth utilization according to the majorization technique \cite{majorization} .

\begin{itemize}
  \item First, we analyze structural properties of the optimal joint pushing and caching policy. In particular, by deriving an equivalent Bellman equation, we show that the optimal pushing policy balances the current transmission cost with the future average transmission cost, while the optimal caching policy achieves the lowest future average transmission cost given the optimal pushing policy. In addition, based on coupling and interchange arguments, we prove that the optimal average transmission cost decreases with the cache size, revealing the tradeoff between the cache size and the bandwidth utilization. Moreover, via relative value iteration, we analyze the partial monotonicity of the value function, based on which the sizes of both the state space and the caching action space are reduced, and thereby the complexity of computing the optimal joint design is reduced.
  \item Then, considering that obtaining the optimal policy requires computational complexity exponential with the number of users $K$ and combinatorial with the number of files $F$ as well as the cache size $C$, and implementing it requires a centralized controller and global system information, we develop a low-complexity (polynomial with $K$, $F$ and $C$) decentralized joint pushing and caching policy by using a linear approximation of the value function \cite{Cui1,Cui2} and optimization relaxation techniques.
  \item Next, noting that our proposed low-complexity decentralized policy requires statistic information of user demand processes, we propose an online decentralized algorithm (ODA) to implement the low-complexity decentralized policy using the technique of Q-learning \cite{bertsekas}, when priori knowledge of user demand processes is not available.
  %we design an online decentralized algorithm (ODA) based on Q-learning and stochastic approximation, which operates on the fly and without priori knowledge of each user's demand process. %Hence, the proposed ODA is practicable.
  \item Finally, by numerical results, we compare the performance of our proposed solutions with some existing designs at different system parameters, including the user number, file number, cache size and some key factors of user demand processes.
  %show the impacts of system parameters including the user number, file number, cache size as well as the key parameters of user request process on the average cost achieved under our proposed ODA with the aforementioned LRU policy and LFU policy as benchmarks.
\end{itemize}
The key notations used in this paper are listed in Table I.

\section{System Model}
%In this section, we present the system model and describe the users' characteristics from two perspectives, i.e., demand model and cache state. The service model and cost function are then illustrated.
\begin{table}[t]
\caption{Key Notations}\label{tab:para_system}
\begin{center}
\vspace{-2mm}
\begin{tabular}{|c!{\vrule width 0.5pt}l|}
\hline
Notation&Meaning\rule{0pt}{3mm}\\
\hhline{|=|=|}
%$s$& Area of site\\
\hline
$\mathcal{F},\mathcal{K}$, $k,f$&  set of all files,  set of all users, user index,\ file index \\
%\hline
%$k,f$ &  user \textcolor{red}{index},\ file index  \\
\hline
$F$, $K$, $C$ &  file number, user number, cache size  \\
%\hline
%$\mathcal{K}$ &  set of all users  \\
%\hline
%$K$ &  number of all users  \\

%\hline
%$C$ &  cache size  \\
\hline
$\textbf{A}= (A_k)_{k\in \mathcal{K}}$, $\textbf{S}= (S_{k,f})_{k\in \mathcal{K},f\in \mathcal{F}}$&  system demand process, system cache state  \\
%\hline
%$\textbf{S}= (S_{k,f})_{k\in \mathcal{K},f\in \mathcal{F}}$ &  system cache state  \\
\hline
$X_k = (A_k,\textbf{S}_k)$, $\textbf{X}= (\textbf{A},\textbf{S})$ &  state of user $k$, system state \\
\hline
$\textbf{Q}_k=\big(q^{(k)}_{i,j}\big)_{i\in\bar{\mathcal{F}},j \in\bar{\mathcal{F}}}$ & transition matrix of demand process of user $k$ \\
%\hline
%$\textbf{X}= (\textbf{A},\textbf{S})$ &  system state  \\
\hline
$\textbf{R}= (R_f)_{f\in \mathcal{F}}$, $\textbf{P}= (P_f)_{f\in \mathcal{F}}$, $\Delta \textbf{S}= (\Delta \textbf{S}_k)_{k\in \mathcal{K}}$ &  reactive transmission action, pushing action, caching action \\
%\hline
% &  pushing action  \\
%\hline
%$\Delta \textbf{S}= (\Delta \textbf{S}_k)_{k\in \mathcal{K}}$ &  caching action  \\
\hline
$\textbf{U}(\textbf{X})$, $\mu = (\mu_P, \mu_{\Delta S})$ &  system action space under $\textbf{X}$, joint pushing and caching policy  \\
%\hline
%$\mu = (\mu_P, \mu_{\Delta S})$ &  joint pushing and caching policy  \\
\hline
$\theta$, $V(\textbf{X})$ &  optimal average cost, value function of system state $\textbf{X}$  \\
%\hline
% &    \\
\hline

%$g$& Beamforming gain \\
%\hline \\
\end{tabular}
\vspace{-2mm}
\end{center}
\end{table}

\subsection{Network Architecture}
As in \cite{Ali}, we consider a cache-enabled content-centric wireless network with a single server connected through a shared error-free link to $K$ users,\footnote{Note that the server can be a BS and each user can be a mobile device or a SBS.} denoted as $\mathcal{K} \triangleq \{1,2,\cdots,K\}$, as shown in Fig.~\ref{fig1}.
\begin{figure}[t]
\begin{center}
 \includegraphics[width=5.5cm]{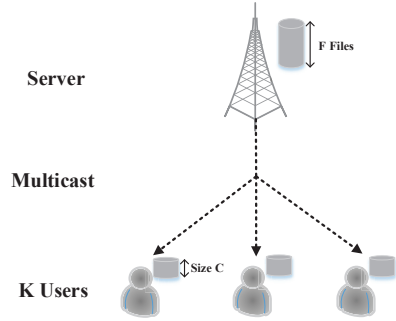}
\end{center}
 \caption{\small{System model.}
 }
\label{fig1}
\end{figure}
%We consider the same cache-enabled content-centric wireless network as the one shown in \cite{Ali}, which consists of a single server connected with $K$ users, denoted as $\mathcal{K} \triangleq \{1,2,\cdots,K\}$, via a shared link. That is, all the users have the same channel condition. In addition, we assume that the shared link is errorless.
The server is accessible to a database of $F$ files, denoted as $\mathcal{F} \triangleq \{1,2,\cdots,F\}$. All the files are of the same size. Each user is equipped with a cache of size $C$ (in files). The system operates over an infinite time horizon and time is slotted, indexed by $t = 0,1,2,\cdots$. At the beginning of each time slot, each user submits at most one file request, which is assumed to be delay intolerant and must be served before the end of the slot, either by its own cache if the requested file has been stored locally, or by the server via the shared link. At each slot, the server can not only reactively transmit a file requested by some users at the slot but also push (i.e., proactively transmit) a file which has not been requested by any user at the slot. Each transmitted file can be received by all the users concurrently before the end of the time slot.\footnote{We assume that the duration of each time slot is long enough to average the small-scale channel fading process, and hence the ergodic capacity can be achieved using channel coding.} After being received, a file can be stored into some user caches.
\subsection{System State}
\subsubsection{Demand State}
At the beginning of time slot $t$, each user $k$ generates at most one file request. %For ease of illustration, we assume in each time slot, each user requests at most one content (without loss of generality). %requests at most one content from the same content set $\mathcal{F}$ .
Let $A_k(t) \in \bar{\mathcal{F}}\triangleq\mathcal{F}\cup \{0\}$ denote the demand state of user $k$ at the beginning of time slot $t$, where $A_k(t)=0$ indicates that user $k$ requests nothing, and $A_k(t)=f\in \mathcal{F}$ indicates that user $k$ requests file $f$. Here, $\bar{\mathcal{F}}$ denotes the demand state space of each user which is of cardinality $F+1$. Let $\mathbf{A}(t)\triangleq(A_k(t))_{k\in\mathcal{K}} \in \bar{\mathcal{F}}^K$ denote the system demand state (of the $K$ users), where $\bar{\mathcal{F}}^K$ represents the system demand state space. Note that the cardinality of $\bar{\mathcal{F}}^K$ is $(F+1)^K $, which increases exponentially with $K$. %Furthermore, we assume that each file request is delay-intolerant, i.e., it must get served within one time slot.
% either from the local cache (i.e., proactively served) or via the current BS transmission (i.e., reactively served).
%I_{A}\triangleq

%Let us denote by $\mathbf{A}(t)= (A_k(t),\ k = 1,\cdots,K)$ the users' demand state at time $t$, where $A_k(t)$ represents the corresponding $k$-th user demand state. Denote with $\mathcal{A}=\{0,1,\cdots,F\}$ the content demand state space for each user, i.e., $A_k(t)\in \mathcal{A}, k$.  Specifically, $A_k(t)=0$ means that the $k$-th user requests nothing at time $t$ and $A_k(t)=f$ indicates that the $k$-th user requests content $f$ at time $t$. Accordingly, we have $\bm{\mathcal{A}}=  \prod_{k=1}^{K} \mathcal{A}$ as the overall users' demand state space. Thus, the cardinality of the users' demand state space is $I_{A}=(F+1)^K $, which increases exponentially with $K$.
%Furthermore, we assume that each user request is delay-intolerant, which means that it must get served at the end of the time slot $t$, either from the local cache (i.e., proactively served) or via the current BS transmission (i.e., reactively served).

For user $k$, we assume that $A_k(t)$ evolves according to a first-order $(F+1)$-state Markov chain, denoted as $\{A_k(t): t = 0,1,2,\cdots\}$, which captures temporal correlation of order one of user $k$'s demand process and is a widely adopted traffic model \cite{Huang}. %\cite{longbohuang,DejanMIMO}.
Let $\Pr[A_k(t+1)=j|A_k(t)=i]$ denote the transition probability of going to state $j \in \bar{\mathcal{F}}$ at time slot $t+1$ given that the demand state at time slot $t$ is $i\in \bar{\mathcal{F}}$ for user $k$'s demand process. %request the content $A_k(t)$ at time $t$ and the content $A_k(t+1)$ at time $t+1$.
%Accordingly, we have
%\begin{equation}\label{Markov transition}
%  P(A_k(t+1)|A_k(t),A_k(t-1),\cdots)=P(A_k(t+1)|A_k(t)).
%\end{equation}
Assume that $\{A_k(t)\}$ is time-homogeneous and denote $q_{i,j}^{(k)} \triangleq \Pr[A_k(t+1)=j|A_k(t)=i]$.
%\begin{equation}
 %  Q_{i,j}^{(k)} \triangleq \Pr\{A_k(t+1)=j|A_k(t)=i\}.
%\end{equation}
Furthermore, we restrict our attention to an irreducible Markov chain. %\footnote{A Markov chain is said to be irreducible if all its states communicate with each other.} % with $P^k_{i,j}$ without loss of generality.
%Assume that the Markov chain is homogeneous, stationary and ergodic for each user. Thus, the transition probability is not related to the time index. Then replace $P_r\{A_k(t+1)=j|A_k(t)=i\}$ with $P^k_{i,j}$ without loss of generality.
%We have that
%\begin{equation}\label{Markov transition}
%  P_r\{A_k(t+1)=j|A_k(t)=i\})= P_{i,j}.
%\end{equation}
Denote with $\textbf{Q}_k\triangleq\big(q^{(k)}_{i,j}\big)_{i\in\bar{\mathcal{F}},j \in\bar{\mathcal{F}}}$ the transition probability matrix of $\{A_k(t)\}$. We assume that the $K$ time-homogeneous Markov chains, i.e., $\{A_k(t)\}$, $k\in \mathcal{K}$, are independent of each other. %Then for the system demand state $\mathbf{A}(t)$, based on the homogeneous assumption for the $K$ Markov chains $\{A_k(t)\}_{k\in \mathcal{K}}$.
Thus, we have $\Pr[\mathbf{A}(t+1)=\mbox{\boldmath$j$}|\mathbf{A}(t)=\mbox{\boldmath$i$}]= \prod_{k=1}^{K} q^{(k)}_{i_k,j_k}$, where $\mbox{\boldmath$i$} \triangleq (i_k)_{k \in \mathcal{K}}\in \bar{\mathcal{F}}^K$ and $\mbox{\boldmath$j$} \triangleq (j_k)_{k \in \mathcal{K}} \in \bar{\mathcal{F}}^K$.

% based on the assumption that the $K$ Markov chains $\{A_k(t)\}_{k\in \mathcal{K}}$ are independent of each other and homogeneous,
 %we have that the corresponding system demand process denoted as $\{\mathbf{A}(t), t \geq 0\}$ is also a time-homogeneous Markov chain and $\Pr\{\mathbf{A}(t+1)=\mbox{\boldmath$j$}|\mathbf{A}(t)= \mbox{\boldmath$i$} \} = \prod_{k = 1}^K \Pr\{A_k(t+1)=j_k|A_k(t)=i_k\} = \prod_{k=1}^{K} P^k_{i_k,j_k}$ directly.

%For the system demand state $\mathbf{A}(t)$, denote with $P_r\{\mathbf{A}(t+1)=\mathbf{A}'|\mathbf{A}(t)= \mathbf{A} \} $ the transition probability of transiting into state $\mathbf{A}'$ conditioned on the previous state $\mathbf{A}$, where $\mathbf{A}=(a_k)_{k\in \mathcal{K} }$ and $\mathbf{A}'=(a_k')_{k\in \mathcal{K} }$, respectively. Based on the homogeneous assumption, we have $P_r\{\mathbf{A}(t+1)=\mathbf{A}'|\mathbf{A}(t)= \mathbf{A} \} = P_r \{\mathbf{A}'|\mathbf{A}\}$ directly.
%Considering the demand state of each user is independent of each other, we obtain $P_r \{\mathbf{A}'|\mathbf{A}\} = \prod_{k=1}^{K} P^k_{a_k,a_k'}$.
%\begin{equation}\label{Independence}
%P_r \{\mathbf{A}'|\mathbf{A}\} = \prod_{k=1}^{K} P^k_{a_k,a_k'}.
%\end{equation}

\subsubsection{Cache State}
Let $S_{k,f}(t)\in\{0,1\}$ denote the cache state of file $f$ in the storage of user $k$ at time slot $t$, where $S_{k,f}(t)=1$ means that file $f$ is cached in user $k$'s storage and $S_{k,f}(t)=0$ otherwise. Under the cache size constraint, we have
\begin{equation}\label{Storage Constraint}
  \sum_{f\in \mathcal{F}}S_{k,f}(t)\leq C,\ k \in \mathcal{K}.
\end{equation}
Let $\mathbf{S}_k(t) \triangleq (S_{k,f}(t))_{f\in\mathcal{F}} \in \mathcal{S}$ denote the cache state of user $k$ at time slot $t$, where $\mathcal{S} \triangleq \{(S_{f})_{f\in\mathcal{F}}\in\{0,1\}^F: \sum_{f\in \mathcal{F}}S_{f} \leq C\}$ represents the cache state space of each user. Here, the user index is suppressed considering that the cache state space is the same across all the users. %Note that the cache state space is the same across all the users, and the user index is suppressed. The cardinality of $\mathcal{S}$ is $\sum_{i=0}^C\binom{F}{i}$.
Let $\mathbf{S}(t)\triangleq(S_{k,f}(t))_{k\in\mathcal{K},f\in\mathcal{F}}\in \mathcal{S}^K$ denote the system cache state at time slot $t$, where $ \mathcal{S}^K$ represents the system cache state space. The cardinality of $\mathcal{S}^K$ is
$\left(\sum_{i=0}^C\binom{F}{i}\right)^K$, which increases with the number of users $K$ exponentially.

%From the above, note that

%The state of user $k$ at time slot $t$ consists of the demand and cache states of user $k$ at time slot $t$ and is denoted by $X_k(t)\triangleq(A_k(t),\mathbf{S}_k(t))\in \bar{\mathcal{F}} \times \mathcal{S}$, where $\bar{\mathcal{F}} \times \mathcal{S}$ is the state space of user $k$.

\subsubsection{System State}
At time slot $t$, denote with $X_k(t) \triangleq (A_k(t),\mathbf{S}_k(t))\in \bar{\mathcal{F}}\times \mathcal{S} $ the state of user $k$, where $\bar{\mathcal{F}}\times \mathcal{S}$ represents the state space of user $k$. The system state consists of the system demand state and the system cache state, denoted as $\textbf{X}(t)\triangleq(\mathbf{A}(t),\textbf{S}(t))\in \bar{\mathcal{F}}^K\times \mathcal{S}^K$, where $\bar{\mathcal{F}}^K\times \mathcal{S}^K$ represents the system state space. Note that $\textbf{X}(t) = (X_k(t))_{k\in \mathcal{K}}$.

%we denote with $X_k(t)\triangleq(A_k(t),\mbox{\boldmath$S$}_k(t))\in \bar{\mathcal{F}} \times \mathcal{S}$ the state of user $k$ at time slot $t$, where $\bar{\mathcal{F}} \times \mathcal{S}$ is the state space of user $k$. Denote with $\textbf{X}(t)\triangleq(\mathbf{A}(t),\textbf{S}(t))\in \bar{\mathcal{F}}^K\times \mathcal{S}^K$ the system state at time slot $t$, where $\bar{\mathcal{F}}^K\times \mathcal{S}^K$ corresponds to the system state space.

\subsection{System Action}

\subsubsection{Pushing Action}
A file transmission can be reactive or proactive at each time slot. Denote with $R_f(t)\in\{0,1\}$ the reactive transmission action for file $f$ at time slot $t$, where $R_f(t) = 1$ when there exists at least one user who requests file $f$ but cannot find it in its local cache and $R_f(t) = 0$ otherwise. Thus, we have
\begin{equation}\label{reactive}
 R_f(t) = \max_{k\in \mathcal{K}: A_k(t)=f}\  \big(1 - S_{k,f}(t)\big), \ f \in \mathcal{F},
\end{equation}
which is determined directly by $\textbf{X}(t)$.\footnote{Note that we do not need to design the reactive transmission action.} Denote with $\mathbf{R}(t)\triangleq (R_f(t))_{f\in \mathcal{F}}$ the system reactive transmission action at time slot $t$. Also, denote with $P_f(t)\in\{0,1\}$ the pushing action for file $f$ at time slot $t$, where $P_f(t)=1$ denotes that file $f$ is pushed (i.e., transmitted proactively) and $P_f(t) = 0$ otherwise.
Considering that file $f$ is transmitted at most once at time slot $t$, we have
\begin{equation}\label{once}
P_f(t) + R_f(t) \leq 1, \ f \in \mathcal{F},
\end{equation}
%By (\ref{reactive}) and (\ref{once}), we obtain the pushing action constraint as:
%i.e., $R_f(t)=1$, there is no necessity to push it, i.e., $P_f(t) + R_f(t) \leq 1$. Accordingly, we have the following pushing action constraint:
%\begin{equation}\label{proactive}
%P_f(t) \leq  \min_{k \in \mathcal{K}: A_k(t) = f}  S_{k,f}(t), \ f \in \mathcal{F}.
%\end{equation}
where $R_f(t)$ is given by (\ref{reactive}). Furthermore, if file $f$ has already been cached in each user's storage, there is no need to push it. Hence, we have
\begin{equation}\label{twice}
P_f(t) \leq 1-\min_{k\in \mathcal{K}} S_{k,f}(t), \ f \in \mathcal{F}.
\end{equation}
Denote with $\mathbf{P}(t) \triangleq (P_f(t))_{f\in \mathcal{F}} \in \mathbf{U}_P(\textbf{X}(t))$ the system pushing action at time slot $t$, where $ \mathbf{U}_P(\textbf{X})\triangleq\{(P_f)_{f\in\mathcal{F}}\in \{0,1\}^F: (\ref{once}),(\ref{twice})\}$ represents the system pushing action space under $\textbf{X}$.

System pushing action $\mathbf{P}$ together with reactive transmission action $\mathbf{R}$ incurs a certain transmission cost. We assume that the transmission cost is an increasing and continuously convex function of the corresponding traffic load, i.e., $\sum_{f\in \mathcal{F}} \big(R_f+P_f\big)$, denoted by $\phi(\cdot)$. In accordance with practice, we further assume that $\phi(0) = 0$. For example, we can choose $\phi(x)= a^x -1$ with $a > 1$ or $\phi(x)=x^d$ with $d\geq 2$.\footnote{Note that by choosing $\phi(x)=2^x-1$, $\phi(\sum_{f\in \mathcal{F}} R_f(t)+P_f(t))$ can represent the energy consumption at time slot $t$.} Here, we note that the per-stage transmission cost is bounded within set $\{0,\phi(1),\cdots,\phi(\min\{F,KC\})\}$. By the technique of majorization \cite{majorization}, a small time-averaged transmission cost with such a per-stage cost function corresponds to a small peak-to-average ratio of the bandwidth requirement, i.e., a high bandwidth utilization, which is of fundamental importance to a mobile telecom carrier, as illustrated in Fig.~$2$.
\begin{figure}[t]
\begin{center}
 \subfigure[\small{Sequence \{x(t)\}. $\frac{1}{5}\sum_{t=1}^5x(t)\! =\! 1$, and $\frac{1}{5}\sum_{t=1}^5x^2(t)\! = \!2.2$.}]
 {\resizebox{5.5cm}{!}{\includegraphics{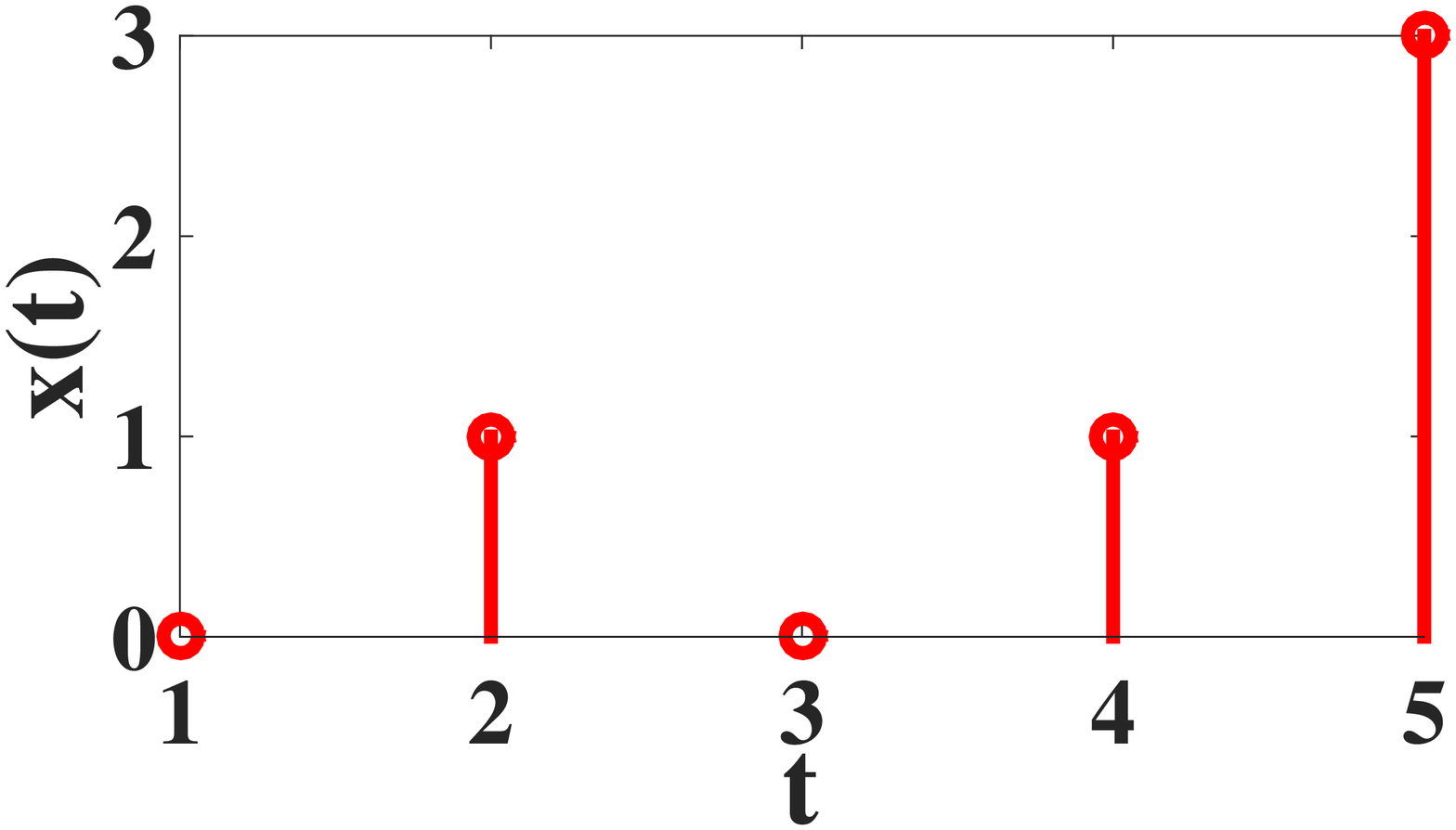}}}\quad\quad\quad\quad
% {\includegraphics[height=1.5cm,width=3cm]{x.eps}}\quad\quad
 \subfigure[\small{Sequence \{y(t)\}. $\frac{1}{5}\sum_{t=1}^5y(t)\! = \!1$, and $\frac{1}{5}\sum_{t=1}^5y^2(t)\! = \!1$.}]
 {\resizebox{5.5cm}{!}{\includegraphics{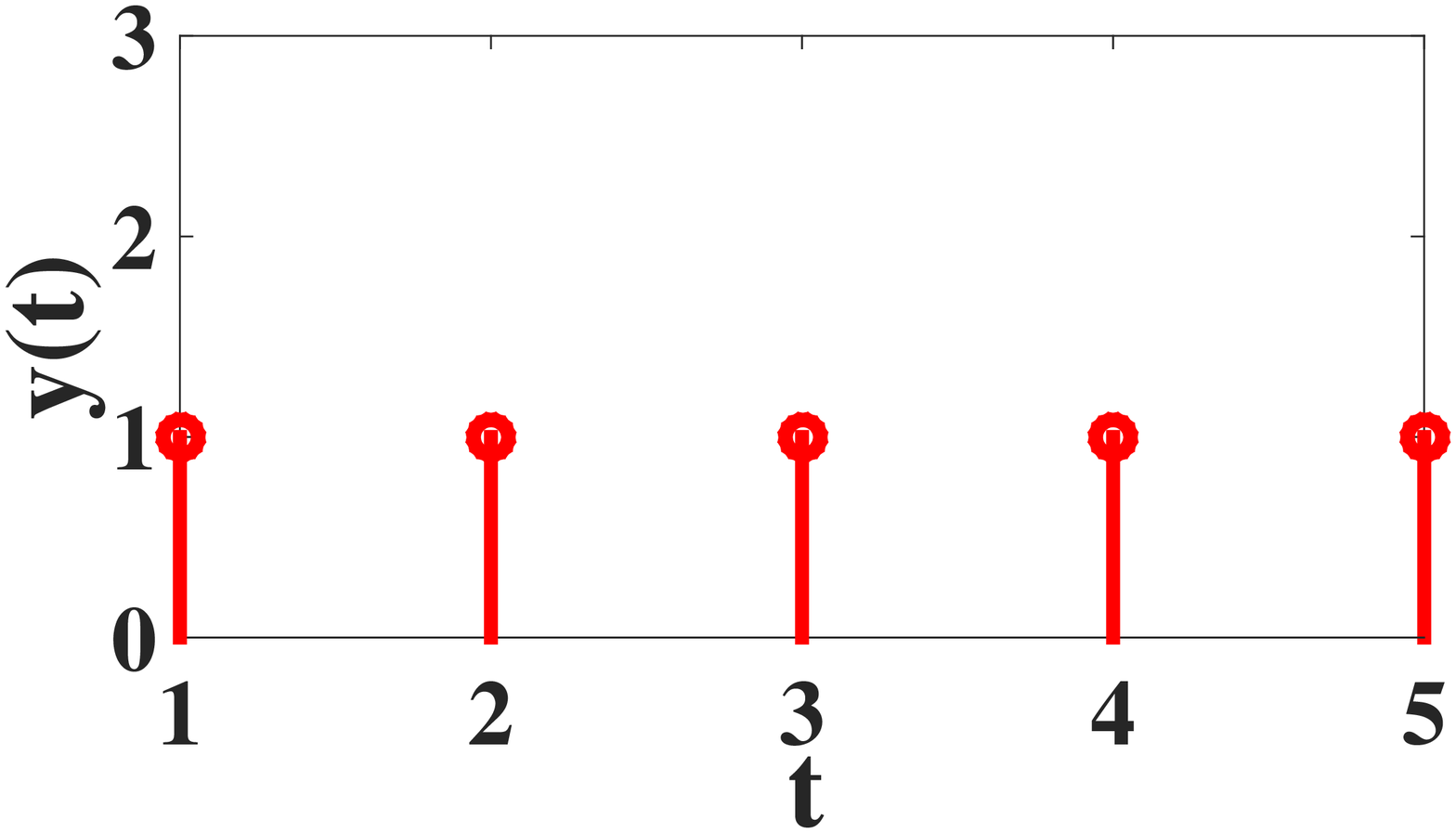}}}
 \end{center}
   \caption{\small{An illustration of the relationship between the average cost and bandwidth utilization. Note that $\frac{1}{5}\sum_{t=1}^5x(t)=\frac{1}{5}\sum_{t=1}^5y(t)$, while $\frac{1}{5}\sum_{t=1}^5x^2(t) >\frac{1}{5}\sum_{t=1}^5y^2(t)$}.} %$\{y(t)\}$ is smoother than $\{x(t)\}$.}
\label{Effect2}
\end{figure}
\subsubsection{Caching Action}
After the transmitted files being received by all the users, the system cache state can be updated. Let $\Delta S_{k,f}(t)\in \{-1,0,1\}$ denote the caching action for file $f$ at user $k$ at the end of time slot $t$, where $\Delta S_{k,f}(t) = 1$ means that file $f$ is stored into the cache of user $k$, $\Delta S_{k,f}(t) = 0$ implies that the cache state of file $f$ at user $k$ does not change, and $\Delta S_{k,f}(t) = -1$ indicates that file $f$ is removed from the cache of user $k$. Accordingly, the caching action satisfies the following cache update constraint:
\begin{equation}\label{Cache Decision Constraint}
  -S_{k,f}(t) \leq \Delta S_{k,f}(t) \leq R_f(t)+P_f(t),\ f\in\mathcal{F},\  k\in\mathcal{K},
\end{equation}
where $R_f(t)$ is given by (\ref{reactive}). In (\ref{Cache Decision Constraint}), the first inequality is to guarantee that file $f$ can be removed from the cache of user $k$ only when it has been stored at user $k$, and the second inequality is to guarantee that file $f$ can be stored into the cache of user $k$ only when it has been transmitted from the server.
% the cache size constraint in
The cache state evolves according to: %the following cache update equation:
\begin{equation}\label{Cache State Evolution}
  S_{k,f}(t+1)=S_{k,f}(t)+\Delta S_{k,f}(t),  \ \ \ \ f\in\mathcal{F},\  k\in\mathcal{K}.
\end{equation}
Since $S_{k,f}(t+1)$ belongs to $\{0,1\}$ and also satisfies (\ref{Storage Constraint}), we have the following two cache update constraints:
\begin{equation}\label{Storage Evolution Constraint2}
  S_{k,f}(t)+ \Delta S_{k,f}(t) \in \{0,1\}, \ \ \ \ \ f\in\mathcal{F},\ k\in\mathcal{K},
\end{equation}
\begin{equation}\label{Storage Evolution Constraint1}
\sum_{f\in \mathcal{F}}\big(S_{k,f}(t)+ \Delta S_{k,f}(t)\big)\leq C, \ \ \ f\in\mathcal{F},\ k\in\mathcal{K}.
\end{equation}
%\mathbf{T}
From (\ref{Cache Decision Constraint}),\! (\ref{Storage Evolution Constraint2})\! and \!(\ref{Storage Evolution Constraint1}), we
%note that the caching action at user $k$ is based on the knowledge including user $k$'s state, i.e., $X_k(t)$, the current system reactive transmission action $\mathbf{R}(t)$ which is determined by the system state $\textbf{X}(t)$ and the system pushing action $\mathbf{P}(t)$. %\textbf{X}
denote with $\Delta \mathbf{S}_k(t)$~$\triangleq$~$(\Delta S_{k,f}(t))_{f\in\mathcal{F}}$ $\!\in\! U_{\Delta S,k}(X_k(t),\mathbf{R}(t)+\mathbf{P}(t))$ the caching action of user $k$ at the end of time slot $t$, where $U_{\Delta S,k}(X_k,\!\mathbf{R}\!+\!\mathbf{P})\! \triangleq\! \{(\Delta S_{k,f})_{f\in\mathcal{F}}\!\in\! \{-1,0,1\}^F:\! (\ref{Cache Decision Constraint}),(\ref{Storage Evolution Constraint2}),(\ref{Storage Evolution Constraint1})\!\}$ represents \!the\! caching action space of user $k$ under its state $X_k$, system reactive transmission action $\textbf{R}$ and pushing action $\mathbf{P}$.
Let $\Delta \mathbf{S}(t)\!\triangleq\!(\Delta S_{k,f}(t))_{k\in \mathcal{K}, f\in \mathcal{F} }\!\in\!\mathbf{U}_{\Delta S}(\textbf{X}(t),\mathbf{P}(t))$ denote the system caching action at the end of time slot $t$, where $\mathbf{U}_{\Delta S}(\textbf{X},\mathbf{P})\!  \triangleq\! \prod_{k\in\mathcal K}\! U_{\Delta S,k}(X_k,\mathbf{R}\!+\!\mathbf{P}) $ represents the system caching action space under system state $\textbf{X}$ and pushing action~$\mathbf{P}$.%\footnote{system reactive transmission action $\mathbf{R}$ is omitted in the system caching action space considering that it is determined directly from system state $\textbf{X}$ according to (\ref{reactive}).}

%at the end of time slot $t$.
\subsubsection{System Action}
At time slot $t$, the system action consists of both the pushing action and caching action, denoted as $(\mathbf{P}(t),\Delta \textbf{S}(t)) \in \mathbf{U}(\textbf{X}(t))$, where $\mathbf{U}(\textbf{X}) \triangleq \{(\mathbf{P}, \Delta \textbf{S}): \Delta \textbf{S} \in \mathbf{U}_{\Delta S}(\textbf{X},\mathbf{P}),\ \mathbf{P}\in \mathbf{U}_P(\textbf{X})\}$ represents the system action space under system state $\textbf{X}$.

\section{Problem Formulation}
Given an observed system state $\textbf{X}$, the joint pushing and caching action, denoted as $ (\mathbf{P},\Delta \textbf{S})$, is determined according to a policy defined as below.

\begin{Def}[Stationary Joint Pushing and Caching Policy]
A stationary joint pushing and caching policy $\mu \triangleq (\mu_{P},\mu_{\Delta S})$ is a mapping from system state $\textbf{X}$ to system action $(\mathbf{P},\Delta \textbf{S})$, i.e., $ (\mathbf{P},\Delta \textbf{S}) = \mu(\textbf{X}) \in \mathbf{U}(\textbf{X})$. Specifically, we have $\mathbf{P} = \mu_{P}(\textbf{X})\in \mathbf{U}_P(\textbf{X})$ and $\Delta \textbf{S}=\mu_{\Delta S}(\textbf{X},\mathbf{P})\in \mathbf{U}_{\Delta S}(\textbf{X},\mathbf{P})$.
\end{Def}
%\textit{Definition 1: (Joint Stationary Transmission and Caching Policy)} A joint stationary transmission and caching policy $\mu \triangleq (\mu_{T},\mu_{\Delta S})$ is a mapping from the system state $\textbf{X}$ to the joint transmission and caching action $(\mathbf{T},\Delta \textbf{S})$, i.e., $ (\mathbf{T},\Delta \textbf{S}) = \mu(\textbf{X}) \in \mathbf{U}(\textbf{X})$, where $\mathbf{U}(\textbf{X}) \triangleq \{(\mathbf{T}, \Delta \textbf{S}): \Delta \textbf{S} \in \mathbf{U}_{\Delta S}(\textbf{X},\mathbf{T}),\ \mathbf{T}\in \mathbf{U}_{T}(\textbf{X}), \ \forall \textbf{X} \in \bar{\mathcal{F}}^K \times \mathcal{S}^K\}$ represents the joint action space for given system state $\textbf{X}$. Specifically, $\mathbf{T} = \mu_{T}(\textbf{X})\in \mathbf{U}_T(\textbf{X})$ and $\Delta \textbf{S}=\mu_{\Delta S}(\textbf{X},\mathbf{T})\in \mathbf{U}_{\Delta S}(\textbf{X},\mathbf{T})$.

%In brevity, we denote the joint stationary transmission and caching policy as $\mu: \bar{\mathcal{F}}^K \times \mathcal{S}^K \mapsto \mathbf{U}$ and $\mu(\textbf{X})\in \mathbf{U}(\textbf{X})$, for all $\textbf{X} \in \bar{\mathcal{F}}^K \times \mathcal{S}^K$.

 %From the properties of the system demand state process $\{\mathbf{A}(t)\}$ and the system cache state process $\{\textbf{S}(t)\}$,

From the properties of $\{\mathbf{A}(t)\}$ and $\{\textbf{S}(t)\}$,
 we see that the induced system state process $\{\textbf{X}(t)\}$ under policy $\mu$ is a controlled Markov chain.
%where $\mathbf{T}\triangleq\mu_{T}(\textbf{X})\in \mathbf{U}_T(\textbf{X})$ and $\Delta \textbf{S}=\mu_{\Delta S}(\textbf{X},\mathbf{T})\in \mathbf{U}_{\Delta S}(\textbf{X},\mathbf{T})$.
%which evolves as (\ref{Cache State Evolution})
%
%
%A policy $\mu$ is called \textit{feasible} if the corresponding actions satisfy the transmission and caching action constraints (\ref{Transmission Constraint})(\ref{Cache Decision Constraint})(\ref{Storage Evolution Constraint1})(\ref{Storage Evolution Constraint2}). Specifically, denote with $\mathbf{U}_{T}(\textbf{X})$ the transmission action space, i.e., $\mathbf{U}_{T}(\textbf{X})=\{\mathbf{T}=(T_{f}, f\in \mathcal{F}): T_{f}\in\{0,1\}, f\}$.
%Also, denote with $\mathbf{U}_{\Delta S}(\textbf{X},\mathbf{T}_p)$ the caching action space, namely, $\mathbf{U}_{\Delta S}(\textbf{X},\mathbf{T}_p)= \{\Delta \mathbf{S} = (\Delta S_{k,f},k\in\mathcal{K},f\in\mathcal{F}): \Delta S_{k,f}\in\{-1,0,1\}, -S_{k,f} \leq \Delta S_{k,f} \leq T_{r,f}+T_{p,f}, T_{r,f} = \max_{k}\ \  \textbf{1}(A_k=f)(1-S_{k,f}),\ f, \sum_{f\in \mathcal{F}}S_{k,f}+\Delta S_{k,f}\leq C\}$. Therefore, we have the joint action space denoted as $\mathbf{U}(\textbf{X})=\mathbf{U}_{T_p}(\textbf{X})\times \mathbf{U}_{\Delta S}(\textbf{X},\mathbf{T}_p)$. In brevity, we denote the stationary transmission and cache policy as $\mu: \bm{\mathcal{A}}\times \bm{\mathcal{S}} \mapsto \mathbf{U}$, $\mu(\textbf{X})\in \mathbf{U}(\textbf{X})$, for all $\textbf{X} \in \bm{\mathcal{A}}\times \bm{\mathcal{S}}$.
The time averaged transmission cost under policy $\mu$ is given by
\begin{equation}\label{average transmission cost Objective1}
\bar{\phi}(\mu)\triangleq \limsup_{T\rightarrow \infty} \frac{1}{T}\sum_{t=0}^{T-1}\mathbb{E}\bigg[\phi\bigg(\sum_{f\in \mathcal{F}}\big(R_f(t)+P_f(t)\big)\bigg)\bigg],
\end{equation}
where $R_f(t)$ is given by (\ref{reactive}) and the expectation is taken w.r.t. the measure induced by the $K$ Markov chains. Note that $\bar{\phi}(\mu)$ can reflect the bandwidth utilization, as illustrated in Fig.~$2$.

%, i.e., $\{A_k(t), \ t = 0,1,2,\cdots\}$, $k\in \mathcal{K}$.
%w.r.t. the users' demand transition probability.

In this paper, we aim to obtain an optimal joint pushing and caching policy $\mu$ to minimize the time averaged transmission cost $\bar{\phi}(\mu)$ defined in (\ref{average transmission cost Objective1}), i.e., maximizing the bandwidth utilization. Before formally introducing the problem, we first illustrate a simple example that highlights how the joint pushing and caching policy affects the average cost, i.e., bandwidth utilization.

\textbf{Motivating Example}. Consider a scenario with $K=4$, $F=4$, $C=1$ and $\phi(x)=x^2$. The user demand model is illustrated in Fig.~\ref{UserDemand}~(a). A sample path of the user demand processes is shown in Fig.~\ref{UserDemand}~(b). Note that at time slot $2$, there is no file request, while at time slot $3$, the number of file requests achieves the maximum value, i.e., $4$. Fig.~\ref{UserDemand} (c)-(h) illustrate the system cache states and the multicast transmission actions over three time slots under the following three policies: the most popular (MP) caching policy in which the $C$ most popular files (i.e., the first $C$ files with the maximum limiting probabilities) are cached at each user \cite{ES}, the LRU caching policy and a joint pushing and caching (JPC) policy. We can calculate the average cost over the three time slots under the aforementioned three policies, i.e., $\bar{\phi}_1\triangleq\frac{1^2+0^2+3^2}{3}= \frac{10}{3}$, $\bar{\phi}_2\triangleq\frac{1^2+0^2+4^2}{3}= \frac{17}{3}$ and $\bar{\phi}_3\triangleq\frac{1^2+1^2+1^2}{3}= 1$. Note that $\bar{\phi}_3<\bar{\phi}_1<\bar{\phi}_2$.
%We compare the average cost under the following three policies: the most popular (MP) caching policy in which the $C$ most popular files (in the long run) are cached at each user \cite{ES}; the LRU caching policy; a joint pushing and caching (JPC) policy.
%i. No Pushing and Most Popular Caching (MP). That is, the server only transmits the requested files reactively and each user caches the most popular file in the long run and then keeps unchanged.
%ii. No Pushing and LRU (LRU). That is, the server only transmits the requested files reactively and each user updates its cached files according to the least recently used caching policy. Hence, there exists dynamic caching but no pushing. iii. Joint Pushing and Caching Policy (JPC). Here, there exist both dynamic caching and pushing.
%The evolutions of system cache state and multicast transmission under the aforementioned three policies are illustrated in Fig.~\ref{UserDemand} (c)-(h), respectively. Note that $\bar{\phi}_3<\bar{\phi}_1<\bar{\phi}_2$.
From Fig.~\ref{UserDemand} (h), we learn that under the joint pushing and caching policy, the bandwidth at low traffic time (e.g., time slot $2$) can be exploited to proactively transmit contents for satisfying future user demands (e.g., at time slot $3$), thereby improving the bandwidth utilization.
%This simple example demonstrates that the joint design of pushing and caching offers an opportunity to exploit the bandwidth resource during idle time to prefetch the future requests, thereby further improving the bandwidth utilization compared with caching policy, either static or dynamic.

%\begin{figure*}[t]
%\begin{center}
%  \subfigure[Cache state.]
% {\resizebox{8.5cm}{!}{\includegraphics{Case1.eps}}}\quad\quad
% \subfigure[Multicast.]
% {\resizebox{4.7cm}{!}{\includegraphics{Case1_2.eps}}}\quad\quad
% \end{center}
%   \caption{\small{System cache state and multicast under MPC. Average cost $\bar{\phi}_1\triangleq\frac{1^2+0^2+3^2}{3}= \frac{10}{3}$}.}
%\label{static}
%\end{figure*}
%
%\begin{figure*}[t]
%\begin{center}
%  \subfigure[Cache state.]
% {\resizebox{7.5cm}{!}{\includegraphics{Case2.eps}}}\quad\quad
% \subfigure[Multicast.]
% {\resizebox{5.1cm}{!}{\includegraphics{Case2_2.eps}}}\quad\quad
% \end{center}
%   \caption{\small{System cache state and multicast under LRU. Average cost $\bar{\phi}_2\triangleq\frac{1^2+0^2+4^2}{3}= \frac{17}{3}$}.}
%\label{LRU}
%\end{figure*}
%
%\begin{figure*}[t]
%\begin{center}
%  \subfigure[Cache state.]
% {\resizebox{7.9cm}{!}{\includegraphics{Case3.eps}}}\quad\quad
% \subfigure[Multicast.]
% {\resizebox{3.4cm}{!}{\includegraphics{Case3_2.eps}}}\quad\quad
% \end{center}
%   \caption{\small{System cache state and multicast under joint policy. Average cost $\bar{\phi}_3\triangleq\frac{1^2+1^2+1^2}{3}= 1$}.}
%\label{joint}
%\end{figure*}

%Specifically, we have the following optimization problem.
\begin{Prob}[Joint Pushing and Caching Optimization]
\begin{align}
& \bar{\phi}^* \triangleq \min_{\mu}~\ \bar{\phi}(\mu) \nonumber\\
& \ \ \ \ \ \ \ s.t. ~\ \ (\ref{reactive}),(\ref{once}),(\ref{twice}),(\ref{Cache Decision Constraint}),(\ref{Storage Evolution Constraint2}),(\ref{Storage Evolution Constraint1}),\nonumber
\end{align}\label{average transmission cost Minimization Problem}
where $\bar{\phi}^*$ denotes the minimum time averaged transmission cost under the optimal policy $\mu^*\triangleq(\mu^*_{P},\mu^*_{\Delta S})$, i.e., $\bar{\phi}^* = \bar{\phi}(\mu^*)$.
\end{Prob}

\begin{figure}[H]
\begin{center}
  \subfigure[Demand model for each user. $\textbf{Q}_k, k\in \mathcal{K}$ are the same.]
 {\resizebox{6cm}{!}{\includegraphics{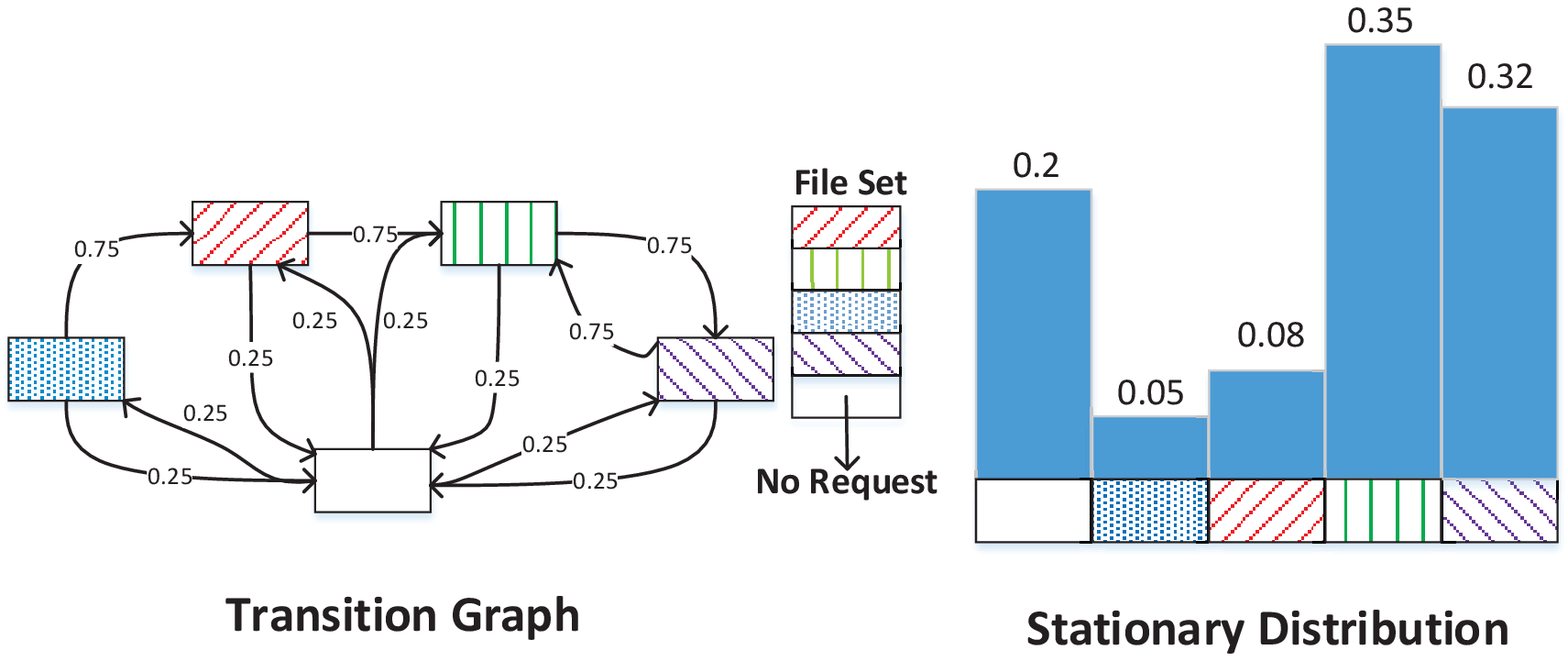}}}\quad\quad
 \subfigure[A\! sample\! path \!of\! $\{\textbf{A}(t)\}$.]
 {\resizebox{6cm}{!}{\includegraphics{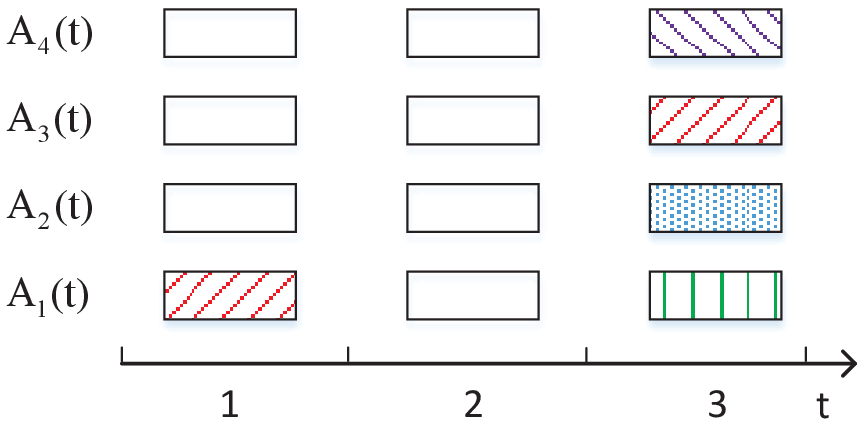}}}\quad\quad
 \subfigure[Cache state under MP.]
 {\resizebox{6cm}{!}{\includegraphics{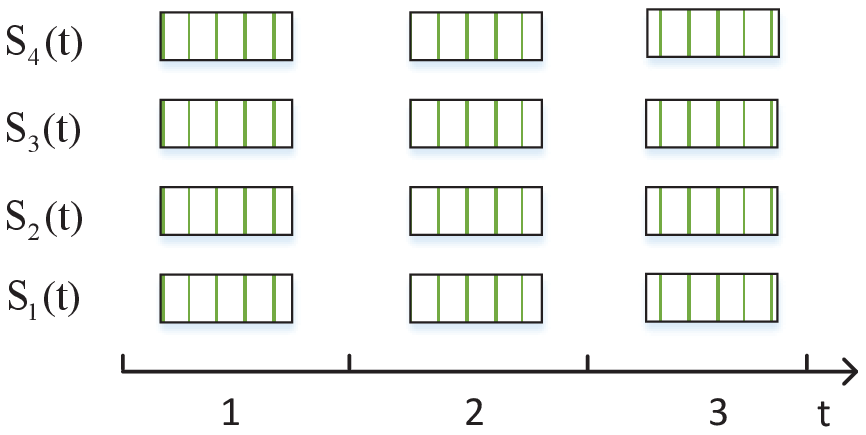}}}\quad\quad
 \subfigure[Reactive transmission under MP. Average cost $\bar{\phi}_1\triangleq\frac{1^2+0^2+3^2}{3}= \frac{10}{3}$.]
 {\resizebox{6cm}{!}{\includegraphics{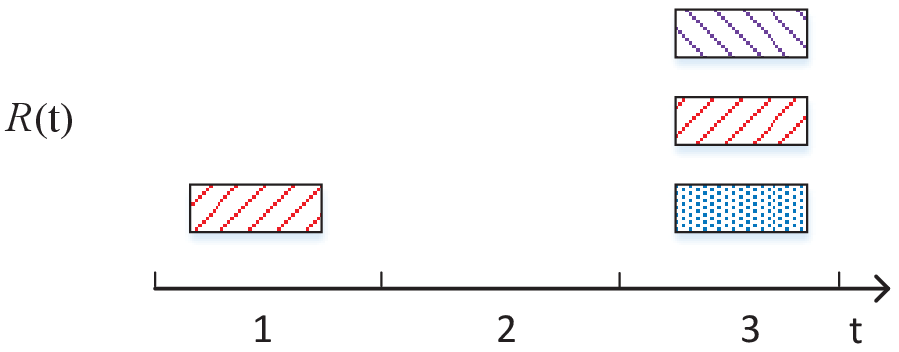}}}\quad\quad
  \subfigure[Cache state under LRU.]
 {\resizebox{6cm}{!}{\includegraphics{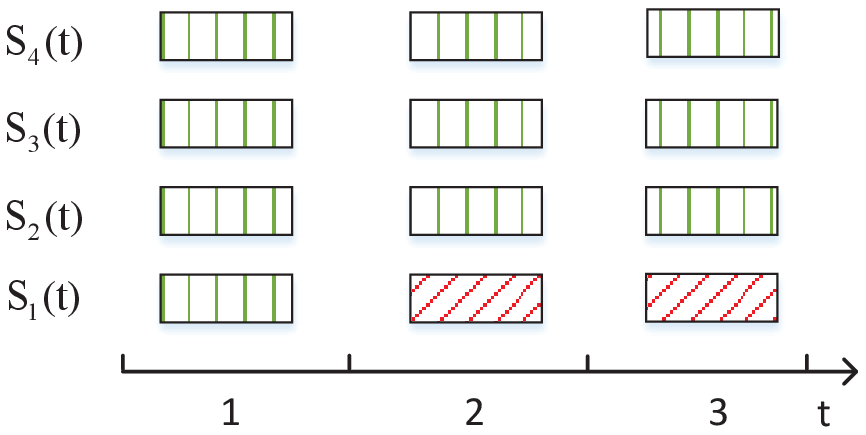}}}\quad\quad
 \subfigure[Reactive transmission under LRU. Average cost $\bar{\phi}_2\triangleq\frac{1^2+0^2+4^2}{3}= \frac{17}{3}$.]
 {\resizebox{6cm}{!}{\includegraphics{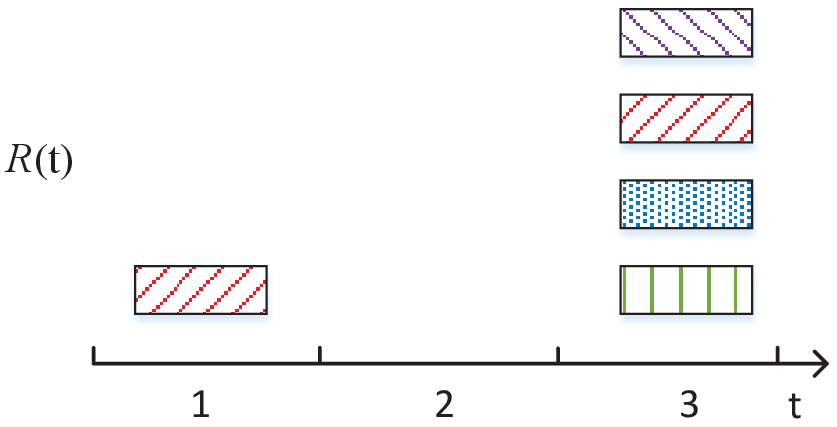}}}\quad\quad
   \subfigure[Cache state under JPC.]
 {\resizebox{6cm}{!}{\includegraphics{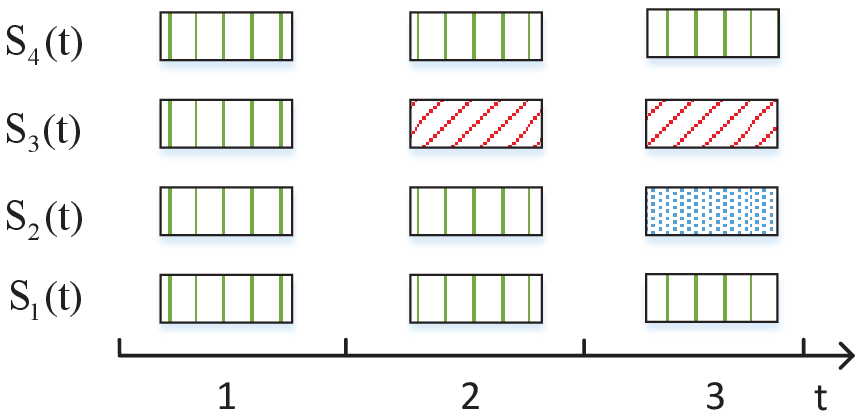}}}\quad\quad
 \subfigure[Reactive transmission and pushing under JPC. Average cost $\bar{\phi}_3\triangleq\frac{1^2+1^2+1^2}{3}= 1$.]
 {\resizebox{5.9cm}{!}{\includegraphics{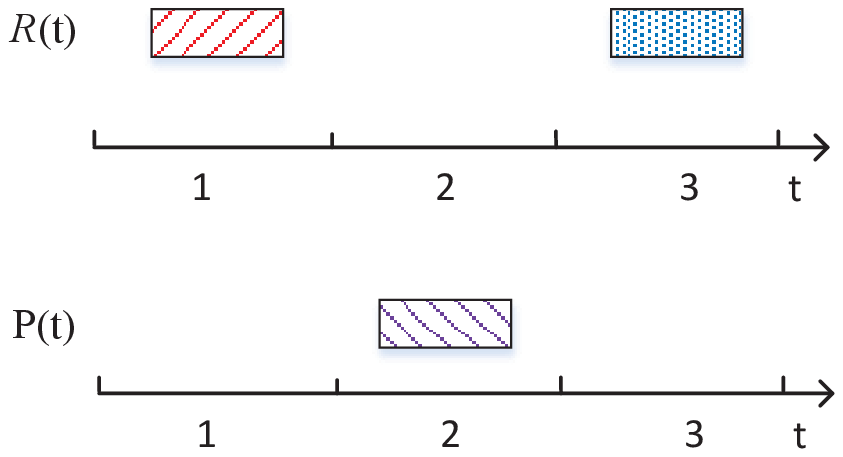}}}\quad\quad
\end{center}
 \caption{\small{Motivating Example. We consider $K=4$, $F=4$, $C=1$ and $\phi(x)=x^2$. Note that the blank square indicates that there is no file request.}
 }
\label{UserDemand}
\end{figure}
Problem $1$ is an infinite horizon average cost MDP.  According to Definition $4.2.2$ and Proposition $4.2.6$ in \cite{bertsekas}, we know that there exists an optimal policy that is unichain. Hence, in this paper, we restrict our attention to stationary unichain policies. Moreover, the MDP has finite state and action spaces as well as a bounded per-stage cost. Thus, there always exists a deterministic stationary unichain policy that is optimal and it is sufficient to focus on the deterministic stationary unichain policy space. In the following, we use $\mu$ to refer to a deterministic stationary unichain policy.

\section{Optimal Policy}
\subsection{Optimality Equation}
We can obtain the optimal joint pushing and caching policy $\mu^\ast$ through solving the following Bellman equation.% given as follows.\footnote{Note that $\Pr[\textbf{X}'|\textbf{X}]$ is used in the general result for Bellman equation, while in our case, $\prod_{k\in\mathcal{K}}q_{A_k,A_k'}^{(k)}$ is sufficient.}
%In this section, we get approach to the optimal stationary policy denoted as $\mu^\ast$ via a standard optimality equation defined in [Bersekas], i.e., the Bellman equation.
\begin{lemma}[Bellman equation]\label{Bellman equation} There exist a scalar $\theta$ and a value function\! $V(\cdot)$ satisfying
\begin{align}\label{average transmission cost Objective2}
&\theta\! + V(\textbf{X})\!=\min \limits_{(\mathbf{P}, \Delta \textbf{S})\in \mathbf{U}(\textbf{X})}  \Big\{ \phi\big(\sum_{f\in \mathcal{F}}(R_f + P_f)\big)\! +\sum_{\mathbf{A}' \in \bar{\mathcal{F}}^K} \prod_{k\in\mathcal{K}} q_{A_k,A_k'}^{(k)}V(\mathbf{A}',\textbf{S}\!+\Delta \textbf{S})\! \Big\},\nonumber\\&\hspace{120mm} \textbf{X} \!\in\! \bar{\mathcal{F}}^K \times \mathcal{S}^K,
\end{align}
where $R_f$ is given by (\ref{reactive}) and $\mathbf{A}'\! \triangleq\! (A_k')_{k\in\mathcal{K}}$. $\theta = \bar{\phi}^*$ is the optimal value of Problem $1$ for all initial system states $\textbf{X}(0)\! \in \!\bar{\mathcal{F}}^K\! \times\! \mathcal{S}^K$, and the optimal policy $\mu^*$ can be obtained from
\begin{align} \label{Optimal Policy1}
&\ \ \mu^*(\textbf{X})=\arg \min \limits_{(\mathbf{P}, \Delta \textbf{S})\in \mathbf{U}(\textbf{X})}  \Big\{ \phi\big(\sum_{f\in \mathcal{F}}(R_f + P_f)\big)+\sum_{\mathbf{A}' \in \bar{\mathcal{F}}^K} \prod_{k\in\mathcal{K}} q_{A_k,A_k'}^{(k)}V(\mathbf{A}',\textbf{S}+\Delta \textbf{S}) \Big\},\nonumber\\&\hspace{120mm} \textbf{X} \in \bar{\mathcal{F}}^K \times \mathcal{S}^K.
\end{align}
\end{lemma}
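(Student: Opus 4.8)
The plan is to establish (\ref{average transmission cost Objective2}) by the standard vanishing-discount argument for finite-state average-cost MDPs, and then to identify $\theta$ with $\bar{\phi}^*$ and verify the optimality of the argmin policy (\ref{Optimal Policy1}) through a telescoping (verification) argument. The groundwork is already in place: the state space $\bar{\mathcal{F}}^K\times\mathcal{S}^K$ and the action space $\mathbf{U}(\textbf{X})$ are finite, the per-stage cost $\phi(\sum_{f\in\mathcal{F}}(R_f+P_f))$ is bounded within $\{0,\phi(1),\dots,\phi(\min\{F,KC\})\}$, and attention has been restricted to unichain policies via \cite{bertsekas}. These are precisely the hypotheses under which the average-cost optimality equation is known to admit a solution.

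First I would introduce, for each discount factor $\alpha\in(0,1)$, the $\alpha$-discounted version of Problem~$1$. Because the model is finite and the cost is bounded, the discounted Bellman operator is a contraction, so there is a unique bounded value function $V_\alpha(\textbf{X})$ satisfying
\[
V_\alpha(\textbf{X})=\min_{(\mathbf{P},\Delta\textbf{S})\in\mathbf{U}(\textbf{X})}\Big\{\phi\big(\sum_{f\in\mathcal{F}}(R_f+P_f)\big)+\alpha\sum_{\mathbf{A}'\in\bar{\mathcal{F}}^K}\prod_{k\in\mathcal{K}}q_{A_k,A_k'}^{(k)}V_\alpha(\mathbf{A}',\textbf{S}+\Delta\textbf{S})\Big\}.
\]
Fixing a reference state $\textbf{X}_0$, I would define the relative value function $h_\alpha(\textbf{X})\triangleq V_\alpha(\textbf{X})-V_\alpha(\textbf{X}_0)$ together with the scalar $(1-\alpha)V_\alpha(\textbf{X}_0)$. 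Subtracting the equation at $\textbf{X}_0$ recasts the discounted equation into one for $h_\alpha$ in which $(1-\alpha)V_\alpha(\textbf{X}_0)$ plays the role of a candidate for $\theta$.

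The crux of the argument is to show that $\{h_\alpha\}$ is bounded uniformly in $\alpha$, i.e., the span $\max_{\textbf{X}} h_\alpha(\textbf{X})-\min_{\textbf{X}} h_\alpha(\textbf{X})$ stays bounded as $\alpha\to 1$. This is where the unichain structure is essential: under a unichain policy every state reaches the common recurrent class in finite expected time, so $|V_\alpha(\textbf{X})-V_\alpha(\textbf{Y})|$ is dominated by the (bounded) per-stage cost times a uniformly bounded expected hitting time, independent of $\alpha$. Since the state space is finite, this uniform bound places $\{h_\alpha\}$ in a compact set, and I would extract a sequence $\alpha_n\uparrow 1$ along which $h_{\alpha_n}\to V$ pointwise and $(1-\alpha_n)V_{\alpha_n}(\textbf{X}_0)\to\theta$. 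Passing to the limit in the $h_\alpha$-equation, where the minimum over the finite set $\mathbf{U}(\textbf{X})$ is continuous in the $V$-values, yields exactly (\ref{average transmission cost Objective2}).

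It remains to identify $\theta=\bar{\phi}^*$ and to confirm (\ref{Optimal Policy1}). For this I would use a verification argument: for any unichain policy $\mu$, apply the inequality obtained from (\ref{average transmission cost Objective2}) by dropping the minimization, iterate it along the Markov chain induced by $\mu$, take expectations, telescope the $V$-terms, divide by the horizon $T$, and let $T\to\infty$; since $V$ is bounded the $V$-terms vanish, leaving $\theta\le\bar{\phi}(\mu)$. Running the same steps with equality for the policy that attains the argmin in (\ref{average transmission cost Objective2}) gives $\theta=\bar{\phi}(\mu^*)$, whence $\theta=\min_\mu\bar{\phi}(\mu)=\bar{\phi}^*$ and $\mu^*$ is optimal; the unichain property makes this value, and hence $\theta$, independent of the initial state $\textbf{X}(0)$. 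The main obstacle is the uniform span bound on $h_\alpha$; everything downstream is routine once the limiting equation is in hand, and indeed the existence statement itself could alternatively be quoted directly from the finite-state unichain theory in \cite{bertsekas}.
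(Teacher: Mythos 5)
Your overall route---vanishing discount to produce a solution of (\ref{average transmission cost Objective2}), then a telescoping verification argument to identify $\theta=\bar{\phi}^*$ and the optimality of the argmin policy---is a legitimate, self-contained alternative to the paper's proof, which instead verifies the Weak Accessibility (WA) condition and then invokes Propositions 4.2.1 and 4.2.3 of \cite{bertsekas}. However, the one step you yourself identify as the crux---the uniform span bound on $h_\alpha$---is justified by a property this MDP does not have. You appeal to ``unichain structure'': under a unichain policy every state reaches the common recurrent class in bounded expected time. But the MDP here is \emph{not} unichain: the stationary policy that never pushes and never updates any cache ($\mathbf{P}=\mathbf{0}$, $\Delta\textbf{S}=\mathbf{0}$) is feasible and has one closed recurrent set per cache configuration, i.e., it is multichain. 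The paper's restriction to unichain policies is a restriction of the \emph{search space} for the average-cost problem (justified by the existence of \emph{some} optimal unichain policy); it is not a structural property of the model, and it cannot be imposed on the $\alpha$-discounted problems, since the discounted Bellman operator minimizes over actions, not over policy classes. The discounted-optimal stationary policy $\mu_\alpha$ may therefore be multichain, and the hitting-time estimate $|V_\alpha(\textbf{X})-V_\alpha(\textbf{Y})|\le \phi_{\max}M$ (with $\phi_{\max}\triangleq\phi(\min\{F,KC\})$ the maximum per-stage cost) that you base on $\mu_\alpha$ reaching a ``common recurrent class'' is unsupported. The same misattribution appears in your closing remark that the result could be quoted from the ``unichain theory'' in \cite{bertsekas}; the applicable results there are the WA ones.

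What actually delivers the two-sided span bound is an accessibility (communicating/WA) property, and establishing it is precisely the substantive content of the paper's proof that your proposal omits. Concretely, for any two states $(\textbf{A}^1,\textbf{S}^1)$ and $(\textbf{A}^2,\textbf{S}^2)$ there is a policy steering the first to the second in finite expected time, because (i) the cache component is fully controllable: in a single slot the server can transmit every file some user is missing (reactively, or by pushing, which constraints (\ref{once})--(\ref{twice}) permit whenever at least one user lacks the file) and each user can rewrite its cache to any feasible configuration via (\ref{Cache Decision Constraint})--(\ref{Storage Evolution Constraint1}); and (ii) the demand component is exogenous and each $\{A_k(t)\}$ is irreducible, so the target demand state is reached regardless of the actions taken. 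Given this, for \emph{any} pair of states one gets $V_\alpha(\textbf{X})\le \phi_{\max}M+V_\alpha(\textbf{Y})$ by following the steering policy until $\textbf{Y}$ is hit and acting $\alpha$-optimally afterwards; this is the bound your compactness step needs, and it is also what makes $\theta$ independent of $\textbf{X}(0)$. So your skeleton is sound, but you must replace the unichain appeal with this accessibility argument (or verify WA and cite the textbook, as the paper does); as written, the key estimate rests on a property the MDP lacks.
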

\begin{proof}
Please see Appendix A.
\end{proof}
%From (\ref{Optimal Policy1}), we obtain the following corollary to illustrate how $\mu^* = (\mu^*_P,\mu^*_{\Delta S})$ achieves a balance between current transmission cost and future average transmission cost.
From (\ref{Optimal Policy1}), we see that the optimal policy $\mu^*$ achieves a balance between the current transmission cost (i.e., the first term in the objective function of (\ref{Optimal Policy1})) and the future average transmission cost (i.e., the second term in the objective function of (\ref{Optimal Policy1})). Moreover, how $\mu^* = (\mu^*_P,\mu^*_{\Delta S})$ achieves the balance is illustrated in the following corollary.

\begin{corollary}
The optimal pushing policy $\mu^*_{P}$ is given by
\begin{align}\label{Transmission Policy}
&\mu_{P}^*(\textbf{X})=\arg \min \limits_{\mathbf{P}\in \mathbf{U}_p(\textbf{X})}\Big\{\phi\big(\sum_{f\in \mathcal{F}}(R_{f} + P_f)\big) + W(\textbf{X},\mathbf{P})\Big\}, \ \ \  \textbf{X} \in \bar{\mathcal{F}}^K \times \mathcal{S}^K,
\end{align}
 where $W(\textbf{X},\mathbf{P})\! \triangleq \!\min \limits_{\Delta \textbf{S} \in \mathbf{U}_{\Delta S}(\textbf{X}, \mathbf{P})} \!\sum_{\mathbf{A}'\in \bar{\mathcal{F}}^K}\!\prod_{k\in\mathcal{K}}q_{A_k,A_k'}^{(k)}\!V(\mathbf{A}',\textbf{S}+\Delta \textbf{S})$ is a nonincreasing function of $\mathbf{P}$.
 %\footnote{Here, we say $\mathbf {P}^1$ is less than $ \mathbf {P}^2$ if and only if $f, P^1_f \leq P^2_f$.}
 Furthermore, the optimal caching policy $\mu_{\Delta S}^*$ is given by
  \begin{align}\label{Cache Policy}
     &\mu_{\Delta S}^*(\textbf{X},\mu_{P}^*(\textbf{X}))= \arg \min \limits_{\Delta \textbf{S}\in\mathbf{U}_{\Delta S}(\textbf{X},\mu_{P}^*(\textbf{X}))}
     \sum_{\mathbf{A}'\in \bar{\mathcal{F}}^K}\prod_{k\in\mathcal{K}}q_{A_k,A_k'}^{(k)}V(\mathbf{A}',\textbf{S}+\Delta \textbf{S}),\ \textbf{X}\in \bar{\mathcal{F}}^K \times \mathcal{S}^K ,
  \end{align}
where $\mu_{P}^*$ is obtained from (\ref{Transmission Policy}).
\end{corollary}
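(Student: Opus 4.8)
The plan is to exploit the product structure of the system action space $\mathbf{U}(\mathbf{X})$ to decompose the single joint minimization in the Bellman equation (\ref{average transmission cost Objective2}) into a nested minimization---an outer minimization over the pushing action $\mathbf{P}$ and an inner minimization over the caching action $\Delta\mathbf{S}$---and then to read off (\ref{Transmission Policy}) and (\ref{Cache Policy}) directly. First I would rewrite the feasible region using its definition, $\mathbf{U}(\mathbf{X})=\{(\mathbf{P},\Delta\mathbf{S}):\mathbf{P}\in\mathbf{U}_P(\mathbf{X}),\ \Delta\mathbf{S}\in\mathbf{U}_{\Delta S}(\mathbf{X},\mathbf{P})\}$, so that $\min_{(\mathbf{P},\Delta\mathbf{S})\in\mathbf{U}(\mathbf{X})}=\min_{\mathbf{P}\in\mathbf{U}_P(\mathbf{X})}\min_{\Delta\mathbf{S}\in\mathbf{U}_{\Delta S}(\mathbf{X},\mathbf{P})}$. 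The crucial observation is that the per-stage cost $\phi\big(\sum_{f\in\mathcal{F}}(R_f+P_f)\big)$ depends only on $\mathbf{X}$ and $\mathbf{P}$, since $R_f$ is fixed by $\mathbf{X}$ through (\ref{reactive}); it does not involve $\Delta\mathbf{S}$ and can therefore be pulled outside the inner minimization.

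Carrying this out, the inner minimization of $\sum_{\mathbf{A}'\in\bar{\mathcal{F}}^K}\prod_{k\in\mathcal{K}}q^{(k)}_{A_k,A_k'}V(\mathbf{A}',\mathbf{S}+\Delta\mathbf{S})$ over $\Delta\mathbf{S}\in\mathbf{U}_{\Delta S}(\mathbf{X},\mathbf{P})$ is precisely $W(\mathbf{X},\mathbf{P})$ as defined in the statement. Substituting into the argmin of (\ref{Optimal Policy1}) and keeping only the outer layer yields (\ref{Transmission Policy}) for $\mu_P^*$, while evaluating the inner argmin at the optimizing pushing action $\mathbf{P}=\mu_P^*(\mathbf{X})$ yields (\ref{Cache Policy}) for $\mu_{\Delta S}^*$. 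This part is a routine rearrangement of the Bellman equation and requires no new estimates.

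The only substantive point is the claim that $W(\mathbf{X},\mathbf{P})$ is a nonincreasing function of $\mathbf{P}$, where $\mathbf{P}$ is ordered componentwise over $\{0,1\}^F$. The plan is to prove the stronger feasible-set inclusion: if $\mathbf{P}\le\mathbf{P}'$ componentwise with both $\mathbf{P},\mathbf{P}'\in\mathbf{U}_P(\mathbf{X})$, then $\mathbf{U}_{\Delta S}(\mathbf{X},\mathbf{P})\subseteq\mathbf{U}_{\Delta S}(\mathbf{X},\mathbf{P}')$. This follows by inspecting the three constraints defining $U_{\Delta S,k}(X_k,\mathbf{R}+\mathbf{P})$: among (\ref{Cache Decision Constraint}), (\ref{Storage Evolution Constraint2}) and (\ref{Storage Evolution Constraint1}), only the upper bound $\Delta S_{k,f}\le R_f+P_f$ in (\ref{Cache Decision Constraint}) involves $\mathbf{P}$, and raising $P_f$ merely relaxes it, leaving the lower bound $-S_{k,f}\le\Delta S_{k,f}$, the integrality constraint (\ref{Storage Evolution Constraint2}), and the cache-capacity constraint (\ref{Storage Evolution Constraint1}) untouched; taking the product over $k\in\mathcal{K}$ preserves the inclusion. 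Since $W(\mathbf{X},\cdot)$ minimizes one fixed objective over a set that can only grow as $\mathbf{P}$ increases, the minimum can only decrease, giving $W(\mathbf{X},\mathbf{P})\ge W(\mathbf{X},\mathbf{P}')$, which is the asserted property.

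The hard part is not computational but conceptual: one must make the partial order on the binary pushing vectors explicit and verify carefully that no constraint other than the upper bound in (\ref{Cache Decision Constraint}) couples to $\mathbf{P}$---in particular that both (\ref{Storage Evolution Constraint1}) and (\ref{Storage Evolution Constraint2}) are genuinely independent of $\mathbf{P}$---so that the feasible-set inclusion is valid and the ``larger feasible set implies smaller minimum'' principle applies cleanly. Everything else reduces to the decomposition of the Bellman minimization described above.
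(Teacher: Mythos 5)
Your proposal is correct and follows essentially the same route as the paper: the paper likewise reads (\ref{Transmission Policy}) and (\ref{Cache Policy}) directly off the nested decomposition of the minimization in (\ref{Optimal Policy1}), and proves monotonicity of $W(\textbf{X},\cdot)$ via exactly the feasible-set inclusion $\mathbf{U}_{\Delta S}(\mathbf{X},\textbf{P}_1)\subseteq \mathbf{U}_{\Delta S}(\mathbf{X},\textbf{P}_2)$ for $\textbf{P}_1\preceq\textbf{P}_2$. Your write-up merely spells out the constraint-by-constraint check that the paper leaves implicit.
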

\begin{proof}
(\ref{Transmission Policy}) and (\ref{Cache Policy}) follow directly from (\ref{Optimal Policy1}). In addition, if $\textbf{P}_1\preceq \textbf{P}_2$,\footnote{The notion $\preceq$ indicates the component-wise $\leq$.} $\mathbf{U}_{\Delta S}(\mathbf{X}, \textbf{P}_1)\subseteq \mathbf{U}_{\Delta S}(\mathbf{X}, \textbf{P}_2)$, implying that $W(\mathbf{X},\textbf{P}_1) \geq W(\mathbf{X},\textbf{P}_2)$. The proof ends.
%Since $\mathbf{U}_{\Delta S}(\mathbf{X}, \textbf{P}_1)\subseteq \mathbf{U}_{\Delta S}(\mathbf{X}, \textbf{P}_2)$ if $\textbf{P}_1\preceq \textbf{P}_2$, $W(\mathbf{X},\textbf{P})$ is a decreasing function of $\textbf{P}$.
%$\mathbf{U}_{\Delta S}(\mathbf{X}, \textbf{P})$ becomes larger if $\textbf{P}$ increases, $W(\mathbf{X},\textbf{P})$ is a decreasing function of $\textbf{P}$.
\end{proof}

\begin{remark}[Balance between Current Transmission Cost and Future Average Transmission Cost]
Note that the current transmission cost $\phi\big(\sum_{f\in \mathcal{F}}(R_f + P_f)\big)$ increases with $\mathbf{P}$ and the future average transmission cost $W(\textbf{X},\mathbf{P})$ decreases with $\mathbf{P}$. Thus, the optimal pushing policy $\mu^*_{P}$ in (\ref{Transmission Policy}) achieves the perfect balance between the current transmission cost and the future average transmission cost for all $\textbf{X}$. In addition, from (\ref{Cache Policy}), we learn that the optimal caching policy $\mu_{\Delta S}^*$ achieves the lowest future average transmission cost under the optimal pushing policy~$\mu^*_{P}$.
\end{remark}

From Lemma~$1$ and Corollary~$1$, we note that $\mu^*$ depends on system state $\textbf{X}$ via the value function $V(\cdot)$. Obtaining $V(\cdot)$ involves solving the equivalent Bellman equation in (\ref{average transmission cost Objective2}) for all $\textbf{X}$, and there generally exist only numerical results which cannot offer many design insights \cite{bertsekas}. In addition, obtaining numerical solutions using value iteration or policy iteration is usually infeasible for practical implementation, due to the curse of dimensionality \cite{bertsekas}. Therefore, it is desirable to study optimality properties of $\mu^*$ and exploit these properties to design low-complexity policies with promising performance.

\subsection{Optimality Properties}
%In this section, we first illustrate the effect of the cache size on the average transmission cost. And then we show the structure property for the value function under the optimal policy, based on which we further reduce the dimension of the state space and give the general optimal algorithm.
% \cite{StandfordMDP}
%\subsection{Optimality Property}
First, we analyze the impact of cache size $C$ on the optimal average transmission cost $\theta$. For ease of exposition, we rewrite $\theta$ as a function of cache size $C$, i.e., $\theta(C)$, and obtain the following lemma based on coupling and interchange arguments \cite{Jsac}.
%Based on coupling and interchange arguments, we obtain the following lemma.
%Based on coupling and interchange arguments \cite{StandfordMDP}, we obtain the following lemma.
\begin{lemma}[Impact\! of Cache\! Size]\label{the effect of cache size} $\theta(C)$ decreases with $C$ when $C\! <\! F$ and $\theta(C)\!=\!0$ when $C\! \geq\! F$.
\end{lemma}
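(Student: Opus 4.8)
The plan is to split the statement into three independent claims and attack them in increasing order of difficulty: the boundary value $\theta(C)=0$ for $C\ge F$, strict positivity $\theta(C)>0$ for $C<F$, and the strict decrease $\theta(C+1)<\theta(C)$ across $C<F$; the coupling and interchange machinery is reserved for the last one. First, for $C\ge F$ I would exhibit a feasible policy attaining average cost $0$: since each user can now hold the whole library, push the $F$ files one at a time during the first $F$ slots (each such slot carries a bounded load, cost at most $\phi(F)$) and cache them permanently; from slot $F$ onward every request is served locally, so $R_f=P_f=0$ and the per-stage cost is $\phi(0)=0$. The total incurred cost is bounded, so its time average vanishes and $\theta(C)\le 0$; combined with $\phi\ge 0$ this gives $\theta(C)=0$. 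For $C<F$ I would prove $\theta(C)>0$ by contradiction: by the cache-update constraint (\ref{Cache Decision Constraint}) a file can enter a cache only on a slot in which it is transmitted, so a policy with asymptotically zero transmission freezes every cache; a frozen cache holds at most $C<F$ files, and by irreducibility of each $\{A_k(t)\}$ every file is requested infinitely often, forcing $R_f=1$ through (\ref{reactive}) with positive long-run frequency, a contradiction.

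The non-strict monotonicity $\theta(C+1)\le\theta(C)$ is then immediate: constraint (\ref{Storage Evolution Constraint1}) for cache size $C$ implies the same constraint for $C+1$, so the feasible policy set for cache $C$ is contained in that for cache $C+1$, and $\theta$ is the minimum of one and the same objective over a larger set (and is initial-state independent by Lemma~\ref{Bellman equation}). For the strict decrease I would set up a sample-path coupling: run two systems on a common demand realization $\{\textbf{X}(t)\}$, where system $A$ has cache $C$ and follows the optimal $\mu^*$ of Problem~1, while system $B$ has cache $C+1$. I design $B$ to reserve one cache slot at a single user $k^*$ to hold one fixed file $f^*$ permanently, and otherwise to mirror $A$, maintaining the invariant $\textbf{S}^B_k(t)\supseteq\textbf{S}^A_k(t)$ for every $k$ (feasible since $|\textbf{S}^A_{k^*}|\le C$ always leaves room for the extra file). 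Because the caches agree on every file $g\ne f^*$, the reactive actions (\ref{reactive}) for such $g$ coincide and $B$ simply copies $A$'s pushes, so on each slot $B$'s transmitted set equals $A$'s except possibly for $f^*$.

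The pair $(k^*,f^*)$ must be chosen, using the stationary law of $\mu^*$, so that with positive frequency user $k^*$ requests $f^*$ while lacking it in system $A$ and no other user simultaneously requests-and-lacks $f^*$. Such a choice exists because for $C<F$ some file sits outside $k^*$'s cache with positive stationary probability, and by irreducibility together with independence of the $K$ chains the event ``only $k^*$ needs $f^*$'' then occurs with positive frequency. On such a slot $A$ is forced to transmit $f^*$ (so $R^A_{f^*}=1$) whereas $B$ already holds $f^*$ at $k^*$ and needs no transmission of it, so $B$'s load is exactly one smaller; since $\phi$ is strictly increasing, $B$'s per-stage cost is strictly smaller there, while on every other slot $B$'s load is at most $A$'s. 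Averaging and using monotonicity of $\phi$ yields $\theta(C+1)\le\bar{\phi}(B)<\bar{\phi}(\mu^*)=\theta(C)$.

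I expect the main obstacle to be exactly this last bookkeeping, whose difficulty is genuinely due to the multicast nature of transmission: a single reserved slot at one user eliminates a transmission only when that user is the sole uncovered requester of $f^*$. Thus I must (i) verify rigorously that the protected-file event has strictly positive long-run frequency under the stationary distribution induced by $\mu^*$, and (ii) check that reserving $f^*$ at $k^*$ never forces $B$ into an extra push or reactive transmission of some other file, i.e.\ that $B$ can preserve $\textbf{S}^B\supseteq\textbf{S}^A$ within the size bound $C+1$ without ever evicting a file that $A$ retains. Establishing (i) and (ii) is where the coupling and interchange argument does its real work; once they hold, the chain of inequalities above closes the proof.
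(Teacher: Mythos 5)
Your treatment of the easier claims is sound and matches or exceeds the paper: the explicit zero-cost policy for $C\ge F$, the feasibility-inclusion argument for $\theta(C+1)\le\theta(C)$, and the positivity sketch for $C<F$ (which the paper does not attempt) are all fine, and your coupling framework --- common demand sample path, cache dominance, identical pushing in both systems --- is exactly the skeleton of the paper's proof in Appendix B. The genuine gap is in the strict decrease, and it sits precisely at the two conditions (i) and (ii) that you defer as ``bookkeeping'': neither is established, and both can actually fail as you have formulated them. For (i): you infer positive frequency of the event ``$k^*$ requests $f^*$ while lacking it in $A$'' from two marginal facts ($f^*$ is absent from $k^*$'s cache with positive stationary probability; $f^*$ is requested with positive frequency). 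But the optimal policy $\mu^*$ deliberately correlates cache contents with imminent demands, so the joint event can have frequency zero: if, say, the demand chains are deterministic cycles through $\bar{\mathcal{F}}$, there is an optimal policy that serves every request by pushing the needed file one slot ahead, so no user ever suffers a reactive miss on any file. System $B$ copies every such push verbatim, and your coupling then yields no strict saving whatsoever; savings from \emph{skipped pushes} must be counted, which your construction cannot do. For (ii): the invariant $\textbf{S}^B_k\supseteq\textbf{S}^A_k$ genuinely breaks when $B$ skips a transmission of $f^*$, because transmissions are multicast: by constraint (\ref{Cache Decision Constraint}) a user $k'\neq k^*$ that did not request $f^*$ may still cache $f^*$ in system $A$ from that transmission, and in $B$ the file was never sent, so $k'$ cannot mirror the insertion. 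Either $B$ transmits $f^*$ anyway (killing the saving) or the invariant is lost and $k'$'s later hits on $f^*$ in $A$ become extra misses in $B$, potentially wiping out what was saved.

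Both failures trace back to protecting the extra file at a \emph{single} user. The paper's construction is structurally different on exactly this point: System 2 keeps \emph{every} user's cache filled to $C_2$, so each user $k$ owns at least one extra file $f_k$ absent from its System-1 cache, and strictness is extracted from a slot at which \emph{all} $K$ users simultaneously request their own extra files --- at such a slot every requester of every $f_k$ is covered in System 2, so no other user can block the saving, which removes your obstacle (ii) by design. (The natural repair of your argument points the same way: reserve the \emph{same} file $f^*$ at \emph{all} $K$ users in $B$; then every transmission of $f^*$ in $A$, reactive or pushed, can be skipped in $B$ without breaking dominance, and strict improvement follows from your own positivity argument that some file must be transmitted with positive frequency when $C<F$.) As written, however, your chain $\theta(C+1)\le\bar{\phi}(B)<\bar{\phi}(\mu^*)=\theta(C)$ does not close, because the strict inequality rests on events whose positive frequency is unproven and which can be entirely absent under the optimal policy.
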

\begin{proof}
Please see Appendix B.
\end{proof}
\begin{remark}[Tradeoff between Cache Size and Bandwidth Utilization]
As illustrated in Fig.~$2$, a lower average transmission cost always corresponds to a higher bandwidth utilization. Hence, Lemma $2$ reveals the tradeoff between the cache size and the bandwidth utilization.
\end{remark}

In the following, we focus on the case of $C<F$. By analyzing the partial monotonicity of value function $V(\cdot)$, we obtain the next lemma.
\begin{lemma}[Transient System States]\label{Transient} Any $\textbf{X} = (\mathbf{A},\mathbf{S})$ with $\mathbf{S}\notin \check{\mathcal{S}}^K$ is transient under $\mu^*$, where $\check{\mathcal{S}}\! \triangleq\!\big\{(S_{f})_{f\in\mathcal{F}}:\! \sum_{f=1}^F\! S_{f}\!=\!C\big\}$.
\end{lemma}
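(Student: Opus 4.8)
The plan is to first establish the partial monotonicity of the value function in the cache state and then convert it into a statement about the recurrent class of the controlled chain under $\mu^*$.

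First, I would prove that $V(\mathbf{A},\mathbf{S})$ is nonincreasing in $\mathbf{S}$ with respect to the component-wise order $\preceq$, i.e.\ $\mathbf{S}^{(1)}\preceq\mathbf{S}^{(2)}$ implies $V(\mathbf{A},\mathbf{S}^{(1)})\ge V(\mathbf{A},\mathbf{S}^{(2)})$ for every $\mathbf{A}$. I would argue by induction on the relative value iteration iterates $V_n$ (starting from $V_0\equiv 0$), which converge to $V$. In the inductive step I would take any feasible action at the smaller-cache state $(\mathbf{A},\mathbf{S}^{(1)})$ and construct a feasible action at $(\mathbf{A},\mathbf{S}^{(2)})$ that reuses the same pushing decision wherever it remains feasible under (\ref{twice}), mirrors the caching decision so that the successor cache state dominates the one from $\mathbf{S}^{(1)}$ component-wise, and has a reactive load $\sum_f R_f$ no larger, since extra cached files can only switch reactive transmissions off via (\ref{reactive}). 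Monotonicity of $\phi$ then bounds the per-stage cost and the induction hypothesis bounds the continuation term $\sum_{\mathbf{A}'}\prod_k q^{(k)}_{A_k,A_k'}V_n(\mathbf{A}',\cdot)$; letting $n\to\infty$ gives the claim.

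Next I would use this monotonicity together with the caching characterization (\ref{Cache Policy}): since $\mu^*_{\Delta S}$ minimizes $\sum_{\mathbf{A}'}\prod_k q^{(k)}_{A_k,A_k'}V(\mathbf{A}',\mathbf{S}+\Delta\mathbf{S})$, enlarging $\mathbf{S}+\Delta\mathbf{S}$ component-wise never increases the objective, so it is weakly optimal never to evict a cached file except in a swap and always to cache a transmitted file while free space remains. Fixing $\mu^*$ to break ties toward larger cache states, the occupancy $N_k(t)\triangleq\sum_{f\in\mathcal{F}}S_{k,f}(t)$ is then nondecreasing in $t$ and upper bounded by $C$. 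I would then run the recurrence argument in two parts. For closure, since each $N_k(t)$ is nondecreasing and capped at $C$, the full-cache set $\bar{\mathcal{F}}^K\times\check{\mathcal{S}}^K$ is absorbing under $\mu^*$. For reachability, from any state with $N_k<C$ the assumption $C<F$ guarantees a file absent from user $k$'s cache, and irreducibility of $\{A_k(t)\}$ ensures user $k$ almost surely eventually requests it; the forced reactive transmission is cached in the free space, raising $N_k$ by one, and iterating drives every $N_k$ to $C$ in finite time almost surely. Hence any $\mathbf{X}=(\mathbf{A},\mathbf{S})$ with $\mathbf{S}\notin\check{\mathcal{S}}^K$ is visited only finitely often and is therefore transient; equivalently, the unique recurrent class of the unichain lies in $\bar{\mathcal{F}}^K\times\check{\mathcal{S}}^K$.

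I expect the main obstacle to be the action-coupling inside the monotonicity proof: because both the push-feasibility set $\mathbf{U}_P(\mathbf{X})$ and the forced reactive load depend on $\mathbf{S}$ through (\ref{reactive})--(\ref{twice}), the matched action at the larger-cache state must be verified feasible while simultaneously delivering a weakly smaller current cost and a weakly larger successor state. A secondary subtlety is the tie-breaking that is needed to upgrade the conclusion from \emph{some} optimal policy having nondecreasing occupancy to the specific $\mu^*$ in the statement.
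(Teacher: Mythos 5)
Your overall route is the same as the paper's: partial nonincreasing monotonicity of $V(\mathbf{A},\cdot)$ in the cache state, proved by induction on the relative value iteration with a matched feasible action at the larger-cache state, followed by a closure argument (full caches are absorbing) and a reachability argument (irreducibility of each $\{A_k(t)\}$ with $C<F$ eventually fills every cache). However, there is a genuine gap at exactly the point you flag as a ``secondary subtlety,'' and it is not minor: weak monotonicity is not enough to make the argument work for $\mu^*$ itself. From weak monotonicity you can only conclude that caching a transmitted file into free space, and refraining from pure eviction, are \emph{weakly} optimal; you then impose a tie-breaking rule toward larger cache states to get nondecreasing occupancy. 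What this proves is ``there exists an optimal policy under which all states with $\mathbf{S}\notin\check{\mathcal{S}}^K$ are transient,'' which is strictly weaker than Lemma~\ref{Transient}: the lemma is about the policy $\mu^*$ characterized by (\ref{Optimal Policy1}) and (\ref{Cache Policy}), with no freedom to re-select among minimizers. If $V$ were flat in some cache direction, a legitimate minimizer $\mu^*$ could ignore free cache space forever, and then some states outside $\check{\mathcal{S}}^K$ would be recurrent under that $\mu^*$; your argument cannot exclude this.

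The paper closes exactly this hole by establishing a \emph{strict} version of the monotonicity: if $\mathbf{S}^1\preceq\mathbf{S}^2$ and $S^1_{k,f}<S^2_{k,f}$ for at least one pair $(k,f)$, then $V(\mathbf{A},\mathbf{S}^1)>V(\mathbf{A},\mathbf{S}^2)$ for every $\mathbf{A}$. With this, the continuation cost $\sum_{\mathbf{A}'}\prod_{k}q^{(k)}_{A_k,A_k'}V(\mathbf{A}',\mathbf{S}+\Delta\mathbf{S})$ minimized in (\ref{Cache Policy}) is \emph{strictly} reduced by any componentwise enlargement of $\mathbf{S}+\Delta\mathbf{S}$, and two short contradiction arguments then show that \emph{every} minimizer (i) never evicts without replacement when the cache is not full, and must cache a reactively transmitted file whenever free space exists, and (ii) maps a full cache state to a full cache state. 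This upgrades ``weakly optimal'' to ``necessary for optimality,'' so monotone occupancy holds for any argmin selection $\mu^*$ and no tie-breaking is needed; your reachability/absorption reasoning then applies verbatim. To repair your proof, strengthen the inductive comparison to yield strict inequality when one cache state strictly dominates another (e.g., by noting that, by irreducibility, the extra cached file is requested with positive probability within finitely many slots, on which event the reactive load, hence the per-stage cost, is strictly smaller), and replace the tie-breaking step by the two contradiction arguments above.
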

%\begin{lemma}\label{system state separation}
%All the cache states unfilled, denoted as $\bm{\mathcal{S}}_t=\{\textbf{S}\in \bm{\mathcal{S}}| \sum_{f\in \mathcal{F}}S_{k,f} < C_k, for\ some\  k \in \mathcal{K}\}$, are transient under the optimal policy $\mu^*$.
%\end{lemma}

\begin{proof}
Please see Appendix C.
\end{proof}
\begin{remark}[Reduction of System State Space and Caching Action Space] Lemma $3$ reveals that the optimal policy $\mu^*$ makes full use of available storage resources. Also, considering the expected sum cost over the infinite horizon incurred by a transient state is finite and negligible in terms of average cost, we restrict our attention to the reduced system state space $\bar{\mathcal{F}}^K\times \check{\mathcal{S}}^K$ without loss of optimality.
Also, the cache update constraint in (\ref{Storage Evolution Constraint2}) is replaced with $\sum_{f\in \mathcal{F}}S_{k,f}+\Delta S_{k,f}= C$, and thus the caching action space can be further reduced.
\end{remark}
%the caching action space can be reduced to $\mathbf{U}_{\Delta S}(\textbf{X},\mathbf{P})= \prod_{k\in\mathcal K} U_{\Delta S,k}(X_k,\mathbf{R}+\mathbf{P})$ with $U_{\Delta S,k}(X_k,\mathbf{R}+\mathbf{P}) \triangleq \{(\Delta S_{k,f})_{f\in\mathcal{F}}\in \{-1,0,1\}^F:  -S_{k,f} \leq \Delta S_{k,f} \leq R_f+P_f, S_{k,f}+ \Delta S_{k,f} \in \{0,1\}, \sum_{f\in \mathcal{F}}S_{k,f}+\Delta S_{k,f}= C\}$, where $R_f$ is given by (\ref{reactive}).

\begin{remark}[Computational Complexity and Implementation Requirement] To obtain the optimal policy $\mu^*$ from (\ref{Optimal Policy1}) under the reduced system state space given in Lemma~\ref{Transient}, we need to compute $V(\textbf{X})$, $\textbf{X}\! \in\! \bar{\mathcal{F}}^K\!\times\! \check{\mathcal{S}}^K$, by solving a system of $\big((F+1)\binom{F}{C}\big)^K$ equations in (\ref{average transmission cost Objective2}), the number of which increases exponentially with the number of users $K$ and combinatorially with the number of files $F$ as\! well\! as the cache size $C$. Moreover, given $V(\cdot)$, computing $\mu^*(\textbf{X})$ for all $\textbf{X}$ involves brute-force search over the action space $\mathbf{U}(\textbf{X})$, which requires complexity of $\mathcal {O}\big(K2^F\binom{F}{C}\big)$. In practice, $K$, $F$ and $C$ are relatively large, and hence the complexity of computing $\mu^*$ is not acceptable. Besides, the implementation of $\mu^*$ requires a centralized controller and system state information, resulting in large signaling overhead.
\end{remark}

\section{Low-Complexity Decentralized Policy}
To reduce the computational complexity and achieve decentralized implementation without much signaling overhead, we first approximate the value function $V(\cdot)$ in (\ref{average transmission cost Objective2}) by the sum of per-user per-file value functions. Based on the approximate value function, we obtain a low-complexity decentralized policy for practical implementation.

 %Therefore, it is of significant importance to develop a low-complexity decentralized policy for practical implementation.

%to gather the system cache state of all the users at each slot.

%the curse of dimensionality for solving the optimal policy from (\ref{Optimal Policy1}).

%In this section, in order to alleviate the curse of dimensionality, we shall first propose a linear approximation of the value function. Then, based on the approximate value function, we propose a low-complexity decentralized suboptimal policy.

%, which is shown to achieve good performance compared with widely used baselines in our simulation results.

\subsection{Value Approximation}

To alleviate the curse of dimensionality in computing $V(\cdot)$, for all $\textbf{X}\!\in\! \bar{\mathcal{F}}^K \times \check{\mathcal{S}}^K$, motivated by \cite{Cui1,Cui2}, we approximate $V(\textbf{X})$ in (\ref{average transmission cost Objective2}) as follows:
\begin{equation}\label{two stage approximation1}
V(\textbf{X}) \approx \check{V}(\textbf{X}) = \sum_{k\in\mathcal{K}}\sum_{f_k\in \mathcal{F}:S_{k,f_k}=1} \check{V}^1_k(X_k^1),
\end{equation}
where $X_k^1 \triangleq (A_k,f_k) \in \bar{\mathcal{F}}\times \mathcal{F}$ and for all $k\in \mathcal{K}$, $\check{V}^1_k(X_k^1)$, $X_k^1 \in \bar{\mathcal{F}}\times \mathcal{F}$  satisfy:
\begin{align}\label{per user per content11}
&\theta_k^1 + \check{V}^1_k(X_k^1)= \phi'(X_k^1)+ \min \limits_{\Delta S_k^1 \in U_k^1(X_k^1)}\sum_{A_k'\in \bar{\mathcal{F}}} q^{(k)}_{A_k,A_k'}\check{V}^1_k(A_k',f_k'),\ X_k^1 \in \bar{\mathcal{F}}\times \mathcal{F}.%, \ k \in \mathcal{K},
\end{align}
Here, $\phi'(X_k^1)\! \triangleq\! \frac{1}{K}\left(\frac{\phi(1)}{C}\!-\!\phi\big(\textbf{1}(A_k=f_k)\big)\right)\textbf{1}(A_k\! \neq\! 0)$,\footnote{\textbf{1}($\cdot$) represents the indicator function throughout this paper.} $U_k^1(X_k^1) \triangleq \{0,-\textbf{1}(A_k \notin \{0,f_k\})\}$ and $f_k' \triangleq (1+\!\Delta S_k^1)f_k\!-\!\Delta S_k^1 A_k$.
%$g_k(A_k,f_k)\triangleq\sum_{A_k'\in \bar{\mathcal{F}}} q^{(k)}_{A_k,A_k'}\check{V}^1_k(A_k',f_k)$ and $\phi'((A_k,f_k))\! \triangleq\! \frac{1}{K}\left(\frac{\phi(1)}{C}\!-\!\phi\big(\textbf{1}(A_k=f_k)\big)\right)\textbf{1}(A_k\! \neq\! 0)$ and $f_k' \!\triangleq\! (1\!+\!\Delta S_k^1)f_k\!-\!\Delta S_k^1 A_k$.
The equation in (\ref{per user per content11}) corresponds to the Bellman equation of a per-user per-file MDP for user $k$ with unit cache size. $\theta_k^1$ and $\check{V}^1_k(\cdot)$ denote the average cost and value function of the per-user per-file MDP for user $k$, respectively. Specifically, at time slot $t$, $X_k^1(t)=(A_k(t),f_k(t))$ denotes the system state, where $A_k(t) \in \bar{\mathcal{F}}$ denotes the demand state and $f_k(t) \in \mathcal{F}$ denotes the cached file; $\Delta S_k^1(t)\in U_k^1(X_k^1)$ denotes the caching action; the demand state $A_k(t)$ evolves according to the Markov chain $\{A_k(t)\}$ and the cache state $f_k(t)$ evolves according to $f_k(t+1) = (1+ \Delta S_k^1(t))f_k(t)\! -\! \Delta S_k^1(t) A_k(t)$; $\phi'(X_k^1(t))$ denotes the per-stage cost. The $K$ per-user per-file MDPs are obtained from the original MDP by eliminating the couplings among the $K$ users and the $C$ cache units of each user, which are due to the multicast transmission and the cache size constraint, respectively.%\footnote{More details can be seen in the Appendix D.}
%In each per-user per-file MDP, we only consider a single user and assume it has unit cache size.

%$\theta_k^1$ and $\check{V}^1_k(\cdot)$ denote the psystem state, the average cost and the value function of the per-user per-file MDP, respectively. $\phi'((A_k,f))$ represents the per-stage cost and there exists only caching decision denoted by $\Delta S_{k,f}\in \{0,-1\}$, where $\Delta S_{k,f} = -1$ indicates that we replace the requested file $A_k$ with the cached file $f$ and thus $f' = A_k$, and $\Delta S_{k,f}=0$ indicates that there is no change of cache state and then $f' = f$.

%where $\check{\mathbf{U}}_{T,k}(\textbf{X}_k) \triangleq \{(T_{k,f})_{f\in \mathcal{F}}: T_{k,f}+ S_{k,f} \geq \textbf{1}(A_k=f)\}$ and $  \phi'_k(\mathbf{T}_{k}) \triangleq \frac{1}{K}\left(\frac{\phi(1)}{C}-\phi\big(1-\sum_f T_{k,f}\big)\right)$. Here, we note that the computation of $\{\check{V}^1_k(A_k,f)\}$ is based on per-user and per-file level. %and only requires to solve a total of $K(F+1)F$ equations. %Hence, the value approximation (\ref{two stage approximation1}) helps us to alleviate both exponential and combinatorial complexity compared with the computation of $V(\cdot)$.
In the following, we characterize the performance of the value approximation in (\ref{two stage approximation1}) from the perspectives of the average transmission cost and the complexity reduction, respectively.
%We denote the tuple of the approximate MDP as $(\textbf{X}, \prod_k \check{\mathbf{U}}_k(\textbf{X}_k),(\textbf{P}_k),\sum_k \phi'_k(\mathbf{T}_{k}))$. The performance guarantee for the value approximation is given in the following lemma.
First, by analyzing the relaxation from the original MDP to the $K$ per-user per-file MDPs, we have the following relationship between the average cost of the original MDP and the sum of the average costs of the $K$ per-user per-file MDPs.
\begin{lemma}\label{lower bound achieve}%[Performance of Value Approximation]\label{lower bound achieve}
$\theta(C)$ and $\theta_k^1$, $k\in \mathcal{K}$ satisfy that $\theta(C)\geq C\sum_{k\in \mathcal{K}}\theta_k^1$.
%$C\sum_{k\in \mathcal{K}}\theta_k(1)$ is a lower bound of $\theta$ obtained from (\ref{average transmission cost Objective2}).
\end{lemma}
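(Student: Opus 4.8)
The plan is to prove the inequality on a per-stage (sample-path) basis and then pass to the time average, working in the reduced state space of Lemma~\ref{Transient} where each user caches exactly $C$ files. Write $N \triangleq \sum_{f\in\mathcal{F}}(R_f+P_f)$ for the number of files transmitted in a slot, so the per-stage cost of the original MDP is $\phi(N)$, while $C\sum_{k}\theta_k^1$ is, by optimality of the per-user per-file MDPs, a sum of \emph{minimized} average costs of the per-stage quantities $\phi'(X_k^1)$. The first goal is therefore a pointwise inequality
\[
\phi(N) \ge \sum_{k\in\mathcal{K}}\sum_{f_k\in\mathcal{F}:\,S_{k,f_k}=1}\phi'(A_k,f_k),
\]
valid in every slot along the trajectory induced by $\mu^*$; averaging over time and taking expectations then gives $\theta(C)\ge \sum_k G_k$, where $G_k$ denotes the time-averaged value of the inner sum for user $k$.

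First I would verify this per-stage inequality. Summing $\phi'$ over the $C$ files cached at user $k$ collapses, using $\phi(0)=0$, to $\tfrac{\phi(1)}{K}\,r_k$, where $r_k \triangleq \mathbf{1}(A_k\neq 0)\big(1-S_{k,A_k}\big)$ indicates that user $k$ issues a request it cannot serve locally: the non-requested cached files contribute $\tfrac{\phi(1)}{KC}$ each, the requested file (if cached) contributes $\tfrac{1}{K}\big(\tfrac{\phi(1)}{C}-\phi(1)\big)$, and the two cases telescope. So the target reduces to $\phi(N)\ge \tfrac{\phi(1)}{K}\sum_k r_k$, which follows from three elementary facts: (i) $N\ge \sum_f R_f$, since pushing only adds transmissions; (ii) convexity of $\phi$ with $\phi(0)=0$ yields $\phi(m)\ge m\,\phi(1)$ for every integer $m\ge 0$; and (iii) the multicast structure, namely $\sum_f R_f\ge 1$ whenever any miss occurs, together with $\sum_k r_k\le K$, so that $K\sum_f R_f \ge \sum_k r_k$. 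Chaining these, $\phi(N)\ge\phi\big(\sum_f R_f\big)\ge \phi(1)\sum_f R_f\ge \tfrac{\phi(1)}{K}\sum_k r_k$, with both sides vanishing when no miss occurs.

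It then remains to show $G_k\ge C\theta_k^1$. Here I would decompose $G_k=\sum_{j=1}^{C} g_{k,j}$ over the $C$ cache units of user $k$, where $g_{k,j}$ is the time-averaged value of $\phi'(A_k,f_{k,j})$ for the content $f_{k,j}$ held in unit $j$ under $\mu^*$. Since $\{A_k(t)\}$ evolves as the prescribed Markov chain and $\phi'$ depends only on the pair $(A_k,f_{k,j})$, each cache unit induces a trajectory on the per-user per-file state space $\bar{\mathcal{F}}\times\mathcal{F}$; interpreting its content updates as a policy of the per-user per-file MDP and invoking the minimality of $\theta_k^1$ over that MDP's policy class gives $g_{k,j}\ge\theta_k^1$, and summing over $j$ and $k$ closes the argument.

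The hard part will be this last step, specifically that the content updates of a single cache unit under $\mu^*$ constitute a \emph{feasible} policy of the per-user per-file MDP. The action set $U_k^1(X_k^1)$ only permits keeping the cached file or replacing it with the \emph{currently requested} file, whereas $\mu^*$ may load a pushed file not requested in the current slot. To close this gap I would argue that any such prefetching transition can be replaced, without increasing the per-unit average cost, by a feasible keep/demand-cache transition — intuitively because $\phi'$ rewards only the event $A_k=f_k$, which a demand-caching policy can also realize, while the transmission cost of the discarded pushes was already dropped in step (i) and hence never credited to the bound. Equivalently, one may cast the original average-cost MDP as a linear program over stationary state–action frequencies, relax the coupling constraints arising from multicast and the per-user cache-size limit, and observe that the relaxed program separates into $K$ blocks each consisting of $C$ identical copies of the per-user per-file program; since enlarging the feasible region can only decrease the optimum, this yields $\theta(C)\ge C\sum_k\theta_k^1$ directly.
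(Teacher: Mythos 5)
Your first step is sound: summing $\phi'$ over the $C$ cached files of user $k$ does collapse to $\tfrac{\phi(1)}{K}r_k$, and the chain $\phi(N)\ge\phi\bigl(\sum_f R_f\bigr)\ge\phi(1)\sum_f R_f\ge\tfrac{\phi(1)}{K}\sum_k r_k$ is valid, so $\theta(C)\ge\sum_k G_k$ with $G_k$ equal to $\tfrac{\phi(1)}{K}$ times the long-run miss rate of user $k$ under $\mu^*$. The fatal gap is the second step, $G_k\ge C\theta_k^1$: you have correctly located the feasibility problem (prefetched files lie outside the demand-caching action set $U_k^1$), but the proposed repair is false, not just unproven. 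Once step (i) has discarded the transmission cost of pushes, prefetching becomes \emph{free} in your bound, and a prefetching cache unit can achieve strictly more hits than any demand-caching policy, so the inequality goes the wrong way. Concretely, let every $\mathbf{Q}_k$ be (a slight irreducible perturbation of) the deterministic cycle $1\to2\to\cdots\to F\to1$, let $C=1$, and let all users be synchronized. The policy that in every slot pushes the next slot's file and caches it is optimal for Problem 1 (it ties with pure reactive transmission at cost $\phi(1)$ per slot), and under it every request is a hit, so $G_k=0$; but a demand-caching unit can score at most one hit per $F$ slots (it must hold a requested file through an entire cycle), so $C\theta_k^1=\tfrac{\phi(1)}{K}\bigl(1-\tfrac{1}{F}\bigr)>0$. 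Thus $G_k\ge C\theta_k^1$ fails for this optimal $\mu^*$; and whenever pushing is \emph{strictly} optimal (bursty demands with strictly convex $\phi$ --- the very phenomenon motivating the paper), every optimal policy prefetches and the step fails for all of them, so you cannot escape by tie-breaking toward a non-prefetching optimum. Your Fix A interchange (``a demand-caching policy can also realize the hit'') is false for the same reason: no demand-caching trajectory realizes one hit per slot on a cycle.

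Fix B cannot be a pure relaxation either: the per-user per-file MDP \emph{forbids} prefetching, and removing actions is a restriction, not a relaxation; it can only be justified by showing prefetching is unprofitable, and that argument needs exactly the transmission costs your step (i) threw away. This is how the paper orders its proof: first relax multicast to unicast while \emph{keeping} the cost, giving $\theta\ge\check{\theta}=\sum_k\check{\theta}_k$; then, inside each single-user MDP, use convexity ($\phi(n+1)-\phi(n)\ge\phi(1)$, which is the largest possible future saving for a single user) to conclude that pushing never pays, so demand caching is without loss of optimality; only then decouple the $C$ cache units and spread the miss cost as $\phi(1)/C$ per unit, giving $\check{\theta}_k\ge\check{\theta}_k'=C\theta_k^1$. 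The moral is that the cost function must be carried through the relaxations until pushing has been argued away; discarding it at the outset makes the comparison with $\theta_k^1$ not merely hard but untrue.
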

\begin{proof}
Please see Appendix D.
%We shall first illustrate the relationship between the $K$ per-user per-file MDPs and the original MDP. Then, we demonstrate the relationship between $\theta(C)$ and $\theta_k^1$ for all $k\in \mathcal{K}$. Details can be seen in the Appendix D.
\end{proof}
In addition, note that obtaining $V(\textbf{X}),\ \textbf{X}\!\in\! \bar{\mathcal{F}}^K\!\times \check{\mathcal{S}}^K$ requires to solve a system of $\big((F+1)\binom{F}{C}\big)^K$ equations given in (\ref{average transmission cost Objective2}), while obtaining $\check{V}^1_k(X_{k}^1)$, $X_k^1\in \bar{\mathcal{F}}\times\mathcal{F}$, $k \in \mathcal{K}$ only requires to solve a system of $KF(F+1)$ equations given in (\ref{per user per content11}). Therefore, under the value function approximation in (\ref{two stage approximation1}), the non-polynomial computational complexity is eliminated.
%from $(\ref{per user per content1})$, we note that the computation of $\check{V}^1_k(A_k,f_k),\ A_k\in \bar{\mathcal{F}},\ f_k \in \mathcal{F}$ is at per-user per-file level, requiring to solve a system of $KF(F+1)$ equations. Compared with the computation of $\{V(\textbf{X}):\textbf{X}\in \bar{\mathcal{F}}^K\times \check{\mathcal{S}}^K \}$ which requires to solve a system of $\big((F+1)\binom{F}{C}\big)^K$ equations, the value approximation in $(\ref{two stage approximation1})$ eliminates both the exponential and combinatorial computational complexity.
\begin{remark}The linear value function approximation adopted in (\ref{two stage approximation1}) differs from most existing approximation methods. Firstly, different from the traditional linear approximation in \cite{Linear}, our approach is not based on specific basis functions. Secondly, compared with the randomized approach proposed in \cite{Cui1,Cui2}, our approach leads to a lower bound of the optimal average cost as illustrated in Lemma $4$.
\end{remark}
\subsection{Low-complexity Decentralized Policy}
By replacing $V(\textbf{X})$ in (\ref{average transmission cost Objective2}) with $\check{V}(\textbf{X})$ in (\ref{two stage approximation1}), the minimization problem in (\ref{Optimal Policy1}) which determines the optimal policy $\mu^*$ is approximated by:
\begin{problem}[Approximate Joint Pushing and Caching  Optimization]
For all $\textbf{X} \in \bar{\mathcal{F}}^K\times \check{\mathcal{S}}^K$,
%\begin{align}\label{centralized improved policy1}
%&\min \limits_{(\mathbf{P},\Delta \mathbf{S}) \in \mathbf{U}(\textbf{X})}\phi\Big(\sum_{f\in \mathcal{F}} (R_f+P_f)\Big) + \sum_{k\in\mathcal{K}} \sum_{f \in \{f \in \mathcal{F}: S_{k,f}+ \Delta S_{k,f} =1\}} g_k(A_k,f),\nonumber\\
%& \hspace{20mm} \ \forall{\textbf{X}} \in \bar{\mathcal{F}}^K\times \check{\mathcal{S}}^K,
%%&\ \ \ \  \ \ \ \ \ \ \ \ \ \ \ \ \ \ \ \ \ \ \ \ \ \ \
%\end{align}
\begin{align}
%&\min_{(\mathbf{P},\Delta \mathbf{S})} \ \varphi(\mathbf{P},\Delta \mathbf{S})\nonumber \\
&\min_{(\mathbf{P},\Delta \mathbf{S})} \! \ \ \ \ \ \ \ \ \ \ \varphi(\mathbf{P},\Delta \mathbf{S})\! \nonumber\\
%&\hspace{20mm}+\sum_{k\in\mathcal{K}}\sum_{f\in \mathcal{F}:S_{k,f}+ \Delta S_{k,f}=1} g_k(A_k,f)\nonumber \\
&\ \ \ s.t. ~\ \ \ \ \ \ \ \ (\ref{reactive}),(\ref{once}),(\ref{twice}),(\ref{Cache Decision Constraint}),(\ref{Storage Evolution Constraint2}),(\ref{Storage Evolution Constraint1}),\nonumber
\end{align}
where $\varphi(\mathbf{P},\Delta \mathbf{S})\! \triangleq\!\phi\big(\sum_{f\in \mathcal{F}} (R_f\!+\! P_f)\big)\!+\!\sum_{k\in\mathcal{K}}\! \sum_{f\in \mathcal{S}_k'}\! g_k(A_k,f)$, $\mathcal{S}_k' \triangleq \{f\in\mathcal{F}:S_{k,f}\!+\! \Delta S_{k,f}=1\}$ and $g_k(A_k,f)\triangleq \sum_{A_k'\in \bar{\mathcal{F}}} q^{(k)}_{A_k,A_k'}\check{V}^1_k(A_k',f)$. Let $\check{\mu}^*(\textbf{X})$ denote the corresponding optimal solution.
\end{problem}
%$\phi\big(\sum_{f\in \mathcal{F}} (R_f+ P_{f})\big)
 %+\sum_{k\in\mathcal{K}}\min_{\Delta \textbf{S}_k \in U_{\Delta S,k}(X_k,\mathbf{R}+\mathbf{P})}\sum_{f_k \in \mathcal{F}:S_{k,f_k}+ \Delta S_k^1=1} g_k(A_k,f_k)$

 %$\varphi(\check{\mu}^*(\textbf{X}))$ denotes the value of the objective function under the optimal solution denoted as $\check{\mu}^*(\textbf{X})$.% and $R_f$ is given by (\ref{reactive}). %Let $G(\textbf{X})$ denote the optimal value of the objective function.

%From Problem~$2$, note that via value function approximation, under $X_k$, system reactive transmission $\textbf{R}$ in (\ref{reactive}) as well as pushing action $\mathbf{P}$, the corresponding optimal caching action of user $k$ can be obtained separately, while obtaining the optimal pushing action still requires the system state information, primarily due to the multicast transmission. Hence,

Note that due to the coupling among $K$ users incurred by the multicast transmission, solving Problem~$2$ still calls for complexity of $\mathcal {O}\big(K2^F\binom{F}{C}\big)$ and centralized implementation with system state information, which motivates us to develop a low-complexity decentralized policy. Specifically, given system state $\textbf{X}$, first ignore the multicast opportunities in pushing and separately optimize the per-user pushing action of each user $k$ under given state $X_k$ and reactive transmission $\textbf{R}$. Then, the server gathers the information of the per-user pushing actions of all the users and multicasts the corresponding files. Next, each user optimizes its caching action given the files obtained from the multicast transmissions. The details are mathematically illustrated as follows.

First, for all $k\in \mathcal{K}$, replace $P_f$ with $P_{k,f}$ and by adding constraints $P_f = P_{k,f}$, we obtain an equivalent problem of Problem~$2$. The constraint in (\ref{once}) is rewritten as
\begin{equation}\label{per user push11}
P_{k,f}(t) \leq 1 - R_{f}(t),\ \ \ f\in \mathcal{F},\ k \in \mathcal{K},
\end{equation}
which is to guarantee that each file $f\in \mathcal{F}$ is transmitted at most once to user $k$ at each time slot $t$.
The constraints in (\ref{twice}) and (\ref{Cache Decision Constraint}) can be replaced by
\begin{equation}\label{per user push22}
P_{k,f}(t) \leq 1 - S_{k,f}(t),\ \ \ f\in \mathcal{F},\ k \in \mathcal{K},
\end{equation}
\begin{equation}\label{Cache Decision Constraint11}
-S_{k,f}(t) \leq \Delta S_{k,f}(t) \leq R_{f}(t)+ P_{k,f}(t), \ f \in \mathcal{F},\ k \in \mathcal{K}.
\end{equation}
Via omitting the constraints $P_f = P_{k,f}$, $k\!\in\! \mathcal{K}$, we attain a relaxed optimization problem of Problem~$2$. Given $\textbf{R}$, by (\ref{per user push11}) (\ref{per user push22}) and (\ref{Cache Decision Constraint11}), the relaxed problem can be decomposed into $K$ separate subproblems, one for each user, as shown in Problem~$3$. %Define $\mathbf{P}_k\! \triangleq \!(P_{k,f})_{f\in\mathcal{F}}$ as the pushing action for user $k$.

%where (\ref{per user push11}) is to guarantee that each file $f\in \mathcal{F}$ is not transmitted more than once to user $k$ at each time slot $t$ and (\ref{per user push22}) is to guarantee that the file $f$ that has already been cached in the storage of user $k$ is not pushed again at each time slot $t$. Considering the multicast opportunity in pushing, we rewrite the cache update constraints in (\ref{Cache Decision Constraint}) as:

\begin{problem}[Pushing Optimization for User $k$]
For all state $X_k$ and $\textbf{R}$,
%\begin{align}\label{centralized improved policy1}
%&\min \limits_{(\mathbf{P},\Delta \mathbf{S}) \in \mathbf{U}(\textbf{X})}\phi\Big(\sum_{f\in \mathcal{F}} (R_f+P_f)\Big) + \sum_{k\in\mathcal{K}} \sum_{f \in \{f \in \mathcal{F}: S_{k,f}+ \Delta S_{k,f} =1\}} g_k(A_k,f),\nonumber\\
%& \hspace{20mm} \ \forall{\textbf{X}} \in \bar{\mathcal{F}}^K\times \check{\mathcal{S}}^K,
%%&\ \ \ \  \ \ \ \ \ \ \ \ \ \ \ \ \ \ \ \ \ \ \ \ \ \ \
%\end{align}
\begin{align}
&\varphi_k^* \triangleq \min_{\mathbf{P}_k}\  \Big \{\frac{\phi\big(\!\sum_{f\in \mathcal{F}} (R_f\!+\! P_{k,f})\!\big)}{K}\!+\!W_k(X_k,\mathbf{R}+\mathbf{P}_k) \!\Big\}\nonumber\\
%&\hspace{20mm}+\sum_{k\in\mathcal{K}}\sum_{f\in \mathcal{F}:S_{k,f}+ \Delta S_{k,f}=1} g_k(A_k,f)\nonumber \\
&\ \ \ \ \ \ \ \ s.t. ~\ \ \ \ \ \ \ \ (\ref{reactive}),(\ref{Storage Evolution Constraint2}),(\ref{Storage Evolution Constraint1}),(\ref{per user push11}),(\ref{per user push22}),(\ref{Cache Decision Constraint11}),\nonumber
\end{align}
%Denote with $(\mathbf{P}_k^*,\Delta \mathbf{S}_k^*)$ the corresponding optimal solution.
where $W_k(X_k,\mathbf{R}\!+\!\mathbf{P}_k)\!$ $\triangleq$\!\!~$\min_{\Delta \mathbf{S}_k \in U_{\Delta S,k}(X_k,\mathbf{R}+\mathbf{P}_k)}$ $\sum_{f \in \mathcal{S}_k'}\!g_k(A_k,f)$. Let $\textbf{P}^*_k$ denote the optimal solution.
\end{problem}
%\begin{equation}\label{localcache}
%W_k(X_k,\mathbf{R}\!+\!\mathbf{P}_k)\! \triangleq\! \min_{\Delta \mathbf{S}_k \in U_{\Delta S,k}(X_k,\mathbf{R}\!+\!\mathbf{P}_k)}\sum_{f \in \mathcal{S}_k'}\!g_k(A_k,f).
%\end{equation}

Then, we obtain $\textbf{P}_k^*$ as follows. Denote with $\textbf{y}_k(p_k) \triangleq (y_{k,f}(p_k))_{f\in\mathcal{F}}$ the optimal pushing action for user $k$ when the number of pushed files for user $k$ is $p_k$. From the definition of $W_k(X_k,\mathbf{R}\!+\!\mathbf{P}_k)$, we learn that user $k$ always pushes the first $p_k$ files with the minimum values of $g_k(A_k,f)$, $f\!\in\! \{f\!\in\! \mathcal{F}: S_{k,f}\!+\!R_f\!=\!0\}$. Hence, we obtain $\textbf{y}_k(p_k)$ as follows.
Given  $X_k$ and $\textbf{R}$ in (\ref{reactive}), sort the elements in $\mathcal G_k(X_k,\textbf{R})\triangleq \{g_k(A_k,f):  S_{k,f}\!+R_f\! = 0,  f \!\in\! \mathcal{F}\}$ in ascending order, let $f_{k,i}$ denote the index of the file with the $i$-th minimum in $\mathcal G_k(X_k,\textbf{R})$, and we have
\begin{align}\label{update}
 y_{k,f}(p_k) =
 &\left\{
 \begin{array}{l}
  1, \ \  f=f_{k,i}, \ i \leq p_k, \\
  0, \ \ \ \text{otherwise},
 \end{array}
 \right. f\in \mathcal F, \  p_k \in \{0,1,\cdots, |\mathcal G_k(X_k,\textbf{R})|\}.
\end{align}
Based on (\ref{update}), we can easily obtain $\textbf{P}_k^*$, as summarized below.
%In addition, denote with $\mathcal{\chi}_k\triangleq(A_k,\mathbf{S}_k)\in \bar{\mathcal{F}}\times \check{\mathcal{S}}$ the state of user $k$, $\mathbf{T} \triangleq(T_f)_{f\in \mathcal{F}}$ the transmission action and $\check{\mathbf{U}}_{\Delta S}(X_k, \mathbf{T}) \triangleq \{(\Delta S_{k,f})_{f\in\mathcal{F}}\in \{-1,0,1\}^F:  -S_{k,f} \leq \Delta S_{k,f} \leq T_f, S_{k,f}+ \Delta S_{k,f} \in \{0,1\}, \sum_{f\in \mathcal{F}}S_{k,f}+\Delta S_{k,f}= C\}$ the cache action space for user $k$ under state $X_k$ and transmission action $\mathbf{T}$. Then, \textbf{we obtain} $\mathbf{P}_k^*$, $k\in \mathcal{K}$, according to the following lemma.
%the following low-complexity decentralized policy denoted as $\check{\mu} \triangleq(\check{\mu}_P,\check{\mu}_{\Delta S})$.

\textbf{Optimal Solution to Problem~3:} \textit{For all state $X_k$ and $\textbf{R}$, $\textbf{P}_k^* = (y_{k,f}(p_k^*))_{f\in\mathcal{F}}$, where $y_{k,f}(p_k^*)$ is given by (\ref{update}) and $p_k^*$ is given by}
 \begin{align}\label{pno}
 &p_k^* \triangleq \arg\ \! \min_{p_k} \big\{ \phi\big(\sum_{f\in \mathcal{F}} R_f + p_k\big)+ W_k(X_k,\mathbf{R}\!+\!\mathbf{y}_k(p_k))\big\}. %\nonumber\\ &\min_{\Delta \mathbf{S}_k\in \check{\mathbf{U}}_{\Delta S}(X_k, \mathbf{R}+\mathbf{y}_k(p))}\sum_{f \in \mathcal{S}_k'} g_k(A_k,f)\big\}.
 \end{align}
 %\end{statement}
%\begin{proof}
%It follows directly from (\ref{update}).
%\end{proof}

% $p_k^* \triangleq \arg\  \min_p \phi\big(\sum_{f\in \mathcal{F}} R_f + p\big)+ \min_{\Delta \mathbf{S}_k\in \check{\mathbf{U}}_{\Delta S}(X_k, \mathbf{R}+\mathbf{y}_k(p))}\sum_{f \in \mathcal{F}: S_{k,f}+ \Delta S_{k,f} =1} g_k(A_k,f)$. Then we have $\Delta \mathbf{S}^*_k\! = \! \arg \ \min_{\Delta \mathbf{S}_k\in \check{\mathbf{U}}_{\Delta S}(X_k, \mathbf{R}+\textbf{P}_k^*)}\nonumber\\\sum_{f \in \mathcal{F}: S_{k,f}+ \Delta S_{k,f} =1} g_k(A_k,f)$.

Next, based on $\textbf{P}_k^*$, $k\!\in\!\mathcal{K}$, we propose a low-complexity decentralized policy, denoted as $\check{\mu}\! \triangleq\! (\check{\mu}_P,\check{\mu}_{\Delta S})$, which reconsiders the multicast opportunities in pushing. Specifically, for all $\textbf{X}\! \in\! \bar{\mathcal{F}}^K\!\times\! \check{\mathcal{S}}^K$, we have $\check{\mu}_P(\textbf{X}) \triangleq (\check{P}_{f})_{f\in\mathcal{F}}$ and $\check{\mu}_{\Delta S}(\textbf{X})\! \triangleq\!  (\Delta \check{\textbf{S}}_k)_{k\in\mathcal{K}}$, where
\begin{equation}\label{decp}
 \check{P}_{f} \triangleq \max_{k\in \mathcal{K}}\ P^*_{k,f}, \ f\in\mathcal{F},
\end{equation}
\begin{align}\label{decs}
\Delta \check{\textbf{S}}_k \triangleq  \arg \  \min_{\Delta \mathbf{S}_k\in \check{\mathbf{U}}_{\Delta S}(X_k, \mathbf{R}+\check{\mu}_P(\textbf{X}))}\sum_{f \in \mathcal{S}_k'} g_k(A_k,f),\ k\in \mathcal{K}.
\end{align}

%respectively. where $(\mathbf{P}_k^*,\Delta \mathbf{S}_k^*)$ is given by Lemma $5$.
%$\check{\mu}_P(\textbf{X}) \triangleq (\check{P}_f^*)_{f\in \mathcal{F}}$ with $\check{P}_f^* = \max_{k\in \mathcal{K}} y_{k,f}(p_k^*)$, where $y_{k,f}(p_k^*)$ is given by (\ref{update}) and  $p_k^* \triangleq \arg\  \min_p \phi\big(\sum_{f\in \mathcal{F}} R_f + p\big)+ \min_{\Delta \mathbf{S}_k\in \check{\mathbf{U}}_{\Delta S}(X_k, \mathbf{R}+\mathbf{y}_k(p))}\sum_{f \in \mathcal{F}: S_{k,f}+ \Delta S_{k,f} =1} g_k(A_k,f)$. In addition, we have $\check{\mu}_{\Delta S}(\textbf{X}) \triangleq (\Delta \check{\mathbf{S}}_k^*)_{k \in \mathcal{K}}$ with $\Delta \check{\textbf{S}}_k\! = \! \arg \ \min_{\Delta \mathbf{S}_k\in \check{\mathbf{U}}_{\Delta S}(X_k, \mathbf{R}+\check{\mu}_P(\textbf{X}))}\sum_{f \in \mathcal{F}: S_{k,f}+ \Delta S_{k,f} =1} g_k(A_k,f)$. %given $X_k$, $\mathbf{R}$ and $\check{\mu}_P(\textbf{X})$.
Finally,\! we characterize\! the performance of $\check{\mu}$. Lemma\!~$5$ illustrates the relationship among the optimal values of Problem~$2$ and Problem~$3$ as\! well\! as the objective\! value\! of Problem~$2$ at\! $\check{\mu}$.%, denoted as $\varphi(\check{\mu}(\textbf{X}))$.}

\begin{lemma} For all $\textbf{X} \in \bar{\mathcal{F}}^K\times \check{\mathcal{S}}^K$, $\sum_{k\in \mathcal{K}}\varphi_k^*\!\leq\! \varphi(\check{\mu}^*(\textbf{X}))\! \leq \! \varphi(\check{\mu}(\textbf{X}))$, where the equality holds if and only if $X_{k_1}\!=\!X_{k_2}$ and $\textbf{Q}_{k_1}\!=\!\textbf{Q}_{k_2}$ for all $k_1\!\in\! \mathcal{K}$ and $k_2\!\in\! \mathcal{K}$.
\end{lemma}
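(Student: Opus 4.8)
The plan is to split the sandwich into its two inequalities, prove each directly, and then characterize tightness. For the right inequality $\varphi(\check{\mu}^*(\textbf{X}))\le\varphi(\check{\mu}(\textbf{X}))$ I would first check that the decentralized action produced by (\ref{decp})--(\ref{decs}) is feasible for Problem~$2$. The caching part is feasible by construction, since $\Delta\check{\textbf{S}}_k$ in (\ref{decs}) is chosen inside the (reduced) caching action space. For the pushing part, because each $\textbf{P}_k^*$ obeys (\ref{per user push11}) and (\ref{per user push22}), setting $\check P_f=\max_k P_{k,f}^*$ preserves (\ref{once}) (as $\check P_f\le 1-R_f$) and (\ref{twice}) (if $\check P_f=1$ then some user has $P_{k,f}^*=1$, forcing $S_{k,f}=0$, hence $\min_k S_{k,f}=0$). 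Once feasibility is established, the inequality is immediate from the fact that $\check{\mu}^*(\textbf{X})$ minimizes $\varphi$ over all feasible actions while $\check{\mu}(\textbf{X})$ is merely one feasible choice.

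For the left inequality $\sum_{k\in\mathcal K}\varphi_k^*\le\varphi(\check{\mu}^*(\textbf{X}))$ I would argue by explicit construction rather than the informal ``relax the coupling'' statement. Writing $(\textbf{P}^*,\Delta\textbf{S}^*)=\check{\mu}^*(\textbf{X})$ and $\mathcal S_k^{*}\triangleq\{f:S_{k,f}+\Delta S_{k,f}^*=1\}$, set for each $k$ the per-user push $P_{k,f}:=P_f^*(1-S_{k,f})$. This is feasible for Problem~$3$: it satisfies (\ref{per user push22}) by construction, satisfies (\ref{per user push11}) because $P_f^*=0$ whenever $R_f=1$, and it keeps $\Delta\textbf{S}_k^*$ feasible under (\ref{Cache Decision Constraint11}), since the only pushed files dropped at user $k$ are those already cached, for which $\Delta S_{k,f}^*\le 0$. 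Because $\sum_f P_{k,f}\le\sum_f P_f^*$ and $\phi$ is increasing, $\tfrac1K\phi(\sum_f(R_f+P_{k,f}))\le\tfrac1K\phi(\sum_f(R_f+P_f^*))$, while the caching term obeys $W_k(X_k,\textbf{R}+\textbf{P}_k)\le\sum_{f\in\mathcal S_k^{*}}g_k(A_k,f)$. Hence $\varphi_k^*\le\tfrac1K\phi(\sum_f(R_f+P_f^*))+\sum_{f\in\mathcal S_k^{*}}g_k(A_k,f)$, and summing over $k$ collapses the $K$ identical $\tfrac1K\phi$ terms into the single $\phi(\sum_f(R_f+P_f^*))$, yielding exactly $\varphi(\check{\mu}^*(\textbf{X}))$.

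The equality characterization is the delicate part. Sufficiency is direct: if $X_{k_1}=X_{k_2}$ and $\textbf{Q}_{k_1}=\textbf{Q}_{k_2}$ for all $k_1,k_2$, then the per-user value functions $\check V_k^1$ and the functions $g_k$ coincide, the shared $\textbf{R}$ and the eligible sets $\{f:S_{k,f}+R_f=0\}$ coincide, so the sorting in (\ref{update}) and the selection (\ref{pno}) produce a common per-user solution $\textbf{P}_k^*\equiv\textbf{P}^*$; then $\check P_f=\max_k P_{k,f}^*=P_f^*$, the multicast recombination adds no files, and both bounds collapse onto $\varphi(\check\mu)$. For necessity I would trace tightness back through both inequalities: tightness on the left forces $\sum_f P_f^* S_{k,f}=0$ for every $k$ (strict monotonicity of $\phi$ makes each push-count inequality an equality) and forces the common push $\textbf{P}^*$ to be simultaneously optimal in every per-user Problem~$3$; tightness on the right forces $\check\mu$ itself to be Problem-$2$ optimal, so $\check P_f=\max_k P_{k,f}^*$ yields no multicast gain. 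Since $g_k$ is determined by $(A_k,\textbf{Q}_k)$ through $\check V_k^1$, I expect these joint conditions to be realizable only when every per-user optimum coincides, which unwinds to $X_{k_1}=X_{k_2}$ and $\textbf{Q}_{k_1}=\textbf{Q}_{k_2}$.

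The main obstacle is exactly this necessity direction. The end-to-end gap $\varphi(\check\mu)-\sum_k\varphi_k^*$ decomposes into a pushing term that strict convexity of $\phi$ renders strictly positive once the per-user push counts differ, plus a caching term that can be strictly negative because the larger union push enlarges each user's caching feasible set and hence lowers $W_k$. Ruling out an accidental cancellation between these two terms, and thereby pinning down that exact equality forces identical users, will require carefully coupling the strict convexity of $\phi$ with the monotonicity of $W_k$ and the explicit threshold structure of the per-user pushing solution in (\ref{update})--(\ref{pno}).
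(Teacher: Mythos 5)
Your two inequalities and the sufficiency half of the equality claim are correct, and they follow the same route as the paper's own proof, only with the details filled in: the paper dispatches the left inequality in one sentence (Problem~3 is a relaxation of Problem~2, so enlarging the action space can only lower the optimal value), the right inequality in another (feasibility plus suboptimality of $\check{\mu}$), and the equality case by observing that identical users give $\textbf{P}_{k_1}^*=\textbf{P}_{k_2}^*=\check{\mu}_P(\textbf{X})$. Your explicit map $P_{k,f}:=P_f^*(1-S_{k,f})$ from a Problem-2 optimum to per-user feasible actions (with the observation that dropped pushes only concern files with $S_{k,f}=1$, for which $\Delta S_{k,f}^*\le 0$), and your check that $\check P_f=\max_k P_{k,f}^*$ preserves (\ref{once}) and (\ref{twice}), are exactly the content hiding behind those one-liners, and both are sound.

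The necessity direction that you flag as the main obstacle is not an obstacle you failed to clear; it is absent from the paper too. Despite the lemma's ``if and only if'' wording, the paper's proof establishes only the ``if'' direction and is silent on ``only if.'' Moreover, your suspicion that cancellation cannot be ruled out is well founded, because necessity is in fact false. Take $K=2$, $F=2$, $C=1$, both users idle ($A_1=A_2=0$) with file $1$ cached, where user $1$ requests each file with probability $\epsilon$ from the idle state, user $2$ requests them with probabilities $\epsilon+\delta$ and $\epsilon-\delta$, and both return to idle after any request (all chains irreducible). Then $g_1(0,1)=g_1(0,2)$ by symmetry and $g_2(0,1)\le g_2(0,2)$ since file $1$ is the more popular file for user $2$, so in (\ref{pno}) pushing file $2$ costs $\phi(1)/K>0$ and buys nothing; hence $\textbf{P}_1^*=\textbf{P}_2^*=\check{\mu}_P(\textbf{X})=\textbf{0}$, and, since the reduced caching space used in (\ref{decs}) keeps every cache full, all three quantities in the lemma collapse to $g_1(0,1)+g_2(0,1)$ even though $\textbf{Q}_1\neq\textbf{Q}_2$. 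So the defensible statement is that the identical-user condition is sufficient for equality; your write-up already proves everything the paper's proof actually proves, and you should not spend further effort trying to establish necessity.
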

\begin{proof}
Due to the relaxation from Problem $2$ to Problem $3$, the action space becomes larger and thus we can show the first inequality directly. Due to the suboptimality of $\check{\mu}$, the second inequality holds. Furthermore, when $X_{k_1}\!=\!X_{k_2}$ and $\textbf{Q}_{k_1}\!=\!\textbf{Q}_{k_2}$ for all $k_1\!\in\! \mathcal{K}$ and $k_2\!\in\! \mathcal{K}$, $\textbf{P}_{k_1}^* = \textbf{P}_{k_2}^* = \check{\mu}_P(\textbf{X})$ and thus the equality holds. We complete the proof.
\end{proof}

\begin{remark}[Computational Complexity and Implementation Requirement]
Given $\check{V}^1_k(\cdot)$, for all $\textbf{X} \in \bar{\mathcal{F}}^K\times \check{\mathcal{S}}^K$, the complexity of computing $\check{\mu}(\textbf{X})$ is $\mathcal {O}\big(\!KF\!\log(F)\!\big)$ much lower than that of computing $\mu^*(\textbf{X})$, i.e., $\mathcal {O}\big(K2^F\binom{F}{C}\big)$. Furthermore, we note that $\check{\mu}$ can be implemented in a decentralized manner. Specifically, first, each user submits its request $A_k$ if $A_k\in \{f\in \mathcal{F}:S_{k,f}=0\}$. %its request which could not be served by its local cache.
Then the server broadcasts the corresponding file indexes $\{f\in \mathcal{F}: \max_{k\in \mathcal{K}: A_k = f} (1-S_{k,f})=1\}$, which implies $\mathbf{R}$. Next, based on $X_k$ and $\mathbf{R}$, user $k$ computes $\mathbf{P}_k^*$ and reports it to the server. Finally, the server obtains $\check{\mu}(\textbf{X})$ and transmits the files in $\{f\in \mathcal{F}:R_f + \check{P}_f \geq 1\}$, based on which user $k$ obtains $\Delta \check{\mathbf{S}}_k$.
\end{remark}
\section{Online Decentralized Algorithm}
To implement the low-complexity decentralized policy $\check{\mu}$ proposed in Section V, we need to compute $g_k(X_k^1)=\sum_{A_k'\in \bar{\mathcal{F}}} q^{(k)}_{A_k,A_k'}\check{V}^1_k(A_k',f_k)$, requiring priori knowledge of the transition matrices of the $K$ user demand processes, i.e., $\textbf{Q}_k$, $k\!\in\!\mathcal{K}$. In this section, we propose an online decentralized algorithm (ODA) to implement $\check{\mu}$ via Q-learning \cite{bertsekas}, when $\textbf{Q}_k$ is unknown.
%\footnote{\textcolor{blue}{Offline policy operates based on $\textbf{Q}_k$ and needs computing only once, while online policy operates without $\textbf{Q}_k$, $k\!\in\!\mathcal{K}$ and updates the decision by exploiting the instantaneous information of users' requests.}}
\begin{algorithm}
 %\floatname{algorithm}{Online Decentralized Algorithm (ODA)}
 %\setcounter{algorithm}{0}
 \caption{Online Decentralized Algorithm (ODA)}
\label{Algorithm2}
\small{\begin{algorithmic}[1]
%\STATE \textbf{Input}. $K,F,\{S_f, \forall f\},\{C_k, \forall k\}$
%\STATE \textbf{Output}. Cache placement \textbf{C}
\STATE \textbf{Initialization}. Set $t=0$. Each user $k$ initializes $\mathcal{Q}_{k,t}(\cdot)$.
\STATE  \textbf{Per-User Per-File Q-factor Update}.
 At the beginning of the $t$th slot, $t\geq 1$, each user $k$ updates $\mathcal{Q}_{k}(\cdot)$ according to
\begin{align}\label{q-learning}
&\mathcal{Q}_{k,t}(X_k^1,\Delta S_k^1)\!=\! \mathcal{Q}_{k,t-1}(X_k^1,\Delta S_k^1)\! +\!\gamma(v_{k,t-1}(A_k))\textbf{1}(A_k(t\!-\!1)=A_k)\big( \phi'(X_k^1) + \nonumber\\ & \min \limits_{\Delta S_{k}^{1'}\! \in U_k^1(X_{k}^{1'})}\! \mathcal{Q}_{k,t-1}\big(\!X^{1'}_{k},\Delta\! S_{k}^{1'}\!)-\!\mathcal{Q}_{k,t-1}(X_k^1,\Delta S_k^1)\!-\!\min \limits_{\Delta S_{k,0}^1 \in U_k^1(X_{k,0}^1)}
\mathcal{Q}_{k,t-1}(X_{k,0}^1,\Delta S_{k,0}^1)\big), \nonumber \\ &\hspace{80mm}X_k^1 \in \bar{\mathcal{F}} \times \mathcal{F}, \Delta S_k^1 \in U_k^1(X_{k}^1),
\end{align}
where $X_k^1\triangleq (A_k,f_k)$, $X_{k}^{1'} \triangleq (A_k(t),f_k')$ and $v_{k,t}(A_k)$ denotes the number of times that $A_k \in \bar{\mathcal{F}}$ has been requested by user $k$ up to $t$, and then updates $g_{k,t}(X_k^1)$, $X_k^1\! \in\! \bar{\mathcal{F}}\times\mathcal{F}$ according to
 \begin{align}\label{gupdate}
&g_{k,t}(X_k^1) = \min_{\Delta S_{k,0}^1 \in U_k^1(X_{k,0}^1)} \mathcal{Q}_{k,t}(X_{k,0}^1,\Delta S_{k,0}^1)+ \mathcal{Q}_{k,t}(X_k^1,0) - \phi'(X_k^1),\ X_k^1\! \in\! \bar{\mathcal{F}}\times\mathcal{F}.
\end{align}
\STATE \textbf{Reactive Transmission Message}. Each user $k$ submits $A_k(t)$ if $A_k(t)\!\in\! \{f\in \mathcal{F}:\! S_{k,f}(t)\!=\!0\}$. Then the server broadcasts the file indexes $\{f\in \mathcal{F}: \max_{k\in \mathcal{K}: A_k(t) = f} (1-S_{k,f}(t))=1\}$.
%$A_k \neq 0$ and $S_{k,A_k} = 0$
\STATE  \textbf{Per-User Pushing Computation}. Each user $k$ constructs $\mathbf{R}(t)$.
Given $X_k(t)$, $\mathbf{R}(t)$ and $g_{k,t}(A_k(t),f_k)$, $f_k \in \mathcal{F}$, user $k$ computes $\textbf{P}^*_k(t)$ and then reports it to the server.

 %Next, given $\textbf{X}_k(t)$, $\mathbf{R}(t)$ and $\{\mathcal{Q}_{k,t}(X_k^1(t),\Delta S_{k,f}(t))\}$, user $k$ obtains $g_{k,t}(X_k^1(t))$ with $X_k^1(t) \triangleq (A_k(t),f)$ for $f \in \mathcal{F}$ according to (\ref{gupdate}), based on which user $k$ computes $\textbf{P}^*_k(t)$ following Lemma $5$ and then reports it to the server.

%At the beginning of the $t$th slot, each user $k$ submits its request which could not be served by its local cache. Then the server broadcasts the corresponding file indexes $\{f: \min_{k \in \mathcal{K}: A_k(t) = f}  1-S_{k,f}(t)\}$, based on which each user constructs $\mathbf{R}(t)$. Next, given $\textbf{X}_k(t)$, $\mathbf{R}(t)$ and $\{\mathcal{Q}_{k,t}(X_k^1(t),\Delta S_{k,f}(t))\}$, user $k$ obtains $g_{k,t}(X_k^1(t))$ with $X_k^1(t) \triangleq (A_k(t),f_k(t))$ for $f_k(t) \in \{f\in \mathcal{F}: S_{k,f}(t) = 1\}$ according to (\ref{gupdate}), based on which user $k$ computes $(P_{k,f}^*(t))_{f\in \mathcal{F}}$ following (\ref{localpush}) and then reports it to the server.

%where we recall that $g_k(i,f)\triangleq\sum_{j\in \bar{\mathcal{F}}} q^{(k)}_{i,j}\check{V}^1_k(j,f)$. (\ref{gupdate})
%which holds according to (\ref{g}) and the fact that $\lim_{t\rightarrow \infty} \min_{\Delta S_{k,0}^1 \in \{0,-1\}} \mathcal{Q}_{k,t}(X_{k,0}^1,\Delta S_{k,0}^1) = \check{\theta}^1_{k}$.
\STATE  \textbf{Multicast Transmission at Server}. The server obtains $\check{P}_f(t)$ in (\ref{decp}) and multicasts the files in $\{f\in \mathcal{F}:R_f(t) + \check{P}_f(t) = 1\}$.
\STATE  \textbf{Per-User Caching}. Each user $k$ updates its own cache state $\textbf{S}_k(t)$ according to $\Delta \check{\mathbf{S}}_k(t)$ in (\ref{decs}).
%and $A_k'$ denotes user $k$'s next demand state. % and $\bar{\phi}(X_k^1,\Delta S_k^1,A_k')\! \triangleq \!
%\min \limits_{\Delta S_{k,f'} \in U_k^1(X_{k}^1)}\! \mathcal{Q}_{k,t}\big(X'_{k,f},\Delta S_{k,f'}\!\big)+\phi'(X_k^1)$.
\STATE Set $t\leftarrow t+1$ and go back to Step $2$.
\end{algorithmic}}
\end{algorithm}
%However, in practice, $(\textbf{Q}_k)_{k\in\mathcal{K}}$ may not be available.

First, introduce the\! Q-factor $\mathcal{Q}_k(\!X_k^1,\!\Delta \!S_k^1)$ of the per-user\! per-file state-action pair $(\!X_k^1,\!\Delta S_k^1)$ as
%we define $(X_k^1,\Delta S_k^1)$ as the per-user per-file state-action pair for user $k$ and denote with $\mathcal{Q}_k(X_k^1,\Delta S_k^1)$ the corresponding Q-factor, where
%introduce the Q-factor denoted as $\mathcal{Q}_k(X_k^1,\Delta S_k^1)$ with $(X_k^1,\Delta S_k^1)$ representing the per-user per-file state-action pair, where
%Specifically, denote with $(X_k^1,\Delta S_k^1)$ the per-user per-file state-action pair and $\mathcal{Q}_k(X_k^1,\Delta S_k^1)$ the corresponding Q-factor defined as
 \begin{align}\label{Qfactor}
&\theta_k^1 + \mathcal{Q}_k(X_k^1,\Delta S_k^1)\triangleq \phi'(X_k^1)\!+\!\sum_{A_k'\in \bar{\mathcal{F}}} q^{(k)}_{A_k,A_k'}V_k^1(A_k',f_k'),\ \!X_k^1\! \in\! \bar{\mathcal{F}}\!\times\! \mathcal{F},\ \Delta S_k^1 \in U_k^1(X_k^1).%, \ k \in \mathcal{K},
\end{align}
By (\ref{per user per content11}) and (\ref{Qfactor}), we have
\begin{align}\label{valueupdate1}
&\check{V}^1_k(X_k^1) = \min_{\Delta S_k^1 \in U_k^1(X_k^1)} \mathcal{Q}_k(X_k^1,\Delta S_k^1), \ X_k^1 \in \bar{\mathcal{F}}\times \mathcal{F},
\end{align}
%and %$\mathcal{Q}_k(X_k^1,\Delta S_k^1)$ satisfies %the Q-factor form of the Bellman equation in (\ref{per user per content1}), i.e.,
 \begin{align}\label{Qfactor2}
&\theta_k^1 + \mathcal{Q}_k(X_k^1,\Delta S_k^1)= \phi'(X_k^1) +\!\sum_{A_k'\in \bar{\mathcal{F}}} q^{(k)}_{A_k,A_k'}\min_{\Delta S_{k}^{1'} \in U_k^1(X_k^{1'})} \mathcal{Q}_k((A_k',f_k'),\Delta S_{k}^{1'}),\nonumber\\ & \hspace{95mm} X_k^1 \in \bar{\mathcal{F}}\times \mathcal{F}, \Delta S_k^1 \in U_k^1(X_k^1).%, \ k \in \mathcal{K},
\end{align}
%where $\min_{\Delta S_{k,0}^1 \in U_k^1(X_{k,0}^1)}\mathcal{Q}_{k}(X_{k,0}^1,\Delta S_{k,0}^1)$ is expected to converge to $\theta_k^1$ \cite{bertsekas} with $X_{k,0}^1$ as the reference state chosen randomly and $X_k^{1'} \triangleq (A_k',f_k')$.
Then, by (\ref{Qfactor}) and (\ref{Qfactor2}), we can express $g_k(X_k^1)$ as a function of the Q-factor $\mathcal{Q}_k(\cdot)$, $k\in \mathcal{K}$, i.e.,
\begin{align}\label{g}
&g_k(X_k^1) = \theta_k^1+ \mathcal{Q}_k(X_k^1,0) - \phi'(X_k^1), \ \ \ X_k^1\in\! \bar{\mathcal{F}}\times \mathcal{F}.
\end{align}
Recall that $\check{\mu}$ given in (\ref{decp}) and (\ref{decs}) is expressed in terms of $g_k(X_k^1)$, $k\in \mathcal{K}$. From (\ref{g}), we learn that $\check{\mu}$ can be determined by the Q-factor $\mathcal{Q}_k(\cdot)$, $k\in \mathcal{K}$. Considering that $\check{\mu}$ cannot be obtained directly via minimizing the corresponding Q-factor, the standard Q-learning algorithm cannot be used to implement $\check{\mu}$ online. Next, we propose the ODA, as shown in Algorithm~$1$, to learn $\mathcal{Q}_k(\cdot)$ and implement $\check{\mu}$ online when $\textbf{Q}_k$, ${k\in\mathcal{K}}$ are unknown. In particular, the stepsize $\gamma(\cdot)$ in the ODA satisfies that $0\!<\!\gamma(n)<\infty$, $\sum_{n=1}^\infty\! \gamma(n)\! =\! \infty$ and $\sum_{n=1}^\infty \gamma^2(n)\! <\! \infty$. Based on the convergence results of Q-learning in \cite{bertsekas}, we can easily show that $\lim_{n\rightarrow \infty}\mathcal{Q}_{k,n}(X_k^1,\Delta S_k^1)\!=\!\mathcal{Q}_{k}(X_k^1,\Delta\! S_k^1)$ almost surely. %\cite{bertsekas}. %and $\lim_{n\rightarrow \infty}\min_{\Delta S_{k,0}^1 \in \{0,-1\}} \mathcal{Q}_{k,n}(X_{k,0}^1,\Delta S_{k,0}^1) = \check{\theta}^1_{k}$, almost surely \cite{bertsekas}.}

%By substituting $g_k(X_k^1)$ with (\ref{g}), from Lemma $5$, (\ref{decp}) and (\ref{decs}), we learn that the low-complexity decentralized policy depends totally on the Q-factor $\mathcal{Q}_k(\cdot)$ and then propose an online decentralized algorithm (ODA) shown in Algorithm $1$ based on the standard Q-learning algorithm, thereby implementing the low-complexity decentralized policy online without requirement for priori knowledge of user demand processes. In particular, to guarantee the convergence of the corresponding Q-learning algorithm, we assume that the sequence of the step size $\{\gamma(v_{k,t}(A_k))\}$ in (\ref{q-learning}) for each user $k$ satisfies that $\sum_t \gamma(v_{k,t}(A_k))\! =\! \infty$ and $\sum_t \gamma(v_{k,t}(A_k))^2 < \infty$, which ensures that the sequence $\{\mathcal{Q}_{k,t}(\cdot), t = 0,1,\cdots\}$ converges to the optimal solution to (\ref{Qfactor2}) almost surely.
 \begin{remark}[Illustration of the ODA] The proposed ODA differs from the standard Q-learning algorithm in the following two facets. Firstly, for each per-user per-file MDP, at each time slot, instead of updating the Q-factor at the currently sampled state-action pair, it updates the Q-factors at a set of state-action pairs with the current demand state, thereby speeding up the convergence. %It is because $\{A_k(t)\}$, $k\in \mathcal{K}$ are independent of the cache state and the system action and also the per-stage cost $\phi'(X_k^1)$ depends on $X_k^1$ totally.
Secondly, when learning the Q-factors of the $K$ per-user per-file MDPs, it implements a policy which cannot be directly obtained from the optimal policies of the $K$ per-user per-file MDPs.
\end{remark}

\section{Numerical Results}
In this section, we first evaluate the convergence of our proposed ODA and then compare it with five baselines. Specifically, we consider three baselines which operate based on priori knowledge of $\textbf{Q}_k$, $k\in\mathcal{K}$: the aforementioned MP caching policy in Section III, local most popular (LMP) caching policy in which at each time slot $t$, each user $k$ stores the $C$ files in $\{f\in \mathcal{F}: R_f(t)+S_{k,f}(t) =1\}$ with the largest transition probabilities given current demand state $A_k(t)$ \cite{WebCorrelation}, as well as joint threshold-based pushing \cite{Thres} and local most popular caching policy (TLMP) where at each time slot $t$, the server pushes the file $f^*(t) \triangleq \arg \max_{f\in \mathcal{F}}\! \sum_{k\in \mathcal{K}:S_{k,f}(t)=0}\! q^{(k)}_{A_k(t),f}$ if and only if $\sum_{f\in \mathcal{F}}\! R_f(t)$ is below a threshold $T$, and each user implements the LMP caching policy. Note that MP and LMP are of the same complexity order, i.e., $\mathcal {O}\big(\!KF\!\log(F)\!\big)$, while the complexity order of TLMP is $\mathcal {O}\big(\!KF^2\!\log(F)\!\big)$.
In addition, we consider two other baselines which operate without priori knowledge of $\textbf{Q}_k$, $k\in\mathcal{K}$, and make caching decisions based on instantaneous user demand information, i.e., LRU and LFU. They are of the same complexity order, i.e., $\mathcal {O}\big(\!KF\big)$. In the simulation, we consider $\textbf{Q}_k = \textbf{Q}$ for all $k\in \mathcal{K}$ and adopt $\textbf{Q} \triangleq (q_{i,j})_{i\in \bar{\mathcal{F}},j\in \bar{\mathcal{F}}}$ similar to the demand model in \cite{WebCorrelation}, where $q_{i,j}$ is given by
\begin{align}\label{matrix}
 q_{i,j} \triangleq
 &\left\{
  \begin{array}{l}
  Q_0, \ \ \  i \in \bar{\mathcal{F}},\ j = 0,\\
  (1-Q_0)\frac{\frac{1}{j^{\gamma}}}{\sum_{j'=1}^F\frac{1}{j'^{\gamma}} }, \ \ i=0,\  j \in \mathcal{F},\\
  (1-Q_0)\frac{1}{N}, \ \ i \in \mathcal{F},\  j = (i+q)\!\mod(F+1) ,\ q \in \{1,2,\cdots,N\}, \\
  0, \ \ \ \text{otherwise}.
 \end{array}
 \right.
\end{align}
Note that $\textbf{Q}$ is parameterized by $\{Q_0,\gamma,N\}$. Specifically, $Q_0$ denotes the transition probability of requesting nothing given any current file request. The transition probability of requesting any file $f\in \mathcal{F}$ given no current file request, i.e., $i=0$, is modeled by a Zipf distribution parameterized by $\gamma$. For any $i \in \mathcal{F}$, we assign a set of neighboring files, i.e., $\mathcal{N}_i \triangleq \{f\in \mathcal{F}:f = (i+q)\!\mod(F+1) ,\ q = 1,2,\cdots,N\}$, where $N$ represents the number of neighbors. Then, the transition probability of requesting any file $f \in \mathcal{N}_i$ given the current file request $i\in \mathcal{F}$ is modeled by the uniform distribution. The transition probability of requesting any file $f \notin \mathcal{N}_i$ given current file request $i \in \mathcal{F}$ is zero. In the simulation, we set $F=100$, $T\!=\!1$, $N\!=\!2$, $Q_0\! =\! 0.2$ and $\gamma\!=\!0.5$ unless otherwise stated.

\begin{figure*}
\begin{center}
  \subfigure[Convergence at $K\! = \!10$,\! $C\!=\!~5$.]
 {\resizebox{7.1cm}{!}{\includegraphics{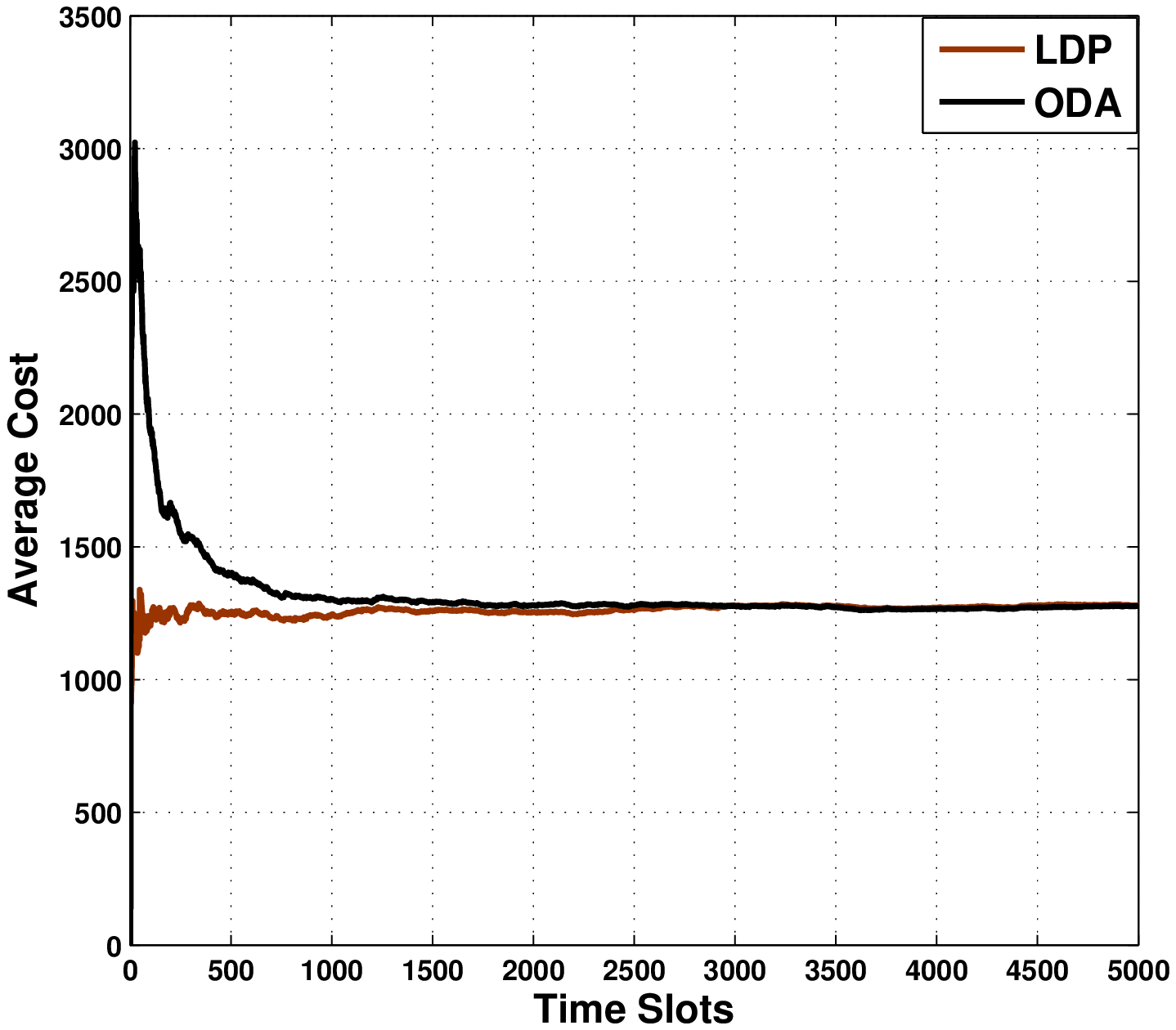}}}\quad\quad
 \subfigure[Cache size at $K = 10$.]
  {\resizebox{7.6cm}{!}{\includegraphics{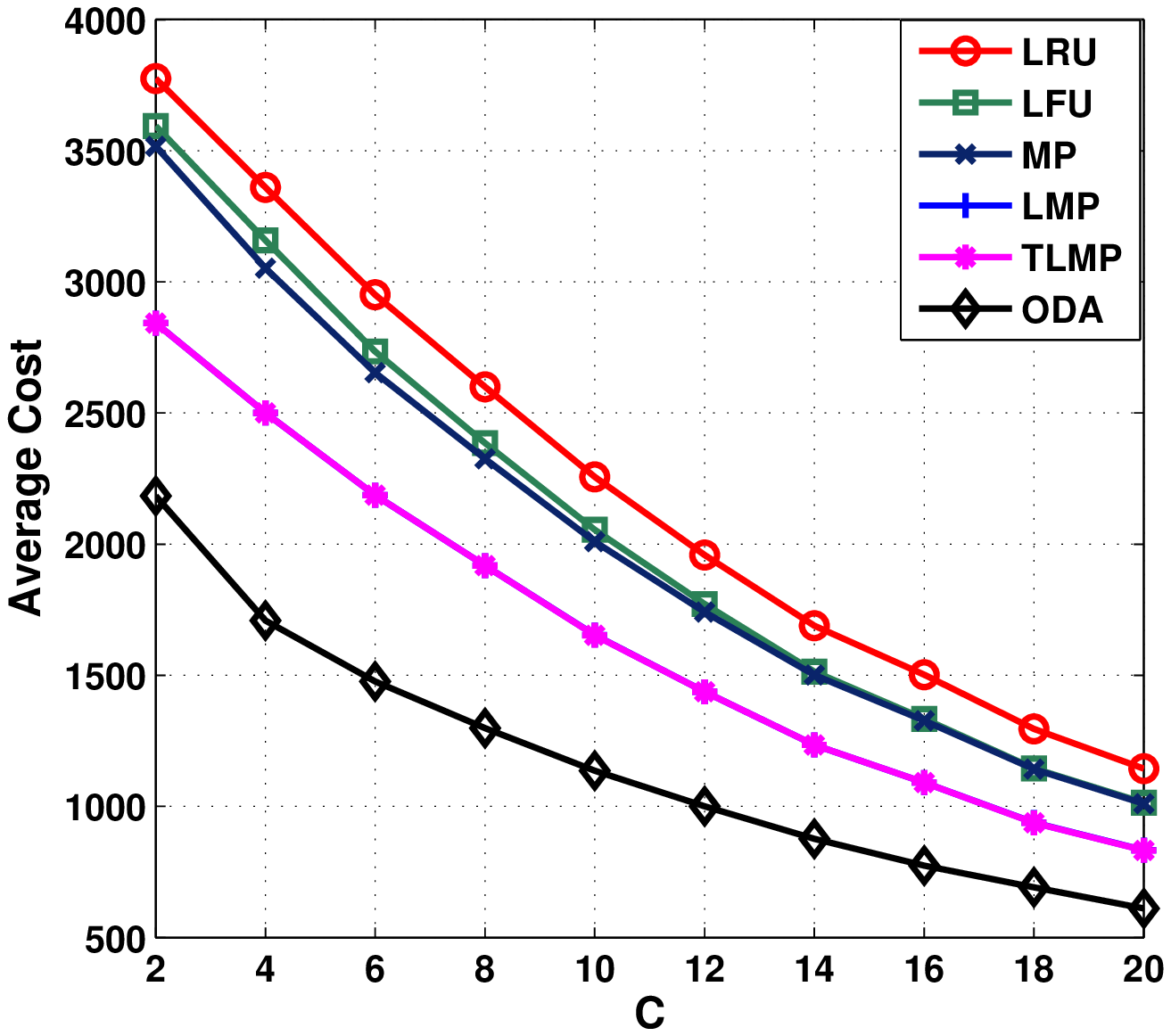}}}\quad\quad
 %{\resizebox{6.0cm}{!}{\includegraphics{F100_C.eps}}}\quad\quad
% {\includegraphics[height=1.5cm,width=3cm]{x.eps}}\quad\quad
 \subfigure[Number of users at $C = 10$.]
 {\resizebox{7cm}{!}{\includegraphics{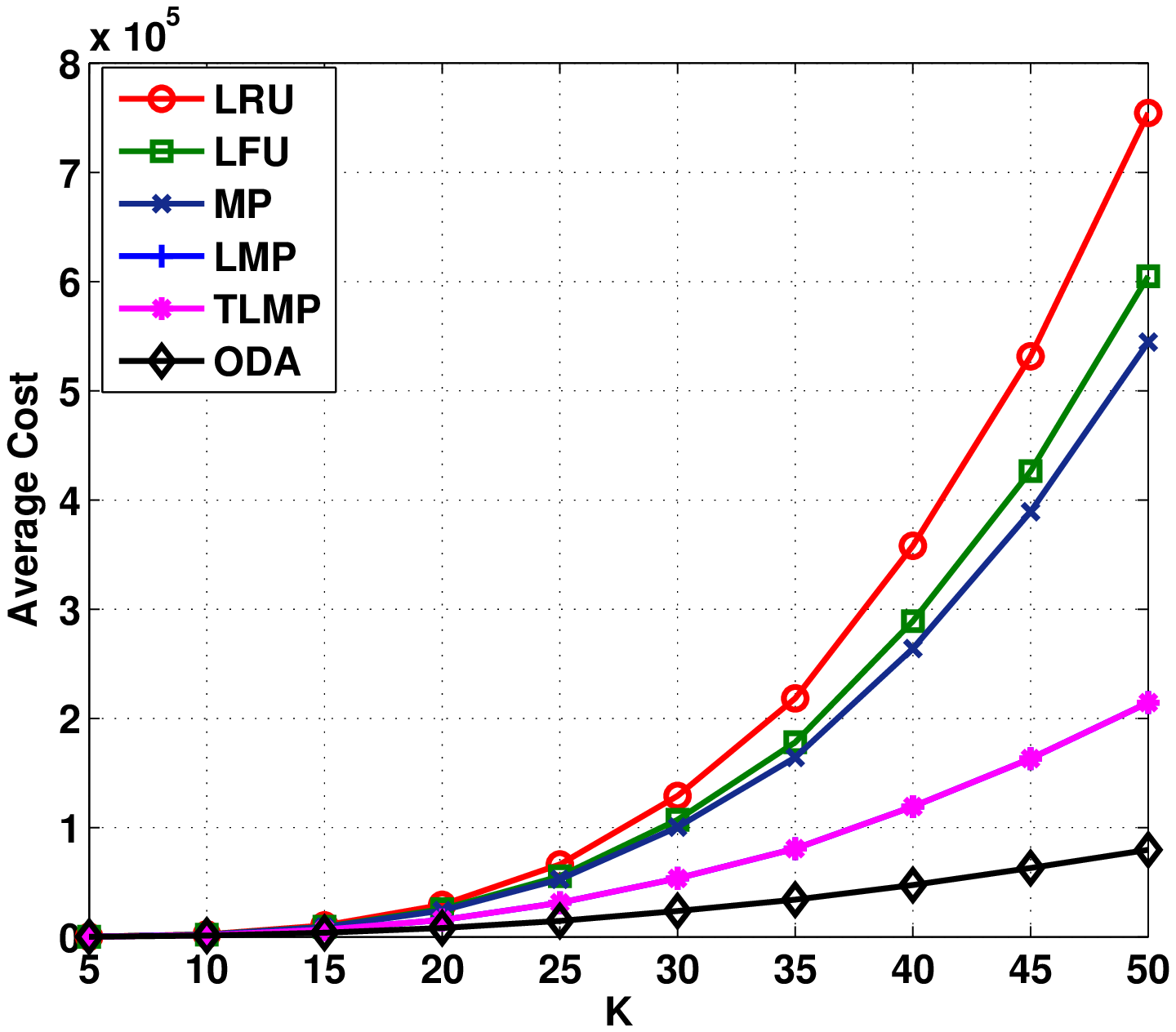}}}\quad\quad
 %{\resizebox{6.4cm}{!}{\includegraphics{F100_K.eps}}}\quad\quad
% {\includegraphics[height=1.5cm,width=3cm]{x.eps}}\quad\quad
 \subfigure[N at $C\!=~\!10$,~$K\!=\!10$,~$\gamma=1$.]
 {\resizebox{7cm}{!}{\includegraphics{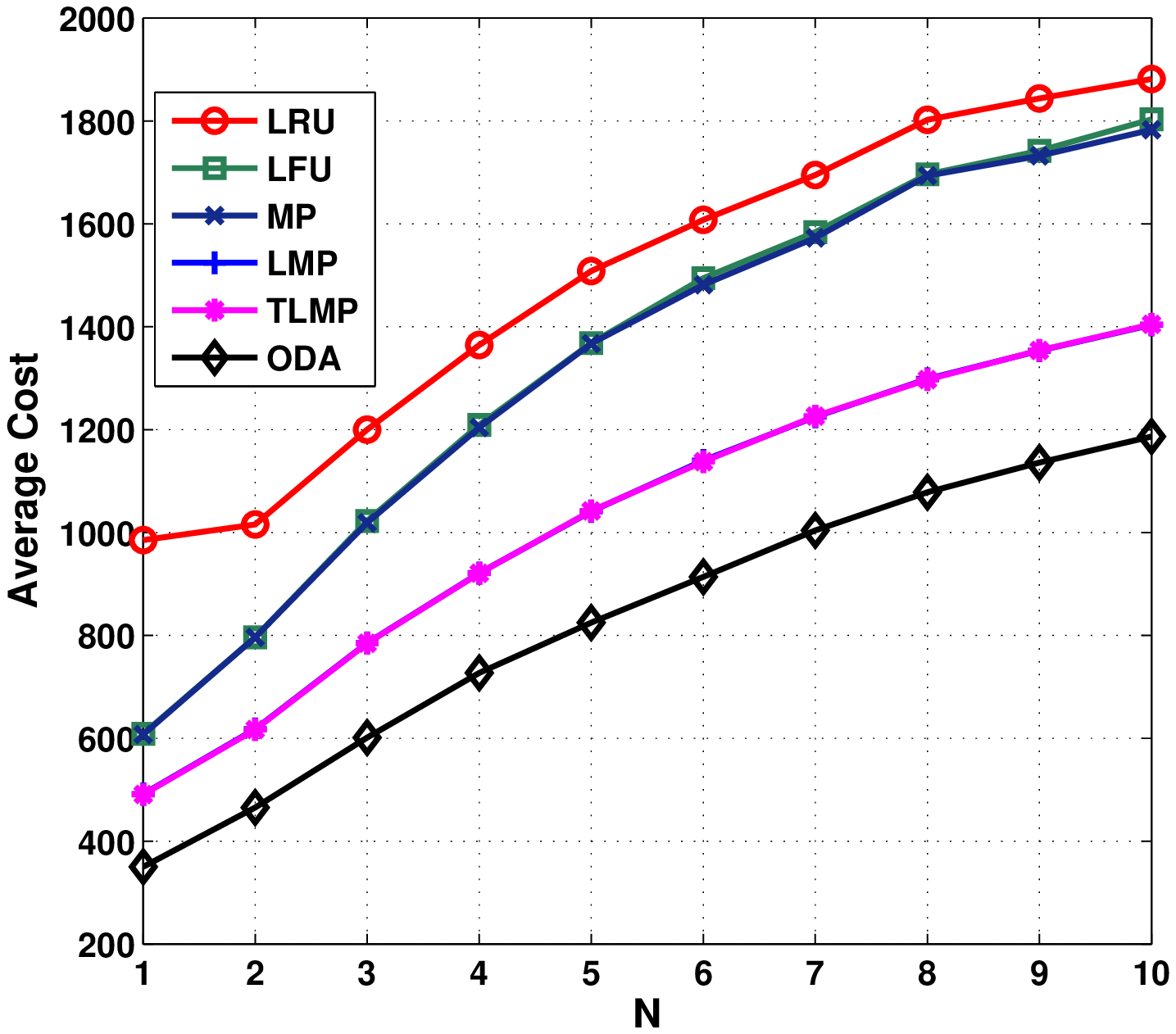}}}\quad\quad
  \subfigure[$\gamma$ at $C\!=~\!10$,~$K\!=\!10$.]
  {\resizebox{7.8cm}{!}{\includegraphics{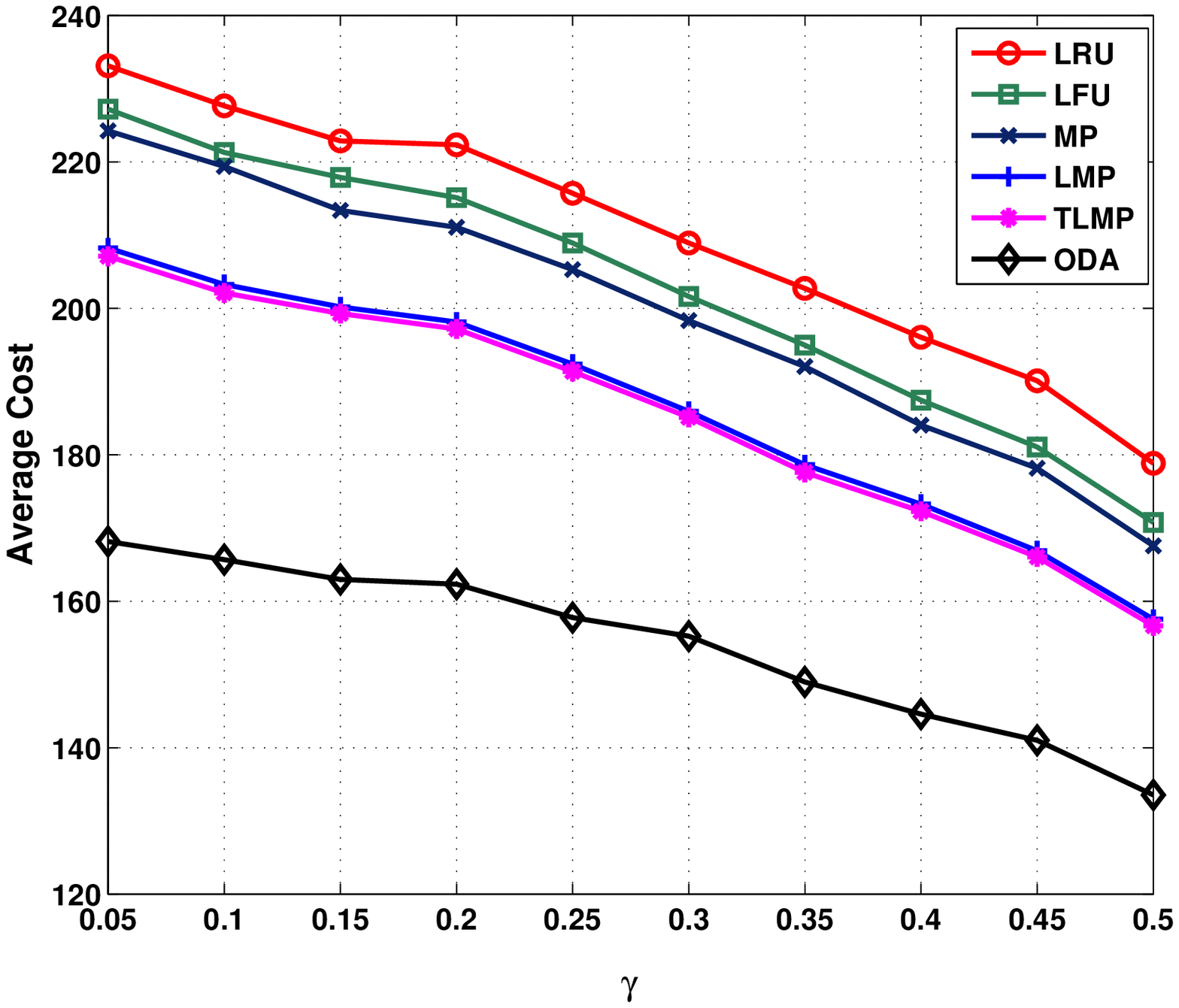}}}\quad\quad
% {\resizebox{6.2cm}{!}{\includegraphics{F100_Gamma0.eps}}}\quad\quad
   \subfigure[$Q_0$ at $C\!=~\!10$,~$K\!=\!10,N=15$.]
 {\resizebox{7cm}{!}{\includegraphics{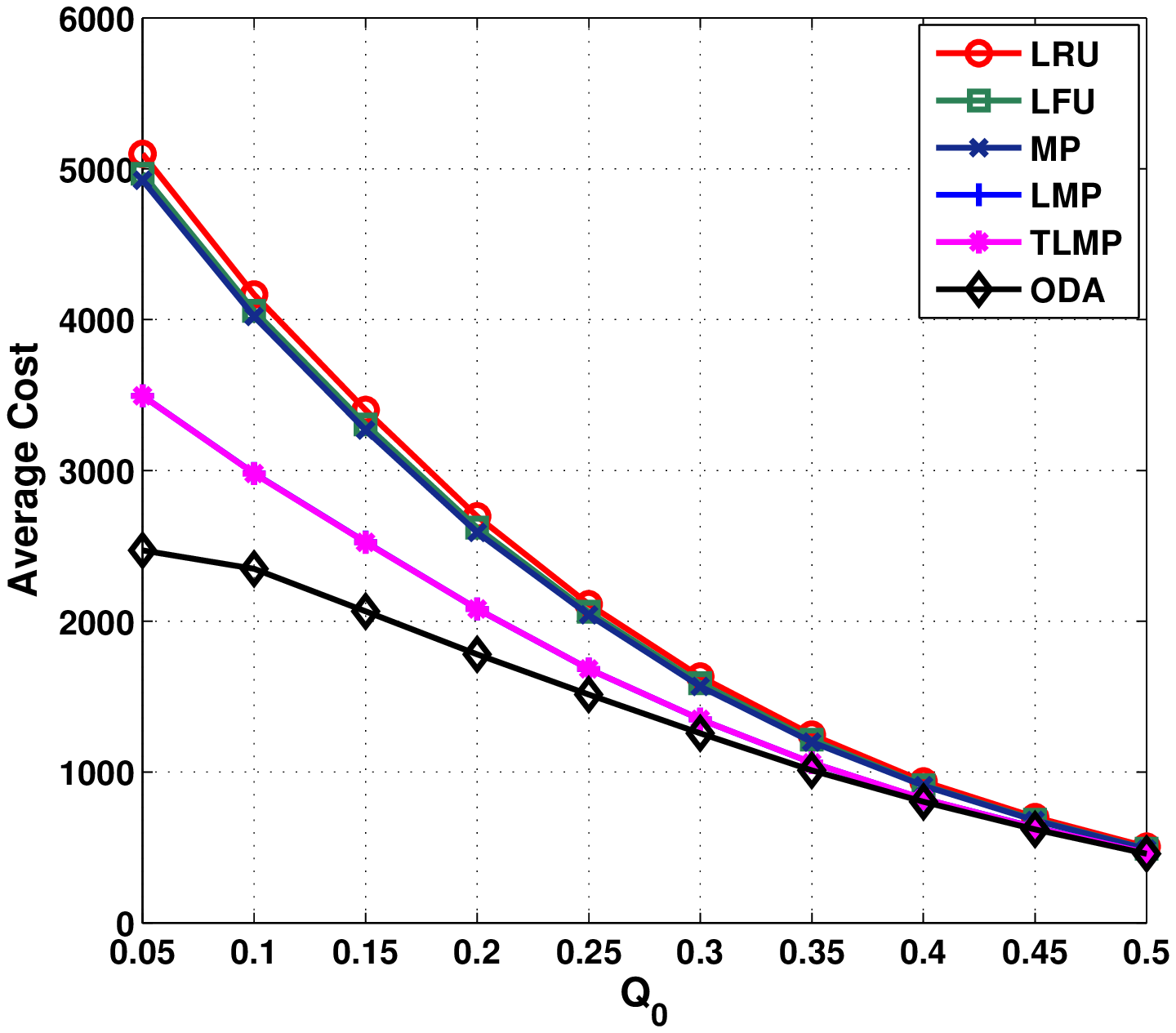}}}\quad\quad
 \end{center}
   \caption{\small{Convergence and average cost versus cache size $C$ and number of users $K$}.}
\label{Effect2}
\end{figure*}

Fig.~$4$~(a) shows that the proposed ODA converges quite fast. Fig.~$4$~(b)-(f) illustrate the average cost versus several system parameters. We observe that LMP behaves better than MP, LRU and LFU, mainly due to the fact that LMP considers the temporal correlation of each user demand process. The performance of TLMP is almost the same as that of LMP as its pushing and caching policies are not intelligently designed. The last not the least, our proposed ODA significantly outperforms the five baselines, primarily due to the fact that ODA takes into account both the asynchronous feature and temporal correlation of file requests and jointly designs both pushing and caching. Additionally, ODA achieves a good balance between the current transmission cost and the future average transmission cost.%considers not only the future effect of pushing but also the current transmission cost it incurs at each stage.

Specifically, Fig.~$4$~(b) illustrates the average cost versus the cache size. Intuitively, the average cost monotonically decreases with the cache size. We can also see that our proposed ODA achieves good performance gains over the five baselines even at a small cache size. %Furthermore, ODA consistently outperforms the five baselines, which indicates \textcolor{red}{that} ODA can exploit the storage resources more efficiently than the five baselines to improve bandwidth utilization.}
Fig.~$4$~(c) illustrates the average cost versus the number of users $K$. As expected, the average cost monotonically increases with the number of users, since the traffic load increases with the number of users. Furthermore, we can see that the performance gains of our proposed ODA over the five baselines increase with the number of users. Therefore, ODA behaves much robuster against the change of the number of users than the five baselines. %This is mainly because: firstly, different from the policies including LRU, LFU, MP and LMP which only consider local information, ODA takes into account the system reactive traffic load; secondly, compared with TMP, ODA designs the joint policy more intelligently. Therefore, ODA behaves much robuster against the number of users than the five baselines.} %ODA still consistently outperforms the five baselines, mainly due to the fact that ODA takes into account the system reactive traffic load and then designs the joint policy more intelligently. Therefore, ODA behaves much robuster against the number of users than the five baselines.}
Fig.~$4$~(d)-(g) illustrate the average cost versus the parameters of the transition matrix of the user demand process, i.e., $N$, $\gamma$ and $Q_0$. Specifically, Fig.~$4$~(d) illustrates the average cost versus the number of neighbors $N$. We can see that the average cost monotonically increases with the number of neighbors. This is because the user demand processes become less predictable as $N$ becomes larger. %that a larger set of neighbors with uniform transition probabilities makes it more difficult to accurately predict the next file request, thereby leading to additional unnecessary transmission.
Fig.~$4$~(e) illustrates the average cost versus the Zipf exponent $\gamma$. We see that the average cost monotonically decreases with $\gamma$. This is because as $\gamma$ grows, the probability that a requested file is popular and is cached becomes larger.  Fig.~$4$~(f) illustrates the average cost versus $Q_0$. The average cost decreases with $Q_0$, mainly due to the fact that the traffic load becomes lighter. As $Q_0$ decreases, the performance gains of ODA over the five baselines become larger, which again indicates that ODA is much robuster against the change of the traffic load than the five baselines.%. We see that the average cost under the five baselines decreases with the number of neighbors, while the performance of ODA is not affected by $N$  Also, our proposed ODA achieves better performance than the baselines.

\section{Conclusions}

 In this paper,  we formulate the bandwidth utilization maximization problem via joint pushing and caching as an infinite horizon average cost MDP. By structural analysis, we show how the optimal policy balances the current transmission cost with the future average transmission cost. In addition, we show that  the optimal policy achieves a  tradeoff between the cache size and the bandwidth utilization. By a linear approximation of the value function and relaxation techniques, we develop a decentralized policy with polynomial complexity. Moreover, we propose an online decentralized algorithm to implement the proposed low-complexity decentralized policy when priori knowledge of user demand processes is unknown. Finally, using numerical results, we demonstrate the advantage of the proposed solutions over some existing designs. %The results in this paper offer useful guidelines for designing practical cache-enabled content-centric wireless networks.
\section*{Appendix A: Proof of Lemma~\ref{Bellman equation}}\label{proof of Bellman equation}
%\begin{proof}%\label{proof of Bellman equation}
First, we show that the Weak Accessibility (WA) condition holds for our system. Consider any two system states $\mathbf{X}^1 \triangleq (\textbf{A}^1,\mathbf{S}^1) \in \bar{\mathcal{F}}^K \times \mathcal{S}^K$ and $\mathbf{X}^2 \triangleq (\textbf{A}^2,\mathbf{S}^2) \in \bar{\mathcal{F}}^K \times \mathcal{S}^K$. Recall that for any $k\in \mathcal{K}$, $\{A_k(t)\}$ is an irreducible Markov chain. Thus, there exists an integer $t'\geq 1$ such that $\Pr[A_k(t')=A_k^2|A_k(0) = A_k^1] \geq 0$. In addition, there exists a policy $\bar{\mu} \triangleq (\bar{\mu}_P,\bar{\mu}_{\Delta S})$ such that $\bar{\mu}_{\Delta S}(\mathbf{A},\mathbf{S}^1) = \mathbf{S}^2$ and $\bar{\mu}_{\Delta S}(\mathbf{A},\mathbf{S}^2) = \mathbf{S}^2$ for all $\mathbf{A}\in \bar{\mathcal{F}}$. Hence, $\Pr[\textbf{X}(t')=\textbf{X}^2|\textbf{X}(0)=\textbf{X}^1, \bar{\mu}] = \Pr[\textbf{A}(t')=\textbf{A}^2|\textbf{A}(0)=\textbf{A}^1] \geq 0$, i.e., $\mathbf{X}^2$ is accessible from $\mathbf{X}^1$ under policy $\bar{\mu}$. By Definition $4.2.2$ in \cite{bertsekas}, we conclude that WA holds for the MDP.
%In addition, considering that $\{A_k\},\ k\in \mathcal{K}$, is an irreducible Markov chain, we learn that $\mathbf{X}_2$ is accessible from $\mathbf{X}_1$ under policy $\bar{\mu}$. By Definition $4.2.2$ in \cite{bertsekas}, we conclude that WA holds for our system.
Thus, by Proposition $4.2.3$ and Proposition $4.2.1$ in \cite{bertsekas}, the optimal average costs of the MDP in Problem~$1$ for all initial system states are the same and the solution $(\theta,V(\cdot))$ to the following Bellman equation exists:
\begin{align}\label{Average Cost Objective}
&\theta + V(\mathbf{X}) = \min \limits_{(\textbf{P}, \Delta \textbf{S})\in \mathbf{U}(\mathbf{X})}\Big\{\phi(\sum_{f\in \mathcal{F}}(R_f+P_f))+ \sum_{\mathbf{X}' \in \bar{\mathcal{F}}^K \times \mathcal{S}^K} \Pr [\mathbf{X}'|\mathbf{X},(\textbf{P}, \Delta \textbf{S})] V(\mathbf{X}')\Big\},\forall\  \mathbf{X},% \nonumber\\ &\hspace{122mm}  \mathbf{X} \in \bar{\mathcal{F}}^K \times \mathcal{S}^K,
\end{align}
where $\mathbf{X}' \triangleq (\textbf{A}', \mathbf{S}')$ and $R_f$ is given by (\ref{reactive}).
Furthermore, the optimal policy $\mu^*$ is given by
\begin{align} \label{Optimal Policy2}
\mu^*(\mathbf{X})&=\arg \min \limits_{(\textbf{P}, \Delta \textbf{S})\in \mathbf{U}(\mathbf{X})}  \Big\{ \phi\big(\sum_{f\in \mathcal{F}}(R_f + P_f)\big)\!+\!\sum_{\mathbf{X}' \in \bar{\mathcal{F}}^K \times \mathcal{S}^K} \Pr [\mathbf{X}'|\mathbf{X},(\textbf{P}, \Delta \textbf{S})]V(\mathbf{X}') \Big\}, \forall\  \mathbf{X}.
\end{align}
Note that the transition probability of the system state is given by:
\begin{align}\label{System State Transition}
&\Pr[\mathbf{X}'|\mathbf{X},(\textbf{P}, \Delta \textbf{S})] = \Pr [(\textbf{A}',\textbf{S}')|(\textbf{A},\textbf{S}),(\textbf{P}, \Delta \textbf{S})] = \Pr[\textbf{A}'|\textbf{A}]\Pr[\textbf{S}'|\textbf{S},\Delta \textbf{S}]\nonumber \\
& \ \ \ \ \  \ \ \ \ \ \ \ \ \ \ \ \ \ \ \ \ \ =  \begin{cases}
 \prod_{k\in \mathcal{K}} q^{(k)}_{A_k,A_k'}, &  \textbf{S}'= \textbf{S}+\Delta \mathbf{S},\\
 0, & \textbf{S}'\neq \textbf{S}+\Delta \mathbf{S}.\\
 \end{cases}
\end{align}
By substituting (\ref{System State Transition}) into (\ref{Average Cost Objective}) and (\ref{Optimal Policy2}), we obtain the Bellman equation in (\ref{average transmission cost Objective2}) and the optimal policy in (\ref{Optimal Policy1}), respectively. Therefore, we complete the proof.
%\end{proof}
\section*{Appendix B: Proof of Lemma~\ref{the effect of cache size}}\label{proof of impact of cache size}

First, for any $C_1$ and $C_2$ such that $C_1 < C_2 < F$, we show $\theta(C_1) > \theta(C_2)$ based on the coupling and interchange arguments \cite{Jsac}. Consider two independent MDP systems, i.e., System~$1$ and System~$2$, which have the same transition matrix of user demand processes, i.e., $(\textbf{Q}_{k})_{k\in\mathcal{K}}$, and numbers of files and users, i.e., $F$ and $K$, but have different cache sizes, denoted as $C_1$ and $C_2$, where $C_1 < C_2 <F$. Suppose $\textbf{A}^1(t) = \textbf{A}^2(t)$ for all time slot $t$. That is, the two systems are under \emph{the same sample paths} of the user demand processes. In addition, both systems adopt the same pushing action at each time slot $t$, denoted as $\textbf{P}^{1*}(t)$, which is the optimal pushing action for System~$1$ and a feasible pushing action for System~$2$ (due to $C_1 < C_2$). On the other hand, the two systems may have different caching actions at each time slot $t$. Consider any $\textbf{S}^1(0)\in \mathcal{S}_1 \triangleq \{(S_{k,f})_{k\in \mathcal{K},f\in \mathcal{F}}: \sum_{f\in \mathcal{F}} S_{k,f} = C_1\} $ and $\textbf{S}^2(0)\in \mathcal{S}_2 \triangleq \{(S_{k,f})_{k\in \mathcal{K},f\in \mathcal{F}}: \sum_{f\in \mathcal{F}} S_{k,f} = C_2\} $ such that $\textbf{S}^1(0) \preceq \textbf{S}^2(0)$. The cache state of System $1$ evolves according to $\textbf{S}^1(t+1) = \textbf{S}^1(t)+\Delta \textbf{S}^{1*}(t)$, where $\Delta \textbf{S}^{1*}(t)$ denotes the optimal caching action for System~$1$ at each time slot $t$. System~$2$ implements a caching policy such that at each time slot $t$, $\textbf{S}^2(t)\in \mathcal{S}_2$ and $\textbf{S}^1(t) \preceq \textbf{S}^2(t)$. This holds because that $C_1 < C_2$, $\textbf{S}^1(0) \preceq \textbf{S}^2(0)$ and $\textbf{P}^{2}(t)=\textbf{P}^{1*}(t)$. Based on the facts that $\textbf{A}^1(t) = \textbf{A}^2(t)$ and $\textbf{S}^1(t) \preceq \textbf{S}^2(t)$, by (\ref{reactive}) we have $\textbf{R}^2(t) \preceq \textbf{R}^1(t)$, i.e., $R_f^2 \leq R_f^1$, $f\in \mathcal{F}$, implying $\phi \!\left(\!\sum_{f\in \mathcal{F}} R_f^2(t)\!+\!P^{1*}_f(t) \!\right)\! \leq \!\phi \!\left(\!\sum_{f\in \mathcal{F}} R_f^1(t)\!+\!P^{1*}_f(t)\!\right)\!$ at each time slot $t$.
Considering that $C_1<C_2<F$ and $\textbf{S}^1(t) \preceq \textbf{S}^2(t)$, for each user $k$, there exists at least a file $f_k \in \mathcal{F}$ such that $S_{k,f_k}^1(t)=0 < S_{k,f_k}^2(t)=1$. For each $k\in \mathcal{K}$, since $\{A_k\}$ is irreducible, $f_k$ can be requested by user $k$ within a finite average number of transitions. Therefore, there exists at least a time slot $t$ such that $A_{k}(t) =f_k$ for all $k\in \mathcal{K}$. By (\ref{reactive}), $R_{f_k}^1(t) > R_{f_k}^2(t)$ holds for all $k\in \mathcal{K}$ and thus $\phi \!\left(\!\sum_{f\in \mathcal{F}} R_f^2(t)\!+\!P^{1*}_f(t) \!\right)\! < \!\phi \!\left(\!\sum_{f\in \mathcal{F}} R_f^1(t)\!+\!P^{1*}_f(t)\!\right)\!$. Thus, $\theta(C_1) > \theta'(C_2)$, where $\theta'(C_2)$ denotes the average cost for System~$2$ under the aforementioned policy for System~$2$. Hence, $\theta(C_1)>\theta(C_2)$. %Due to $C_1<C_2<F$, the aforementioned $f'$ satisfies that, while $C_1<C_2<F$, we learn that there exists at least one file $f$ such that $R^1_f < R^2_f$ and thus $\phi(\sum_{f\in \mathcal{F}} R^2_f + P_{f}) < \phi(\sum_{f\in \mathcal{F}} R^1_f + P_{f})$. Therefore, $\theta(C_1) > \theta'(C_2)$ where $\theta'(C_2)$ denotes the average cost for System~$2$ under the aforementioned policy for System~$2$. Hence, $\theta(C_1)>\theta(C_2)$.
 Secondly, when $C \geq F$, intuitively, at each time slot, $\sum_{f\in \mathcal{F}} R_f = 0$ can be satisfied. Hence, $\theta(C) = 0$. The proof ends.

%Thus, the average cost of System $1$ under its optimal joint policy, denoted as $\theta^1$, is larger than that of System $2$ under the above mentioned policy, denoted as $\check{\theta}^2$. By noting that $\check{\theta}^2 \geq \theta^2$, where $\theta^2$ denotes the average cost of System $2$ under its optimal joint policy, we have $\theta^1 \geq \theta^2$ and thus complete the proof of Lemma~\ref{the effect of cache size}.

%Thus, we conclude that the optimal average cost for System $1$ is larger than that for System $2$, which completes the proof.
%In this way, at time slot $t$, $\phi \!\left(\!\sum_{f\in \mathcal{F}} R_f^1(t)\!+\!P_f(t) \!\right)\! \leq \!\phi \!\left(\!\sum_{f\in \mathcal{F}} R_f^2(t)\!+\!P_f(t)\!\right)\!$, since $\mathbf{T}_p^1(t)=\mathbf{T}_p^2(t)$ and $\mathbf{T}_r^1(t) \geq \mathbf{T}_r^2(t)$. Hence, $\phi(\sum_f T_{1,f}(t))\geq \phi(\sum_f T_{2,f}(t))$. Therefore, the average cost per stage achieved from System $1$ under its corresponding optimal pushing and caching policy is larger than that obtained from System $2$ under the above mentioned policy. Hence, we conclude that the optimal average cost per stage for System $1$ is larger than that for System $2$. The proof ends.

%\section{Some Examples 1}
%some text...
%\section{Some Examples 2}
%some text...
%\bibliographystyle{IEEEtran}
%\bibliography{paper}
\section*{Appendix C: Proof of Lemma~\ref{Transient}}
We prove Lemma \ref{Transient} based on the partial monotonicity of value function $V(\cdot)$ w.r.t. the system cache state $\mathbf{S}$ shown using relative value iteration algorithm (RVIA) and mathematical induction.

First, we introduce RVIA \cite{bertsekas}. For all $\mathbf{X} \in \bar{\mathcal{F}}^K \times \mathcal{S}^K$, let $V_n(\mathbf{X})$ denote the value function in the $n$th iteration, where $n = 0,1,\cdots$. Define
\begin{align}\label{state action function}
&J_{n+1}(\mathbf{X},u_n)\triangleq \phi\left(\sum_{f=1}^{F} R_{n,f}+P_{n,f}\right)+\sum_{\textbf{A}'\in \bar{\mathcal{F}}^K}\prod_{k\in \mathcal{K}} q^{(k)}_{A_k,A_k'}V_n(\textbf{A}',\textbf{S}+ \Delta \textbf{S}_n),
\end{align}
where $u_n \triangleq (\textbf{P}_n,\Delta \textbf{S}_n)$ denotes the system action under state $\textbf{X}$ in the $n$th iteration. Note that $J_{n+1}(\mathbf{X},u_n)$ corresponds to the R.H.S of the Bellman equation in (\ref{average transmission cost Objective2}). We refer to $J_{n+1}(\mathbf{X},u_n)$ as the state-action cost function in the $n$th iteration. Under RVIA, $V_n(\mathbf{X})$ evolves according to
\begin{align}\label{iteration equation}
V_{n+1}(\mathbf{X}) &= \min \limits_{u_n} J_{n+1}(\mathbf{X},u_n) - \min \limits_{u_n} J_{n+1}(\mathbf{X}^\S,u_n)\nonumber\\
&= J_{n+1}(\mathbf{X},\mu_n^*(\mathbf{X})) - J_{n+1}(\mathbf{X}^\S,\mu_n^*(\mathbf{X}^\S)),\ \  \mathbf{X} \in \bar{\mathcal{F}}^K \times \mathcal{S}^K
\end{align}
where $J_{n+1}(\mathbf{X},u_n)$ is given by (\ref{state action function}), $\mu_n^*$ denotes the optimal policy that attains the minimum of the first term in (\ref{iteration equation}) in the $n$th iteration and $\mathbf{X}^\S \in \bar{\mathcal{F}}^K \times \mathcal{S}^K$ is some fixed state.
By Proposition $4.3.2$ in \cite{bertsekas}, for all $\mathbf{X} \in \bar{\mathcal{F}}^K \times \mathcal{S}^K$, the generated sequence $\{V_{n}(\mathbf{X})\}$ converges to $V(\mathbf{X})$ given in the Bellman equation in (\ref{average transmission cost Objective2}) under any initialization of $V_{0}(\mathbf{X})$, i.e.,
\begin{equation}
\lim_{n\rightarrow \infty} V_{n}(\mathbf{X}) = V(\mathbf{X}),\ \mathbf{X} \in \bar{\mathcal{F}}^K \times \mathcal{S}^K,
\end{equation}
where $V(\mathbf{X})$ satisfies the Bellman equation in (\ref{average transmission cost Objective2}). %Denote with $\mu_n^*$ the optimal policy that attains the minimum of the first term in (\ref{iteration equation}) in the $n$th iteration, i.e.,
%\begin{equation}
%\mu_n^*(\mathbf{X}) = \arg \min_{u_n} J_{n+1}(\mathbf{X},u_n), \ \mathbf{X} \in \bar{\mathcal{F}}^K \times \mathcal{S}^K.
%\end{equation}

%For system state $\mathbf{X} \in \bm{\mathcal{A}}\times \bm{\mathcal{S}}$,
%we denote with $V_n(\mathbf{X})$ the value function obtained in the $n$th iteration. Consistent with the RVIA introduced in [bo multicast scheduling], we have the iteration of $V_n(\mathbf{X})$ for each $\mathbf{X}$ as
%\begin{align}\label{iteration equation}
%& V_{n+1}(\mathbf{X}) = \min \limits_{u_n} J_{n+1}(\mathbf{X},u_n) - \min \limits_{u_n} J_{n+1}(\mathbf{X}^\S,u_n),\forall n,
%\end{align}
%where $J_{n+1}(\mathbf{X},u_n)$ is defined as
%\begin{align}\label{state action function}
%&J_{n+1}(\mathbf{X},u_n)= \phi(\sum_{f=1}^{F}T_{n,f})+\sum_{\textbf{A}'}P(\textbf{A}'|\textbf{A})V_n(\textbf{A}',\textbf{S}+ \Delta \textbf{S}),
%\end{align}
%where $\textbf{S}+ \Delta \textbf{S}=\textbf{S}+\Delta\textbf{S}_n$.

%By Proposition 4.3.2, the sequence $\{V_{n}(\mathbf{X})\}$ converges to $V(\mathbf{X})$ that satisfies the Bellman equation in (\ref{Bellman equation}). Let $\mu_n^*$ be the optimal policy corresponding to the minimization of $J_{n+1}(\mathbf{X},u_n),\forall \mathbf{X}$, i.e., for each $\mathbf{X}$,
%\begin{align}\label{n optimal policy}
%& \mu_n^*(\mathbf{X})=(\mathbf{T}_n^*(\mathbf{X}),\Delta \textbf{S}_n^*(\mathbf{X})) = \arg \min \limits_{u_n} J_{n+1}(\mathbf{X},u_n).
%\end{align}

Next, we prove the partial nonincreasing monotonicity of $V(\cdot)$ w.r.t. the system cache state $\mathbf{S}$, i.e., for all $\textbf{S}^1, \textbf{S}^2 \in \mathcal{S}^K$ such that $\textbf{S}^1 \preceq \textbf{S}^2$, $V(\textbf{A},\textbf{S}^1) \geq V(\textbf{A},\textbf{S}^2)$ for all $\textbf{A} \in \bar{\mathcal{F}}^K$. Based on RVIA, it is equivalent to show that for all $\textbf{S}^1, \textbf{S}^2 \in \mathcal{S}^K$ such that $\textbf{S}^1 \preceq \textbf{S}^2$,
\begin{equation}\label{monotonicity of value function}
V_{n}(\textbf{A},\textbf{S}^1) \geq V_{n}(\textbf{A},\textbf{S}^2),
\end{equation}
%\begin{equation}\label{monotonicity of value function}
%V(\textbf{A},\textbf{S}^1) \geq V(\textbf{A},\textbf{S}^2),
%\end{equation}
%Denote $\mathbf{X}^1 \triangleq (\textbf{A},\textbf{S}^1)$ and $\mathbf{X}^2 \triangleq (\textbf{A},\textbf{S}^2)$. It is equivalent to show that for any $\textbf{S}^1, \textbf{S}^2 \in \mathcal{S}^K$ such that $\textbf{S}^1 \preceq \textbf{S}^2$,
%\begin{equation}\label{monotonicity of value function}
%V_{n}(\textbf{A},\textbf{S}^1) \geq V_{n}(\textbf{A},\textbf{S}^2),
%\end{equation}
holds for all $n=0,1,\cdots$. We now prove (\ref{monotonicity of value function}) based on mathematical induction. First, we initialize $V_0(\mathbf{X})=0$ for all $\mathbf{X} \in \bar{\mathcal{F}}^K \times \mathcal{S}^K$. Thus, we have $V_{0}(\textbf{A},\textbf{S}^1) \geq V_{0}(\textbf{A},\textbf{S}^2)$, meaning (\ref{monotonicity of value function}) holds for $n=0$. Then, assume (\ref{monotonicity of value function}) holds for some $n\geq0$. Denote with $(\textbf{P}_n^1, \Delta \textbf{S}_n^1)$ the optimal action under $(\mathbf{A},\mathbf{S}^1)$, i.e., $\mu_n^*(\mathbf{A},\mathbf{S}^1)=(\textbf{P}_n^1, \Delta \textbf{S}_n^1)$, and denote with $(\textbf{P}_n^2, \Delta \textbf{S}_n^2)$ the optimal action under $(\mathbf{A},\mathbf{S}^2)$, i.e., $\mu_n^*(\mathbf{A},\mathbf{S}^2) = (\textbf{P}_n^2, \Delta \textbf{S}_n^2)$. Define $\Delta\textbf{S}_n' \triangleq (\Delta S_{n,k,f}')_{k\in \mathcal{K},f\in\mathcal{F}}$ where
\begin{align}\label{cache update in n iteration}
& \Delta S_{n,k,f}'
  \triangleq  \begin{cases}
 \Delta S^1_{n,k,f}, &  S^2_{k,f} + \Delta S^1_{n,k,f} \leq 1,\\
 0, & S^2_{k,f} + \Delta S^1_{n,k,f} > 1,\\
 \end{cases}\ k\in \mathcal{K},f\in\mathcal{F}.
\end{align}
%based on which we shall prove that (\ref{monotonicity of value function}) also holds for $n+1$.
%By (\ref{iteration equation}),
From (\ref{cache update in n iteration}), $(\textbf{P}_n^1,\Delta\textbf{S}_n')$ is a feasible action under $(\mathbf{A},\mathbf{S}^2)$. From (\ref{iteration equation}), we have
\begin{align}\label{n+1 iteration}
&V_{n+1}(\textbf{A},\textbf{S}^2)= J_{n+1}((\textbf{A},\textbf{S}^2),\mu_n^*(\mathbf{A},\mathbf{S}^2))-\min \limits_{u_n} J_{n+1}(\mathbf{X}^\S,u_n)\nonumber\\
&\ \ \ \ \ \ \  \ \ \ \ \ \ \overset{(a)}{\leq}J_{n+1}((\textbf{A},\textbf{S}^2),(\textbf{P}_n^1,\Delta\textbf{S}_n'))-\min \limits_{u_n} J_{n+1}(\mathbf{X}^\S,u_n)\nonumber\\
&\ \ \ \ \ \ \  \ \ \ \ \ \
= \phi(\sum_{f\in \mathcal{F}} R_{n,f}^2+P_{n,f}^1)+\sum_{\textbf{A}'\in \bar{\mathcal{F}}^K}\prod_{k\in \mathcal{K}} q^{(k)}_{A_k,A_k'}V_n(\textbf{A}',{\textbf{S}^2}+\Delta\textbf{S}_n') -\min \limits_{u_n} J_{n+1}(\mathbf{X}^\S,u_n)\nonumber\\
&\ \ \ \ \ \ \  \ \ \ \ \ \
\overset{(b)}{\leq}\phi(\sum_{f\in \mathcal{F}} R_{n,f}^1+P_{n,f}^1)+\sum_{\textbf{A}'\in \bar{\mathcal{F}}^K}\prod_{k\in \mathcal{K}} q^{(k)}_{A_k,A_k'}V_n(\textbf{A}',{\textbf{S}^2}+\Delta\textbf{S}_n') -\min \limits_{u_n} J_{n+1}(\mathbf{X}^\S,u_n)\nonumber\\
&\ \ \ \ \ \ \ \ \ \ \ \ \
\overset{(c)}{\leq}\phi(\sum_{f\in \mathcal{F}} R_{n,f}^1+P_{n,f}^1)+\sum_{\textbf{A}'\in \bar{\mathcal{F}}^K}\prod_{k\in \mathcal{K}} q^{(k)}_{A_k,A_k'}V_n(\textbf{A}',{\textbf{S}^1}+\Delta \textbf{S}_n^1) - \min \limits_{u_n} J_{n+1}(\mathbf{X}^\S,u_n) \nonumber\\
&\ \ \ \ \ \ \ \ \ \ \ \ \
\overset{(d)}{=}V_{n+1}(\textbf{A},\textbf{S}^1),
\end{align}
where (a) follows from the optimality of $\mu_n^*(\mathbf{A},\mathbf{S}^2)$ under $(\mathbf{A},\mathbf{S}^2)$ and the feasibility of $(\mathbf{P}_n^1,\Delta\textbf{S}_n')$ under $(\mathbf{A},\mathbf{S}^2)$ in the $n$th iteration. (b) follows from the fact that $\textbf{\textsl{R}}^2 \preceq \textbf{\textsl{R}}^1$ according to (\ref{reactive}) since $\textbf{S}^1 \preceq \textbf{S}^2$. (c) follows from $\textbf{S}^1+\Delta\textbf{S}_n^{1} \preceq \textbf{S}^2+\Delta\textbf{S}_n'$ (due to $\textbf{S}^1 \preceq \textbf{S}^2$ and (\ref{cache update in n iteration})) and (\ref{monotonicity of value function}). (d) follows from (\ref{state action function}) and (\ref{iteration equation}). Hence, (\ref{monotonicity of value function}) holds in the $(n+1)$th iteration. By induction, we show that $V_n(\textbf{A},\textbf{S}^1) \geq V_n(\textbf{A},\textbf{S}^2)$ holds for all $n=0,1,\cdots$. Thus, by RVIA, we conclude that $V(\textbf{A},\textbf{S}^1)\! \geq V(\textbf{A},\textbf{S}^2)$. Similarly, we can show that for all $\textbf{S}^1, \textbf{S}^2\! \in \mathcal{S}^K$ such that $\textbf{S}^1\! \preceq \textbf{S}^2$, if there exists at least a pair of $k$ and $f$ satisfying that $S^1_{k,f}\! < S^2_{k,f}$, $V(\textbf{A},\textbf{S}^1)\! > V(\textbf{A},\textbf{S}^2)$ for all $\textbf{A} \in \bar{\mathcal{F}}^K$.

Finally, based on the partial nonincreasing monotonicity of the value function, we prove Lemma~\ref{Transient}. For all state $(\mathbf{A},\mathbf{S}) \in \bar{\mathcal{F}}^K\times \check{\mathcal{S}}^K$, denote with $G(\mathbf{A}, \textbf{S}+\Delta \textbf{S}) \triangleq \sum_{\mathbf{A}'\in \bar{\mathcal{F}}^K}\prod_{k\in\mathcal{K}}q_{A_k,A_k'}^{(k)}\\V(\mathbf{A}',\textbf{S}+\Delta \textbf{S})$ the objective function in (\ref{Cache Policy}) for all $\Delta \textbf{S}\in\mathbf{U}_{\Delta S}(\textbf{X},\mu_{P}^*(\textbf{X}))$ and $\Delta\mathbf{S}^* = \mu^*_{\Delta S}(\mathbf{A},\textbf{S})$ the optimal caching action given in (\ref{Cache Policy}). Since for all $\textbf{S}^1, \textbf{S}^2 \in \mathcal{S}^K$ such that $\textbf{S}^1 \preceq \textbf{S}^2$, $V(\textbf{A},\textbf{S}^1) \geq V(\textbf{A},\textbf{S}^2)$ for all $\mathbf{A}\in \bar{\mathcal{F}}^K$, we have $G(\mathbf{A}, \textbf{S}+\Delta \textbf{S}^1) \geq G(\mathbf{A}, \textbf{S}+\Delta \textbf{S}^2)$ for all $\Delta \textbf{S}^1$, $\Delta \textbf{S}^2$ such that $\Delta \textbf{S}^1 \preceq \Delta \textbf{S}^2$. Furthermore, if there exists at least a pair of $k$ and $f$ satisfying that $S^1_{k,f} < S^2_{k,f}$, $G(\textbf{A},\textbf{S}+\Delta \textbf{S}^1) > G(\textbf{A},\textbf{S}+\Delta \textbf{S}^2)$.
In the sequel, we consider two cases:
\begin{itemize}
  \item Case i: $\mathbf{S} \notin \check{\mathcal{S}}^K$. First, we show that $\Delta\mathbf{S}^* \succeq \mathbf{0}$ by contradiction. Suppose that there exist $k'$ and $f'$ such that $\Delta S_{k',f'}^* < 0$. Denote $\Delta \mathbf{S} \triangleq (\Delta S_{k,f})_{k\in \mathcal{K}, f\in \mathcal{F}}$ where
      \begin{align}\label{special case1}
      & \Delta S_{k,f}
            \triangleq  \begin{cases}
       0, &  k = k', f= f',\\
           \Delta S^*_{k,f}, & \text{otherwise}.\\
       \end{cases}
      \end{align}
      Then, $\Delta \mathbf{S} \succeq \Delta \mathbf{S}^*$ with $\Delta S_{k',f'} > \Delta S^*_{k',f'}$. Hence, $G(\mathbf{A}, \textbf{S}+\Delta \textbf{S}) < G(\mathbf{A}, \textbf{S}+\Delta \textbf{S}^*)$ which contradicts the optimality of $\Delta \mathbf{S}^*$. Thus, we have $\Delta\mathbf{S}^* \succeq \mathbf{0}$. Furthermore, since $\mathbf{S}\notin \check{\mathcal{S}}^K$ and $C<F$, there always exists a demand state $\mathbf{A}$ such that $\sum_{f\in \mathcal{F}} R_f = \sum_{f\in \mathcal{F}} \max_{k\in \mathcal{K}: A_k=f}\  \big(1 - S_{k,f}\big)> 0$. By contradiction, we can show that for a state $(\mathbf{A},\mathbf{S})$ such that $\mathbf{S} \notin \check{\mathcal{S}}^K$ and $\sum_{f\in \mathcal{F}} R_f > 0$, there exists at least a pair of $k \in \mathcal{K}$ and $f \in \mathcal{F}$ such that $S_{k,f} = 0$ and $\Delta S_{k,f}^* = 1$.
      %when $\sum_{f\in \mathcal{F}} R_f > 0$, there exists at least a pair of $k \in \mathcal{K}$ and $f \in \mathcal{F}$ such that $\Delta S_{k,f}^* = 1$.
     % Similarly, by contradiction, we can show that if $\sum_{f\in \mathcal{F}} R_f+P_f > 0$, there exists a user $k \in \mathcal{K}$ and a file $f \in \mathcal{F}$ such that $\Delta S_{k,f}^* = 1$.
     Recall that for all $k\in \mathcal{K}$, $\{A_k\}$ is irreducible, i.e., any demand state $A_k\in \bar{\mathcal{F}}$ can be visited within a finite average number of transitions. Thus, under the optimal policy $\mu^*$, if the system state starts from any state $(\mathbf{A},\mathbf{S})$ where $\mathbf{S}\notin \check{\mathcal{S}}^K$, the cache state $\mathbf{S}$ will transit into the set $\check{\mathcal{S}}^K$ and never move back.
  \item Case ii: $\mathbf{S} \in \check{\mathcal{S}}^K$. First, we show that $\mathbf{S}+\Delta \mathbf{S}^* \in \check{\mathcal{S}}^K$ by contradiction. Suppose that $\mathbf{S}+\Delta \mathbf{S}^* \notin \check{\mathcal{S}}^K$, then there exists $\Delta \mathbf{S}$ such that $\mathbf{S}+\Delta \mathbf{S}^* \preceq \mathbf{S}+\Delta \mathbf{S} \in \check{\mathcal{S}}^K$, implying that there exists at least a pair of $k$ and $f$ satisfying $S^1_{k,f}=0$ and $S^2_{k,f}=1$. Hence, $G(\textbf{A},\mathbf{S}+\Delta \mathbf{S}^*) < G(\textbf{A},\mathbf{S}+\Delta \mathbf{S})$ which contradicts the optimality of $\Delta \mathbf{S}^*$. Thus, under the optimal policy $\mu^*$, if started at all state $(\textbf{A},\mathbf{S})$ where $\mathbf{S}\in \check{\mathcal{S}}^K$, the system state shall never come to a state $ (\textbf{A},\mathbf{S})$ where $\mathbf{S}\notin \check{\mathcal{S}}^K$.
\end{itemize}
The proof ends.

\section*{Appendix D: Proof of Lemma~$4$}
%\begin{proof}
%\subsubsection{First-Stage Value Approximation}
We prove Lemma~$4$ by illustrating the relationship between the $K$ per-user per-file MDPs and the original MDP. First, we relax the action space $\mathbf{U}(\textbf{X})$ via ignoring the multicast opportunities (i.e., considering unicast transmissions) in both reactive transmission and pushing. Specifically, %we shall first establish another MDP system via ignoring the multicast effect and choosing an additively separable lower bound of the original per-stage cost as its own per-stage cost, which can be decoupled into $K$ independent per-user MDPs. Then we approximate the value function $V(\mathbf{X})$ with the sum of per-user value functions.
%Specifically,
%Denote with $X_k \triangleq (A_k,\textbf{S}_k) \in \bar{\mathcal{F}}\times \check{\mathcal{S}}$ the state of user $k$, where $\bar{\mathcal{F}}\times \check{\mathcal{S}}$ represents the per-user state space.
let $\mathbf{R}_k \triangleq (R_{k,f})_{f\in \mathcal{F}}\in \{0,1\}^F$ denote the reactive transmission action of user $k$, where
\begin{equation}\label{reactive2}
R_{k,f} \triangleq \textbf{1}\{A_k=f\}(1-S_{k,f}),\ f\! \in\! \mathcal{F},\ k \!\in\! \mathcal{K}.%\footnote{"\textbf{1}" represents the indicator function in this paper.}
\end{equation}
%From (\ref{reactive}) and (\ref{reactive2}), we have that $R_f = \max_{k\in \mathcal{K}} R_{k,f}$.
Denote with $\mathbf{P}_k \triangleq (P_{k,f})_{f\in \mathcal{F}}\in \{0,1\}^F$ the pushing action of user $k$. %Then, we have that $P_f = \max_{k\in \mathcal{K}} P_{k,f}$ due to the multicast opportunities in system pushing action.
The per-user pushing action constraints are as follows:
\begin{equation}\label{per user push1}
P_{k,f}(t) \leq 1 - R_{k,f}(t),\ \ \ f\in \mathcal{F},\ k \in \mathcal{K},
\end{equation}
\begin{equation}\label{per user push2}
P_{k,f}(t) \leq 1 - S_{k,f}(t),\ \ \ f\in \mathcal{F},\ k \in \mathcal{K},
\end{equation}
where (\ref{per user push1}) is to guarantee that each file $f\in \mathcal{F}$ is not transmitted more than once to user $k$ at each time slot $t$ and (\ref{per user push2}) is to guarantee that a file $f$ which has already been cached in the storage of user $k$ is not pushed again at each time slot $t$. By omitting the coupling among users incurred by the multicast transmission, i.e., $R_f = \max_{k\in \mathcal{K}} R_{k,f}$ and $P_f = \max_{k\in \mathcal{K}} P_{k,f}$, we rewrite the cache update constraint in (\ref{Cache Decision Constraint}) as:
\begin{equation}\label{Cache Decision Constraint1}
-S_{k,f}(t) \leq \Delta S_{k,f}(t) \leq R_{k,f}(t)+ P_{k,f}(t), \ f \in \mathcal{F},\ k \in \mathcal{K}.
\end{equation}
In this way, we construct action space $\check{\mathbf{U}}(\textbf{X}) \triangleq \{(\mathbf{P}, \Delta \textbf{S}): \Delta \textbf{S} \in \check{\mathbf{U}}_{\Delta S}(\textbf{X},\mathbf{P}),\ \mathbf{P}\in \check{\mathbf{U}}_P(\textbf{X})\}$. Specifically, $\check{\mathbf{U}}_{\Delta S}(\textbf{X},\mathbf{P}) \triangleq \prod_{k\in \mathcal{K}} \check{\mathbf{U}}_{\Delta S,k}(X_k,\mathbf{R}_k+\mathbf{P}_k)$ where $\check{\mathbf{U}}_{\Delta S,k}(X_k,\mathbf{R}_k+\mathbf{P}_k) \triangleq \{(\Delta S_{k,f})_{f\in \mathcal{F}}: (\ref{Storage Evolution Constraint2})(\ref{Storage Evolution Constraint1})(\ref{Cache Decision Constraint1})\}$ denotes the caching action space of user $k$ and $\check{\mathbf{U}}_P(\textbf{X}) \triangleq \prod_{k\in \mathcal{K}} \check{\mathbf{U}}_{P}(X_k)$ where $\check{\mathbf{U}}_{P}(X_k) \triangleq \{(P_{k,f})_{f\in \mathcal{F}}: (\ref{per user push1})(\ref{per user push2})\}$ denotes the pushing action space of user $k$. Thus, $\check{\mathbf{U}}(\textbf{X}) = \prod_{k\in \mathcal{K}} \check{\mathbf{U}}_k(X_k)$, where $\check{\mathbf{U}}_k(X_k) \triangleq \{(\textbf{P}_k, \Delta \textbf{S}_k):\Delta \textbf{S}_k \in \check{\mathbf{U}}_{\Delta S,k}(X_k,\mathbf{R}_k+\mathbf{P}_k), \textbf{P}_k \in \check{\mathbf{U}}_{P,k}(X_k) \}$.
%\begin{equation}\label{multicast}
%P_{k,f} = P_{f},\ k \in \mathcal{K}.
%\end{equation}

Then, we establish an MDP under the unicast transmission, named as unicast MDP. %, which is obtained from the original MDP by replacing the per-stage cost with
For the unicast MDP, the per-stage cost is $\frac{1}{K}\sum_{k\in \mathcal{K}} \phi\big(\sum_{f\in \mathcal{F}} (R_{k,f}+P_{k,f})\big)$ and the action space is $\check{\mathbf{U}}(\textbf{X})$.
By Proposition $4.2.2$ in \cite{bertsekas} and the proof of Lemma~\ref{Bellman equation}, for the unicast MDP, we learn that there exist $(\check{\theta},\check{V}(\cdot))$ satisfying:
\begin{align}\label{first stage Bellman equation}
&\check{\theta} \!+\! \check{V}(\mathbf{X})\!=\!\min \limits_{(\textbf{P}, \Delta \textbf{S})\in \check{\mathbf{U}}(\mathbf{X})} \! \Big\{ \frac{1}{K}\sum_{k\in \mathcal{K}} \phi\big(\sum_{f\in \mathcal{F}} (R_{k,f}+P_{k,f})\big)\! +\!\sum_{\textbf{A}' \in \bar{\mathcal{F}}^K}\!\!\prod_{k\in \mathcal{K}}\! q_{A_k,A_k'}^{(k)}\!\check{V}(\textbf{A}',\textbf{S}\!+\!\Delta \textbf{S}) \Big\}, \forall \textbf{X},%\nonumber\\%
%& \hspace{125mm}  \textbf{X} \in \bar{\mathcal{F}}^K \!\times\! \check{\mathcal{S}}^K,
%&\hspace*{360pt} \mathbf{X}\! \in \!\bar{\mathcal{F}}^K \!\times\! \check{\mathcal{S}}^K,
\end{align}
where $\check{\theta}$ and $\check{V}(\cdot)$ represent the average cost and value function of the unicast MDP, respectively. Considering that the optimal policy $\mu^*$ in (\ref{Optimal Policy1}) is a feasible policy for the unicast MDP and $\phi\big(\sum_{f\in \mathcal{F}} (R_{f}+P_{f})\big)=\frac{1}{K}\sum_{k\in \mathcal{K}} \phi\big(\sum_{f\in \mathcal{F}} (R_{k,f}+P_{k,f})\big)$ when $R_{k,f} = R_f$ and $P_{k,f} = P_f$, we have $\theta \geq \check{\theta}$.
Note that the per-stage cost of the unicast MDP is additively separable and the action space $\check{\mathbf{U}}(\mathbf{X})$ can be decoupled into $K$ local action spaces, i.e., $\check{\mathbf{U}}(\textbf{X}) = \prod_{k\in \mathcal{K}} \check{\mathbf{U}}_k(X_k)$.
Hence, $(\check{\theta},\check{V}(\mathbf{X}))$ of the unicast MDP can be expressed as $\check{V}(\mathbf{X})=\sum_{k\in\mathcal{K}} \check{V}_k(X_k)$ and $\check{\theta} = \sum_{k\in \mathcal{K}} \check{\theta}_k$, respectively, where $(\check{\theta}_k,\check{V}_k(X_k))$ satisfy:
 \begin{align}\label{lower bell per user1}
&\check{\theta}_k\! +\! \check{V}_k(X_k)\!=\!\min \limits_{(\textbf{P}_k,\Delta \textbf{S}_k )\in \check{U}_k(X_k)}\!\Big\{\!\frac{1}{K}\phi\!\big(\sum_{f\in \mathcal{F}} (\!R_{k,f}\!+\!P_{k,f})\!\big)\!+\!\sum_{A_k'\in \bar{\mathcal{F}}}q^{(k)}_{A_k,A_k'}\check{V}_k(A_k',\textbf{S}_k\!\! +\! \Delta\! \textbf{S}_k)\!\Big\},\ \forall\! X_k.
%& \hspace{127mm} X_k \in \bar{\mathcal{F}}\times \check{\mathcal{S}}.%\nonumber\\ & %\hspace{120mm}X_k \in \bar{\mathcal{F}} \times \check{\mathcal{S}}.
\end{align}
The Bellman equation in (\ref{lower bell per user1}) corresponds to a per-user MDP for user $k$. $\check{\theta}_k$ and $\check{V}_k(X_k)$ denote the per-user average cost and value function for user $k$, respectively. Specifically, for the per-user MDP of user $k$, at each time slot $t$, $X_k(t)=(A_k(t),\mathbf{S}_k(t))$ denotes the system state; $\mathbf{R}_k(t)$ and $\mathbf{P}_k(t)$ denote the reactive transmission action and pushing action, respectively; $\Delta \mathbf{S}_k(t)\in U_{\Delta S,k}(X_k(t), \mathbf{R}_k(t)+\mathbf{P}_k(t))$ denotes the caching action; the demand state $A_k(t)$ evolves according to the Markov chain $\{A_k(t)\}$ and the cache state $\mathbf{S}_k(t)$ evolves according to $\mathbf{S}_k(t+1) = \mathbf{S}_k(t)+ \Delta \mathbf{S}_k(t)$; $\frac{1}{K}\phi\big(\sum_{f\in \mathcal{F}} (R_{k,f}(t)+P_{k,f}(t))\big)$ denotes its per-stage cost.

Next, we establish $K$ per-user per-file MDPs via omitting the coupling among the cache units for each user of the $K$ per-user MDPs. Specifically, for each per-user MDP, considering that at each time slot, the transmission cost is either $\phi(1)$ or $0$, there is no need to push and hence each user only has to decide whether to cache the received (i.e., requested) file and evict the cached file in its storage or not. Given user state $X_k = (A_k,\mathbf{S}_k)$, we have $\Delta S_{k,f} \in \{0,-\textbf{1}(A_k \notin \{0,f\})\}$ for all $f\in \{f\in \mathcal{F}:S_{k,f}=1\}$ and $\Delta S_{k,f} = 0$ for all $f\in \{f\in \mathcal{F}:S_{k,f}=0\}$.
By omitting the constraint $\sum_{f\in \mathcal{F}} S_{k,f}+\Delta S_{k,f} = 0$ and treating each cache unit independently, we relax $U_{\Delta S,k}(X_k,\mathbf{R}\!+\!\mathbf{P})$ into $U_{\Delta S,k}'(X_k,\mathbf{R}\!+\!\mathbf{P}) \triangleq \prod_{f_k\in \mathcal{F}:S_{k,f_k}=1}U_k^1 (X_{k}^1)$, where $X_{k}^1\! \triangleq\! (A_k,f_k)$ denotes the per-user per-file state and $U_k^1(X_k^1) \triangleq \{0,-\textbf{1}(A_k \notin \{0,f_k\})\}$. %Furthermore, consider that $\frac{1}{K}\phi\big(\sum_{f_k\in \mathcal{F}} R_{k,f_k}\big)= \frac{1}{K}\phi\big(1-S_{k,A_k}\big)\textbf{1}(A_k \neq 0)=\frac{1}{K}\sum_{f_k\in \mathcal{F}:S_{k,f_k} = 1}\big(\frac{\phi(1)}{C}-\phi\big(\textbf{1}(A_k=f_k)\big)\big)\textbf{1}(A_k \neq 0)$, denoted as $\phi'(X_{k}^1)$ for notational simplicity. %Hence, the value function \{$\check{V}(\mathbf{X})$\} in (\ref{lower bell per user1}) can be expressed as $\check{V}(\mathbf{X})=\sum_{k\in\mathcal{K}} \check{V}_k(X_k)$, where \{$\check{V}_k(X_k)$\} satisfies:
Similarly, by Proposition $4.2.2$ in \cite{bertsekas} and the proof of Lemma~\ref{Bellman equation}, there exist $(\check{\theta}_k',\check{V}_k'(\cdot))$ satisfying:
\begin{align}\label{second stage Bellman equation1}
&\!\check{\theta}_k'\! +\! \check{V}_k'(A_k, \textbf{S}_k)= \frac{1}{K}\phi\big(\sum_{f\in \mathcal{F}} R_{k,f}\big)\!+\! \min \limits_{\Delta \textbf{S}_k\in \check{U}'_{\Delta S,k}(X_k)}\!\sum_{A_k'\in \bar{\mathcal{F}}}q^{(k)}_{A_k,A_k'}\check{V}'_k(A_k',\textbf{S}_k + \Delta \textbf{S}_k),\ \  \forall X_k,
\end{align}
where $R_{k,f}$ is given by (\ref{reactive2}). For any $X_k \in \bar{\mathcal{F}} \times \check{\mathcal{S}}$, based on the fact that $\check{U}_{\Delta S,k}(X_k) \subseteq \check{U}'_{\Delta S,k}(X_k)$, the optimal policy for the original per-user MDP, denoted as $\check{\mu}_k^*$, is feasible to the relaxed per-user MDP. Denote with $\check{\theta}_k'(\mu)$ the average cost of the relaxed per-user MDP under policy $\mu$ and then we have $\check{\theta}_k'\leq \check{\theta}_k'(\check{\mu}_k^*)=\check{\theta}_k$. Note that for per-user state $(A_k,\textbf{S}_k)$, $\frac{1}{K}\phi\big(\sum_{f\in \mathcal{F}} R_{k,f}\big)= \frac{1}{K}\phi\big(1-S_{k,A_k}\big)\textbf{1}(A_k \neq 0)= \sum_{f_k\in \mathcal{F}:S_{k,f_k}=1} \phi'(X_{k}^1)$, where $\phi'(X_{k}^1) \triangleq \frac{1}{K}\big(\frac{\phi(1)}{C}-\phi\big(\textbf{1}(A_k=f_k)\big)\big)\textbf{1}(A_k \neq 0)$,
%$\frac{1}{K}\sum_{f_k\in \mathcal{F}:S_{k,f_k} = 1}\big(\frac{\phi(1)}{C}-\phi\big(\textbf{1}(A_k=f_k)\big)\big)\textbf{1}(A_k \neq 0)$,
%denoted as $\sum_{f_k\in \mathcal{F}:S_{k,f_k}=1}\nonumber\\ \phi'(X_{k}^1)$ with $\phi'(X_{k}^1) \triangleq \frac{1}{K}\big(\frac{\phi(1)}{C}-\phi\big(\textbf{1}(A_k=f_k)\big)\big)\textbf{1}(A_k \neq 0)$ for notational simplicity,
and $\check{\textbf{U}}'_{\Delta S,k}(X_k) = \prod_{f_k\in \mathcal{F}:S_{k,f_k}=1}U_k^1(X_k^1)$. Thus, we have that
$\check{V}_k'(A_k,\textbf{S}_k)$ and $\check{\theta}_k'$ in (\ref{second stage Bellman equation1}) can be expressed as $\check{V}_k'(A_k,\textbf{S}_k)=\sum_{f\in \mathcal{F}:S_{k,f}=1} \nonumber \check{V}^1_k(A_k,f)$ and $\check{\theta}_k' = C\theta_k^1$, respectively, where for all $k\in \mathcal{K},\ X_k^1\in \bar{\mathcal F}\times\mathcal F, \ (\theta_k^1,\check{V}^1_k(X_k^1))$ satisfy (\ref{per user per content11}). Here, (\ref{per user per content11}) corresponds to the Bellman equation of the aforementioned per-user per-file MDP for user $k$ with unit cache size and $\theta_k^1$ and $\check{V}^1_k(X_k^1))$ represent the per-user per-file average cost and value function, respectively. %which corresponds to the Bellman equation of the aforementioned per-user per-file MDP, where each user has unit size.
%\begin{align}\label{per user per content1}
%&\!\theta_k^1 + \check{V}^1_k(X_k^1)= \phi'(X_k^1)+ \min \limits_{\Delta S_k^1 \in U_k^1(X_k^1)}\sum_{A_k'\in \bar{\mathcal{F}}} q^{(k)}_{A_k,A_k'}\check{V}^1_k(A_k',f_k'),\ X_k^1 \in \bar{\mathcal{F}}\times \mathcal{F}.%, \ k \in \mathcal{K},
%\end{align}
% \begin{align}\label{per user per content11}
%&\theta_k^1 + \check{V}^1_k(i,f)= \check{\phi}((i,f))\!+\!\min \limits_{\Delta S_{k,f} \in \check{U}'_{\Delta S,k,f}((i,f))}\sum_{j\in \bar{\mathcal{F}}} q^{(k)}_{i,j}\check{V}_k^1(j,f'),%, \ k \in \mathcal{K},
%\end{align}
 %We have $\check{\theta}_k' = C\theta_k^1$. %In the sequel,we denote with $\phi'(X_k^1)$ as per-stage cost on per-user-file level, i.e., $\frac{1}{K}\bigg(\frac{\phi(1)}{C}-\phi\big(\textbf{1}(i=f)\big)\bigg)\textbf{1}(i \neq 0)$, for denotational simplicity.
%In this way, we approximate $\check{V}_k(X_k)$ in (\ref{lower bell per user1}) with $\check{V}'_k(X_k)$ in the second-stage:
%\begin{equation}
%\check{V}_k(X_k)\approx \check{V}_k'(A_k,\textbf{S}_k)=\sum_{f\in \mathcal{F}:S_{k,f}=1} \check{V}^1_k(A_k,f),
%\end{equation}
%where $\check{V}^1_k(X_k^1)$ is given by (\ref{per user per content11}).
%\begin{lemma}[Performance Analysis]\label{Performance analysis21}
%$\check{\theta}_k \geq C\theta_k^1$, where $\check{\theta}_k$ is the per-user average cost and $\theta_k^1$ is the per-user-file average cost.
%\end{lemma}
%\begin{proof}
 Since $C\sum_{k\in \mathcal{K}}\theta_k^1 = \sum_{k\in \mathcal{K}}\check{\theta}_k' \leq \sum_{k\in\mathcal{K}} \check{\theta}_k \leq \theta(C)$, we have $\theta(C)\geq C\sum_{k\in \mathcal{K}}\theta_k^1$. The proof ends.

\end{document}